\DeclareFontFamily{U}{mathb}{\hyphenchar\font45}
\DeclareFontShape{U}{mathb}{m}{n}{
      <5> <6> <7> <8> <9> <10> gen * mathb
      <10.95> mathb10 <12> <14.4> <17.28> <20.74> <24.88> mathb12
      }{}
\DeclareSymbolFont{mathb}{U}{mathb}{m}{n}
\DeclareMathSymbol{\sqsubsetneq}{3}{mathb}{"88}
\tikzset{
  showback/.style={
    background rectangle/.style={
      draw=black!15,
      fill=blue!2,
      rounded corners=1ex}
  }
}
\newcommand{\cons}{\ensuremath{\smallfrown}}
\newcommand{\setcons}[1]{\ensuremath{\smash{\raisebox{0.25em}{$\smallfrown$}} {#1}}}
\newcommand{\ac}{\ensuremath{\nearrow}}
\newcommand{\scauses}[1]{\lfloor #1 )}
\newcommand{\causes}[1]{\lfloor #1 \rfloor}
\newcommand{\consequences}[1]{\lceil #1 \rceil}
\newcommand{\cset}[1]{\mathit{conc}({#1})}
\newcommand{\cat}[1]{\ensuremath{\mathbf{#1}}}
\newcommand{\hist}[2]{{#1}[{#2}]}
\newcommand{\Hist}[1]{\ensuremath{\mathit{Hist}({#1})}}
\newcommand{\hset}[1]{\ensuremath{\mathit{hs}({#1})}}
\newcommand{\flt}[1]{\ensuremath{\mathit{fl}({#1})}}
\newcommand{\eqclass}[2]{\ensuremath{[{#1}]_{\scriptscriptstyle {#2}}}}
\newcommand{\quotient}[2]{\ensuremath{{#1}_{/\scriptscriptstyle {#2}}}}
\newcommand{\foldeq}[1]{\ensuremath{\mathit{FEq}({#1})}}
\newcommand{\conf}[1]{\ensuremath{\mathit{Conf}({#1})}}
\newcommand{\trans}[1]{\xrightarrow{#1}}
\newcommand{\prc}{\ensuremath{\mathrel{\raisebox{1.5pt}{$\curvearrowright$}}}}
\newcommand{\aes}{\textsc{aes}}
\newcommand{\fes}{\textsc{fes}}
\newcommand{\pes}{\textsc{pes}}
\newcommand{\pr}{\ensuremath{\mathbb{P}}}
\newcommand{\ES}[1][E]{\ensuremath{\mathsf{#1}}}
\newcommand{\Powfin}[1]{\ensuremath{\mathbf{2}_\mathit{fin}^{#1}}}
\title{Minimisation of Event Structures}
\author{Paolo Baldan}{
  University of Padova, Italy
}
{baldan@math.unipd.it}{https://orcid.org/0000-0001-9357-5599}{}%
\author{Alessandra Raffaet\`a%
}{University Ca' Foscari of Venice, Italy}{raffaeta@unive.it}{[orcid]}{}%
\authorrunning{P. Baldan and A.Raffaet\`a}%
\keywords{Event structures, minimisation, history-preserving bisimilarity, behaviour preserving quotient}%
\newtheorem{atheorem}{Theorem}[section]
\newtheorem{alemma}[atheorem]{Lemma}
\newtheorem{adefinition}[atheorem]{Definition}
\newtheorem{aproposition}[atheorem]{Proposition}
\begin{document}

\maketitle

\begin{abstract}
  Event structures are fundamental models in concurrency theory,
  providing a representation of events in computation and of their
  relations, notably
  concurrency, conflict and causality.
  In this paper we present a theory of minimisation for event
  structures. Working in a class of event structures that generalises many
  stable event structure models in the literature (e.g., prime,
  asymmetric, flow and bundle event structures), we study a notion of
  behaviour-preserving quotient, referred to as a folding, taking
  (hereditary) history preserving bisimilarity as a reference
  behavioural equivalence.
  We show that for any event structure a folding producing a uniquely
  determined minimal quotient always exists.  We observe that each
  event structure can be seen as the folding of a prime event
  structure, and that all foldings between general event structures
  arise from foldings of (suitably defined) corresponding prime event
  structures. This gives a special relevance to foldings in the class
  of prime event structures, which are studied in detail. We identify
  folding conditions for prime and asymmetric event structures, and show
  that also prime event structures always admit a unique minimal
  quotient (while this is not the case for various other event
  structure models).
\end{abstract}

\section{Introduction}

When dealing with formal models of computational systems, a classical
problem is that of minimisation, i.e., for a given system, define and
possibly construct a compact version of the system which, very roughly
speaking, exhibits the same behaviour as the original one, avoiding
unnecessary duplications.
The minimisation procedure depends on the notion of behaviour of
interest and also on the expressive power of the formalism at hand,
which determines its capability of describing succinctly some
behaviour.
One of the most classical examples is that of finite state automata,
where one is typically interested in the accepted language. Given a
deterministic finite state automaton, a uniquely determined minimal
automaton accepting the same language can be constructed, e.g., as a
quotient of the original automaton via a partition/refinement
algorithm (see, e.g.,~\cite{HMU:IAT}). Moving to non-deterministic
finite automata,
minimal automata become smaller, at the price of a
computationally more expensive minimisation procedure and
non-uniqueness of the minimal automaton~\cite{MS:EPRE}.

In this paper we study the problem of minimisation for event
structures, a fundamental model in concurrency theory~\cite{Win:ES,Win:ECS}.
Event structures are a natural semantic model when
one is interested in modelling the dynamics of a system 
by providing an explicit representation of the events in computations (occurrence of atomic actions) and of the
relations between events, like causal dependencies, choices,
possibility of parallel execution, i.e., in what is referred to as a true concurrent (non-interleaving) semantics.
Prime event
structures~\cite{NPW:PNES}, probably the most widely used event
structure model, capture dependencies between events in terms of
causality and conflict. A number of variations of prime event structures have been introduced in the literature. In this paper we will deal with asymmetric event
structures~\cite{BCM:CNAED}, which generalise prime event structures
with an asymmetric form of conflict which allows one to model
concurrent readings and precedences between actions, and
flow~\cite{BC:SCCS,Bou:FESFN} and bundle~\cite{Lan92:BES} event structures,
which add the possibility of directly modelling disjunctive causes.
Event structures have been used for defining a concurrent semantics of
several formalisms, like Petri nets~\cite{NPW:PNES}, graph rewriting
systems~\cite{Handbook,Bal:PhD,Sch:RRSG} and process
calculi~\cite{Win:ESSCCS,VY:TESLP,BMM:ESSNC}.  Recent applications are
in the field of weak memory models~\cite{PS:CSRA,JR:OTRES,CV:GTARES}
and of process mining and differencing~\cite{DG:PMRES}.

Behavioural equivalences, defined in a true concurrent setting, take into
account not only the possibility of performing steps, but also the way
in which such steps relate with each other.
We will focus on hereditary history preserving
(hhp-)bisimilarity~\cite{Ber:HHP}, the finest equivalence in the true
concurrent spectrum in~\cite{vGG:ENCC-AI}, which, via the concept of
open map, has been shown to arise as a canonical behavioural
equivalence when considering partially ordered computations as
observations~\cite{JNW:BFOM}.

The motivation for the present paper originally stems from some work
on the analysis and comparison of business process models. The idea,
advocated in~\cite{DG:PMRES,ABDG:DBDBPMES}, is to use
event structures as a foundation for representing, analysing and
comparing process models. The processes, in their graphical
presentation, should be understandable, as much as possible, by a
human user, who should be able, e.g., to interpret the differences
between two processes diagnosed by a comparison tool.
For this aim it is important to avoid ``redundancies'' in the
representation and thus to reduce the number of events, but clearly
without altering the behaviour.  The paper~\cite{ABG:RESHPB} explores
the use
of asymmetric and flow event structures and, for such models, it introduces some ad-hoc reduction techniques that
allow one to merge sets of events without changing the true concurrent
behaviour.
A general notion of behaviour preserving quotient, referred to as a folding, is introduced over an abstract class of event structures, having asymmetric and flow event structures as subclasses. However, no general theory  is developed. The paper focuses on a special class of foldings, the so-called elementary foldings, which can only merge a single set of events into one event, and these are studied separately on each specific subclass of event structures (asymmetric and flow event structures), providing only sufficient conditions ensuring that a function is a folding.

A general theory of behaviour preserving quotients for event structures is  thus called for, settling some natural foundational questions. Is the notion of folding adequate, i.e., are all behaviour preserving quotients expressible in terms of foldings? Is there a minimal quotient in some suitably defined general class of event
structures? What happens in specific subclasses? (for asymmetric and flow event structures the answer is known to be negative, but for prime event structures the question is open). Working in the specific subclasses of  event structures, can we have a characterisation of general foldings, providing not only sufficient but also necessary conditions?

In this paper we address the above questions. We work in a general class of event structures based on
the idea of family of posets in~\cite{Ren:PC},
sufficiently expressive to generalise most
stable event structures models in the literature, including prime~\cite{NPW:PNES}, asymmetric~\cite{BCM:CNAED}, flow~\cite{BC:SCCS} and bundle~\cite{Lan92:BES} event structures.

As a first step we study, in this general setting, the notion of
folding, i.e., of behaviour preserving quotient.
A folding is a surjective
function that identifies some events while keeping the behaviour
unchanged.  Formally, it establishes a hhp-bisimilarity between the
source and target event structure.
Foldings can be characterised as open maps in the sense
of~\cite{JNW:BFOM}.
Actually, it turns out that not all behaviour preserving quotients
arise as a folding, but we show that for any behaviour preserving
quotient, there is a folding that induces a coarser equivalence, in a
way that foldings properly capture all possible behaviour preserving
quotients.
Additionally, given two possible foldings of an event structure we
show that it is always possible to ``join'' them. This allows to prove
that for each event structure a maximally folded version, namely a
uniquely determined minimal quotient always exists.

Relying on the order-theoretic properties of the set of
configurations of event structures~\cite{Ren:PC}, and on the
correspondence between prime event structures and
domains~\cite{NPW:PNES}, we derive that each event structure in the
considered class arises as the folding of a canonical prime event
structure.
Moreover, all
foldings between general event structures arise from foldings of the
corresponding canonical prime event structures.
Interestingly, this result can be derived from the
characterisation of folding morphisms as open maps.

The results above give a special relevance to foldings in the class of
prime event structures, which thus are studied in detail. We provide
necessary and sufficient conditions characterising foldings for prime event structures.
This allows establish a clear connection with the so-called
abstraction morphisms, introduced in~\cite{Cas:phd} for similar purposes.
This characterisation of foldings can guide, at least in the case of finite
structures, the construction of behaviour preserving
quotients. Moreover we show that also prime event structures always
admit a minimal quotient.

The fact that all event structures arise as foldings of prime event
structures allows one to think of various brands of event structures
in the literature, like asymmetric, flow, bundle event structures as
more expressive models that allow for smaller
realisations of a given behaviour, i.e., of smaller quotients. For all
these classes, however, the uniqueness of the minimal
quotient is lost.
Despite the fact that foldings on wider classes of event structures
can be studied on the corresponding canonical prime event structures,
a direct approach can be theoretically interesting and it can lead
more efficient minimisation procedures. In this paper, a characterisation of foldings is explicitly devised for asymmetric event structures.

Most results have a natural categorical interpretation, which is only
hinted at in the paper. In order to keep the presentation simple, the
categorical references are inserted in side remarks that can be safely
skipped by the non-interested reader. This applies, in particular, to
the possibility of viewing foldings as open maps in the sense
of~\cite{JNW:BFOM}.
This correspondence, which in the present paper only surfaces,
suggests the possibility of understanding and developing our results
in a more abstract categorical setting.
More details about this are provided in the appendices.

The rest of the paper is structured as follows. In
\S~\ref{se:preliminaries} we introduce the class of event structures
we work with, hereditary history preserving bisimilarity and we discuss
how various event structure models in the literature embed into the
considered class.
In \S~\ref{se:folding} we introduce and study the notion of folding, we prove the  existence of a minimal quotient and we show the tight relation between general foldings and those on prime event structures.
In \S~\ref{se:folding-criteria} we present folding criteria on prime and asymmetric event structures, and discuss the existence of minimal quotients.
Finally, in \S~\ref{se:conclusions} we draw some conclusions, discuss
connections with related literature and outline future work venues.
An appendix contains all proofs and some additional technical results.

\section{Event Structures and History Preserving Bisimilarity}
\label{se:preliminaries}

In this section we define \emph{hereditary history-preserving
  bisimilarity}, the reference behavioural equivalence in the
paper. This is done for an abstract notion of event structure,
introduced in~\cite{Ren:PC}, in a way that various stable event
structure models in the literature can be seen as special
subclasses. We will explicitly discuss prime~\cite{NPW:PNES},
asymmetric~\cite{BCM:CNAED}, flow~\cite{BC:SCCS,Bou:FESFN} and
bundle~\cite{Lan92:BES} event structures.

\subparagraph{Notation.}  We first fix some basic notation on sets,
relations and functions. Let $\mathrel{r} \subseteq X \times X$ be a
binary relation.
Given $Y, Z \subseteq X$,  we write
$Y \mathrel{r}^\forall Z$ (resp. $Y \mathrel{r}^\exists Z$) if for all
(resp. for some) $y \in Y$ and $z \in Z$ it holds $y \mathrel{r} z$. When $Y$ or $Z$ are singletons, sometimes we replace them by their only element, writing, e.g., $y \mathrel{r}^\exists Z$ for $\{y\} \mathrel{r}^\exists Z$.
The
relation $r$ is \emph{acyclic} on $Y$ if
there is no $\{ y_0, y_1, \ldots, y_n \} \subseteq Y$ such that $y_0 \mathrel{r}  y_1 \mathrel{r} \ldots r\ y_n \mathrel{r}  y_0$.
Relation $r$ is a \emph{partial order} if it is reflexive,
antisymmetric and transitive.
Given a function
$f: X \to Y$ we will denote by
$f[x \mapsto y] : X \cup \{ x \} \to Y \cup \{ y \}$ the function
defined by $f[x \mapsto y] (x) = y$ and $f[x \mapsto y] (z) = f(z)$
for $z \in X \setminus \{ x \}$.
Note that the same notation can represent an update of $f$, when $x
\in X$, or an extension of its domain, otherwise. For $Z \subseteq X$, we denote by $f_{|Z} : Z \to Y$ the restriction of $f$ to $Z$.

\subsection{Event Structures}

Following~\cite{Ren:PC,vGla:HPPG,GP:CSESPN,ABG:RESHPB}, we work on a class of
event structures where configurations are given as a primitive
notion. More precisely, we borrow the idea of family of posets
from~\cite{Ren:PC}. 

\begin{definition}[family of posets]
  \label{de:poset}
  A \emph{poset} is a pair $(C, \leq_C)$ where $C$ is a set and
  $\leq_C$ is a partial order on $C$. A poset will be often denoted
  simply as $C$, leaving the partial order relation $\leq_C$ implicit.
  Given two posets $C_1$ and $C_2$ we say that $C_1$ is a
  \emph{prefix} of $C_2$ and write $C_1 \sqsubseteq C_2$ if
  $C_1 \subseteq C_2$ and
  for all $x_1 \in C_1$, $x_2 \in C_2$, if $x_2 \leq_{C_2} x_1$ then
  $x_2 \in C_1$ and $x_2 \leq_{C_1} x_1$.
  A \emph{family of posets} $F$ is a prefix-closed set of finite
  posets i.e., a set of finite posets such that if $C_2 \in F$ and
  $C_1 \sqsubseteq C_2$ then $C_1 \in F$.  We say that two posets
  $C_1, C_2 \in F$ are compatible, written $C_1 \cons C_2$, if they
  have an upper bound, i.e., there is $C \in F$ such that
  $C_1, C_2 \sqsubseteq C$. The family of posets $F$ is called
  \emph{coherent} if each subset of $F$ whose elements are pairwise
  compatible has an upper bound.
\end{definition}

Posets $C$ will be used to represent configurations, i.e., sets of
events executed in a computation of an event structure. The order
$\leq_C$ intuitively represents the order in which the events in $C$
can occur. This motivates the prefix order that can be read as a
computational extension: when $C_1 \sqsubseteq C_2$ we have that
$C_1 \subseteq C_2$, with events in $C_1$ ordered exactly as in $C_2$,
and the new events in $C_2 \setminus C_1$ cannot precede events
already in $C_1$. An example of family of posets can be found
in~Fig.~\ref{fi:es} (left). Observe, for instance, that the
configuration with set of events $\{ c \}$ is not a prefix of the one
with set of events $\{a,c\}$, since in the latter
$a \leq c$.

\begin{figure}
  \hfill
\begin{tikzpicture}[x=22mm,y=9mm, show background rectangle, showback]
  \node at (1,3) (empty) {
    \begin{tikzcd}[poset]
      \ 
    \end{tikzcd}
  };
  
  \node at (0,2.65) (a) {
    \begin{tikzcd}[poset]
      a
    \end{tikzcd}
  };
  
  \node at (1,2.1) (c) {
    \begin{tikzcd}[poset]
      c
    \end{tikzcd}
  };

  \node at (2,2.65) (b) {
    \begin{tikzcd}[poset]
      b
    \end{tikzcd}
  };

  \node at (0,1.3) (ac) {
    \begin{tikzcd}[poset, sep=2mm]
      a \arrow[d]\\
      c
    \end{tikzcd}
  };

  \node at (1,1.2) (ab) {
    \begin{tikzcd}[poset, sep=2mm]
      a &  b      
    \end{tikzcd}
  };

  \node at (2,1.3) (bc) {
    \begin{tikzcd}[poset, sep=2mm]
      b \arrow[d]\\
      c
    \end{tikzcd}
  };
  
  \draw (empty) -- (a);
  \draw (empty) -- (b);
  \draw (empty) -- (c);
  \draw (a)     -- (ab);
  \draw (b)     -- (ab);
  \draw (a)     -- (ac);
  \draw (b)     -- (bc);
\end{tikzpicture}
\hfill
\begin{tikzpicture}[x=22mm,y=13mm, show background rectangle, showback]
  \node at (0,2) (a) {
    \begin{tikzcd}[poset]
      a
    \end{tikzcd}
  };
  
  \node at (1,2) (c) {
    \begin{tikzcd}[poset]
      c
    \end{tikzcd}
  };

  \node at (2,2) (b) {
    \begin{tikzcd}[poset]
      b
    \end{tikzcd}
  };

  \node at (0,1) (ac) {
    \begin{tikzcd}[poset, sep=2mm]
      a \arrow[d]\\
      c
    \end{tikzcd}
  };

  \node at (2,1) (bc) {
    \begin{tikzcd}[poset, sep=2mm]
      b \arrow[d]\\
      c
    \end{tikzcd}
  };
  
  \draw [->]     (a)     -- (ac);
  \draw [->]     (b)     -- (bc);
  \draw [dotted] (a)     -- (c);
  \draw [dotted] (b)     -- (c);
  \draw [dotted] (a)     -- (bc);
  \draw [dotted] (b)     -- (ac);
\end{tikzpicture}
\hfill\mbox{}

\caption{An event structure $\ES$ and the canonical {\pes} $\pr(\ES)$}
\label{fi:es}
\end{figure}

An event structure is then defined simply as a coherent family of
posets where events carry a label. Hereafter $\Lambda$ denotes a fixed
set of labels.

\begin{definition}[event structure]
  \label{de:event-structure}
  A \emph{(poset) event structure} is a tuple
  $\ES = \langle E, \conf{\ES}, \lambda \rangle$ where $E$ is a set of
  events, $\conf{\ES}$ is a coherent family of posets such that
  $E = \bigcup \conf{\ES}$ and $\lambda : E \to \Lambda$ is a
  labelling function.
  For a configuration $C \in \conf{\ES}$ the order $\leq_C$ is
  referred to as the \emph{local order}.
\end{definition}
In~\cite{ABG:RESHPB} abstract event structures are defined as a collection of ordered configurations, without any further constraint. This is sufficient for giving some general definitions which are then studied in specific subclasses of event structures. Here, in order to develop a theory of foldings at the level of general event structures, we need to assume stronger properties, those of a family of posets from~\cite{Ren:PC} (e.g, the fact that Definition~\ref{de:pes-es} is well-given relies on this).
This motivates the name poset event structure.
Also note that, differently from what happens in other general concurrency models, like configuration structures~\cite{GP:CSESPN}, configurations are endowed explicitly with a partial order, which in turn intervenes in the definition of the prefix order between configurations. This will be essential to view asymmetric or flow event structures as subclasses.
Since we only deal with poset event structures and
their subclasses, we will often omit the qualification ``poset'' and refer
to them just as event structures.
Moreover, we will often identify an event structure $\ES$ with the
underlying set $E$ of events and write, e.g., $x \in \ES$ for
$x \in E$.

An \emph{isomorphism of configurations} $f : C \to C'$ is an
isomorphism of posets that respects the labelling, namely for all
$x, y \in C$, we have $\lambda(x) = \lambda(f(x))$ and $x \leq_{C} y$
iff $f(x) \leq_{C'} f(y)$. When configurations $C, C'$ are isomorphic
we write $C \simeq C'$.
As mentioned above, the prefix order on configurations can be
interpreted as computational extension. This will be later formalised
by a notion of transition system over the set of configurations (see~\cref{de:transition}).

Given an event $x$ in a configuration $C$ it will be useful to refer
to the prefix of $C$ including only those events that
necessarily precede $x$ in $C$ (and $x$ itself). This motivates
  the following definition.

\begin{definition}[history]
  Let $\ES$ be an event structure, let $C \in \conf{\ES}$ and let
  $x \in C$. The history of $x$ in $C$ is defined as the set
  $\hist{C}{x} = \{ y \in C \mid y \leq_C x \}$ endowed with the
  restriction of $\leq_C$ to $\hist{C}{x}$, i.e.,
  $\leq_{\hist{C}{x}} = \leq_C \cap (\hist{C}{x} \times
  \hist{C}{x})$. The set of histories in $\ES$ is
  $\Hist{\ES} = \{ \hist{C}{x} \mid C \in \conf{\ES}\ \land\ x\in C
  \}$. The set of histories of a specific event $x \in \ES$ will be
  denoted by $\Hist{x}$.
\end{definition}

\subsection{Hereditary History Preserving Bisimilarity}

In order to define history preserving bisimilarity, it is
convenient to have an explicit representation of the transitions
between configurations.

\begin{definition}[transition system]
  \label{de:transition}
  Let $\ES$ be an event structure. If $C, C' \in \conf{\ES}$
  with $C \sqsubseteq C'$ we write $C \trans{X} C'$ where
  $X = C' \setminus C$.
\end{definition}
When $X$ is a singleton, i.e., $X = \{ x \}$, we will often write
$C \trans{x} C'$ instead of $C \trans{\{x\}} C'$.
It is easy to see that in an event structure each configuration is reachable in the transition system from the empty one.

As it happens in the interleaving approach, a bisimulation between two
event structures requires any event of an event structure to be
simulated by an event of the other, with the same label.
Additionally, the two events are required to have the same ``causal
history''.

\begin{definition}[(hereditary) history preserving bisimilarity]
  \label{de:hp-bisim}
  Let $\ES$, $\ES'$ be event structures. A \emph{history preserving
    (hp-)bisimulation} is a set $R$ of triples $(C, f, C')$, where
  $C \in \conf{\ES}$, $C' \in \conf{\ES'}$ and $f : C \to C'$ is an
  isomorphism of configurations, such that
  $(\emptyset,\emptyset,\emptyset) \in R$ and for all
  $(C_1, f, C_1')\in R$
  \begin{enumerate}
  \item for all $C_1 \trans{x} C_2$ there exists some
    $C_1' \trans{x'} C_2'$ such that
    $(C_2, f[x \mapsto x'], C_2') \in R$;
      
  \item for all $C_1' \trans{x'} C_2'$ there exists some
    $C_1 \trans{x} C_2$ such that
    $(C_2, f[x \mapsto x'], C_2') \in R$.
  \end{enumerate}
  Relation $R$ is called a \emph{hereditary history preserving
    (hhp-)bisimulation} if, in addition, it is downward closed, i.e., if $(C_1, f, C_1')\in R$ and $C_2 \subseteq C_1$ then $(C_2, f_{|C_2}, f(C_2)) \in R$.
\end{definition}
Observe that, in the definition above, an event must be simulated by
an event with the same label. In fact, in the triple
$(C \cup \{x\}, f[x \mapsto x'], C' \cup \{x'\}) \in R$, the second
component $f[x \mapsto x']$ must be an isomorphism of configurations,
i.e., of labelled posets, and thus it preserves labels.
Hhp-bisimilarity has been shown to arise as a canonical behavioural
equivalence on prime event structures, as an instance of a general notion
defined in terms of the concept of open map, when considering
partially ordered computations as observations~\cite{JNW:BFOM}.

\subsection{Examples: Prime, Asymmetric, Flow and  Bundle Event Structures}

We next observe how different kinds of event structures, introduced for
various purposes in the literature, can be naturally viewed as subclasses of the
poset event structures in \cref{de:event-structure}.
Verifying that the corresponding families of configurations satisfy the properties of Definition~\ref{de:event-structure} is straightforward.
This section is mainly intended to
provide material for examples and discussions. The reader can quickly
browse through it: only the correspondence with prime event
structures will play a major role in the rest of the paper.

\subparagraph{Prime event structures.}

Prime event structures~\cite{NPW:PNES} are one of the simplest and
most popular event structure models, where dependencies between events
are captured in terms of causality and conflict.

\begin{definition}[prime event structure]
  \label{def:pes}
  A \emph{prime event structure} ({\pes}, for short) is a tuple
  $\ES[P] = \langle E, \leq, \#, \lambda \rangle$, where $E$ is a
  set of events, $\leq$ and $\#$ are binary relations on $E$ called
  \emph{causality} and \emph{conflict}, respectively, and
  $\lambda : E \to \Lambda$ is a labelling function, such that
  \begin{itemize}

  \item $\leq$ is a partial order and $\causes{x} = \{y \in E \mid y
    \leq x\}$ is finite for all $x \in E$;

  \item $\#$ is irreflexive, symmetric and hereditary with respect to
    causality, i.e., for all $x,y,z \in E$, if $x \# y$ and $y \leq z$
    then $x \# z$.
  \end{itemize}
\end{definition}

Configurations are sets of events without conflicts and closed with respect to
causality. For later use, we also introduce a notation for the absence
of conflicts, referred to as consistency.

\begin{definition}[consistency, configuration]
  Let $\ES[P] = \langle E, \leq, \#,\lambda \rangle$ be a {\pes}. We
  say that $x, y \in E$ are \emph{consistent}, written
  $x \cons y$, when $\neg (x \# y)$. A subset $X \subseteq E$ is
  called \emph{consistent}, written $\setcons{X}$, when its elements are
  pairwise consistent. A \emph{configuration} of $\ES[P]$ is a finite
  set of events $C \subseteq E$ such that
  (i) $\setcons{C}$ and
  (ii) for all $x\in C$, $\causes{x}
    \subseteq C$.
\end{definition}

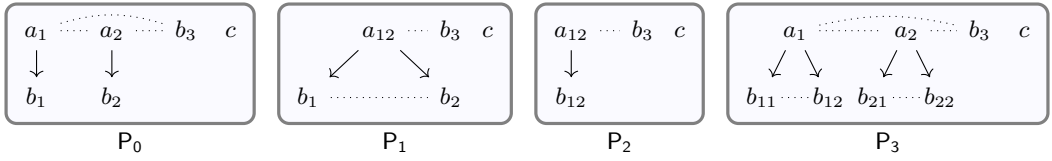
\begin{figure}
  \setlength{\tabcolsep}{4pt}
  \begin{tabular}{cccc}
    \begin{tikzcd}[boxedcd, sep=4mm]
      a_1
      \arrow[d] \arrow[conflict, r]
      \arrow[conflict, bend left=17, rr]
      & a_2 \arrow[d] \arrow[conflict, r] & b_3 &[-3mm] c \\
      b_1           & b_2 
    \end{tikzcd}
    &
    \begin{tikzcd}[boxedcd, row sep=4mm, column sep=3mm]
      & a_{12} \arrow[dl] \arrow[dr]  \arrow[conflict, r] & b_3 &[-3mm] c \\
      b_1 \arrow[conflict, rr]    &     & b_2 
    \end{tikzcd}
    &
    \begin{tikzcd}[boxedcd, row sep=4mm, column sep=3mm]
      a_{12} \arrow[d]  \arrow[conflict, r] & b_3 &[-3mm] c \\
      b_{12}            &  
    \end{tikzcd}
    &
    \begin{tikzcd}[boxedcd, row sep=4mm, column sep=4mm]
      &[-6mm] a_1 \arrow[dr] \arrow[dl] \arrow[conflict, rrr]
      \arrow[conflict, bend left=13, rrrrr]
      &[-6mm] &[-5mm]
      &[-6mm] a_2 \arrow[dr] \arrow[dl]  \arrow[conflict, rr] &[-6mm] &[-5mm]
      b_3 &[-3mm] c\\
      b_{11} \arrow[conflict, shorten <= -3pt, shorten >= -3pt, rr] & & b_{12} &  b_{21} \arrow[conflict, shorten <= -3pt, shorten >= -3pt, rr] & & b_{22} 
    \end{tikzcd}      
    \\
    $\ES[P_0]$ & $\ES[P_1]$ & $\ES[P_2]$ & $\ES[P_3]$ 
  \end{tabular}
  \caption{Some prime event structures}
  \label{fi:pes}
\end{figure}

Some examples of {\pes}s can be found in
Fig.~\ref{fi:pes}. Causality is represented as a solid arrow,
while conflict is represented as a dotted line. For instance, in
$\ES[P_0]$, event $a_1$ is a cause of $b_1$ and it is in conflict both
with $a_2$ and $b_3$. Only direct causalities and
non-inherited conflicts are represented. For instance, in $\ES[P_0]$,
the conflicts $a_1 \# b_2$, $a_2 \# b_1$ and $b_1 \# b_2$ are not represented since they are inherited. The labelling is implicitly represented by naming the events by their label, possibly with some index. For instance, $a_1$ and $a_2$ are events labelled by $a$.

\begin{figure}
\begin{center}
\begin{tikzpicture}[x=16mm,y=17]
  \node at (2,3) (empty) {
    \begin{tikzcd}[poset]
      \ 
    \end{tikzcd}
  };
  
  \node at (1,2) (a12) {
    \begin{tikzcd}[poset]
      a_{12} 
    \end{tikzcd}
  };
  
  \node at (3,2) (c) {
    \begin{tikzcd}[poset]
      c
    \end{tikzcd}
  };

  \node at (5,2) (b3) {
    \begin{tikzcd}[poset]
      b_3
    \end{tikzcd}
  };

  \node at (0,1) (a12b3) {
    \begin{tikzcd}[poset, sep=2mm]
      a_{12} \arrow[d]\\
      b_{12}
    \end{tikzcd}
  };

  \node at (2,1) (a12c) {
    \begin{tikzcd}[poset, sep=2mm]
      a_{12} & c
    \end{tikzcd}
  };

  \node at (4,1) (b3c) {
    \begin{tikzcd}[poset, sep=2mm]
      b_3 & c
    \end{tikzcd}
  };

  \node at (1,0) (a12b3c) {
    \begin{tikzcd}[poset, sep=2mm]
      a_{12} \arrow[d] & c\\
      b_{12}
    \end{tikzcd}
  };
  
  \draw (empty) -- (a12);
  \draw (empty) -- (c);
  \draw (empty) -- (b3);
  \draw (a12)   -- (a12b3);
  \draw (a12)   -- (a12c);
  \draw (c)     -- (b3c);
  \draw (c)     -- (a12c);
  \draw (b3)    -- (b3c);
  \draw (a12b3) -- (a12b3c);
  \draw (a12c)  -- (a12b3c);
\end{tikzpicture}

\end{center}

\caption{The configurations $\conf{\ES[P_2]}$ of the {\pes} $\ES[P_2]$ in Fig.~\ref{fi:pes} viewed as poset event structures}
\label{fi:conf}
\end{figure}
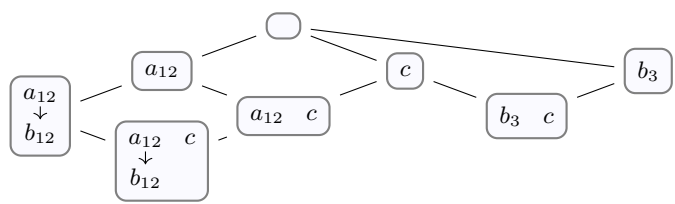

Clearly {\pes}s can be seen as poset event structures. Given a {\pes}
$\ES[P] = \langle E, \leq, \# , \lambda \rangle$ and its set of
configurations $\conf{\ES[P]}$, the local order of a configuration
$C \in \conf{\ES[P]}$ is $\leq_C = \leq \cap (C \times C)$, i.e., the
restriction of the causality relation to $C$. The extension order
turns out to be simply subset inclusion. In fact, given
$C_1 \subseteq C_2$ clearly $\leq_{C_1} = \leq \cap (C_1 \times C_1)$
is the restriction to $C_1$ of
$\leq_{C_2} = \leq \cap (C_2 \times C_2)$.  Moreover, if $x_1 \in C_1$
and $x_2 \in C_2$, with $x_2 \leq_{C_2} x_1$, then necessarily
$x_2 \in C_1$ since configurations are causally closed.
As an example, the {\pes} $\ES[P_2]$ of Fig.~\ref{fi:pes}, viewed as
a poset event structure, can be found in Fig.~\ref{fi:conf}.

\subparagraph{Asymmetric event structures.}

Asymmetric event structures~\cite{BCM:CNAED} are a
generalisation of {\pes} where conflict is allowed to be
non-symmetric.

\begin{definition}[asymmetric event structure]
  \label{de:AES}
  An \emph{asymmetric event structure} ({\aes}, for short) is a tuple
  $\ES[A]= \langle E, \leq, \ac, \lambda \rangle$, where $E$ is a
  set of events, $\leq$ and $\ac$ are binary relations on $E$ called
  \emph{causality} and \emph{asymmetric conflict}, and
  $\lambda : E \to \Lambda$ is a labelling function, such that
  \begin{itemize}
  \item $\leq$ is a partial order and
    $\causes{x} = \{ y \in E \mid y \leq x \}$ is finite for all
    $x \in E$;

  \item $\ac$ satisfies, for all $x, y, z \in E$
    \begin{enumerate}
    \item if $x < y$ then $x \ac y$;
    \item if $x \ac y$ and $y < z$ then $x \ac z$;
    \item $\ac$ is acyclic on $\causes{x}$;
    \item if $\ac$ is cyclic on $\causes{x} \cup \causes{y}$ then $x \ac y$.
    \end{enumerate}
  \end{itemize} 
\end{definition}

In the graphical representation, asymmetric conflict is depicted as
a dotted arrow. For instance, in the asymmetric event structure
$\ES[A_0]$ of Fig.~\ref{fi:aes} we have $a_{12} \ac b_{123}$. Again,
only non inherited asymmetric conflicts are represented.

The asymmetric conflict relation has two natural interpretations,
i.e., $x \ac y$ can be understood as (i) the occurrence of $y$ \emph{prevents} $x$, or (ii) $x$ \emph{precedes} $y$ in all computations
where both appear.
This allows to represent faithfully the existence of precedences between actions and concurrent read accesses to a shared resource (intuitively, while readings can occur concurrently, destructive accesses can follow, but obviously not precede a reading).

The interpretation of asymmetric conflict above should give some
intuition for the conditions in
\cref{de:AES}. Condition (1) naturally arises from
interpretation (ii) above: when $x<y$ clearly $x$ precedes $y$ when
both occur and thus $x \ac y$.
Condition (2) is a form of hereditarity of asymmetric conflict along
causality: if $x \ac y$ and $y < z$ then all runs where $x$
and $z$ appear, necessarily also include $y$, and $x$ precedes $y$
which in turn precedes $z$, hence $x \ac z$.  Concerning
(3) and (4), observe that events forming a cycle of asymmetric
conflict cannot appear in the same run, since each event in the cycle
should occur before itself in the run.
For instance, in the {\aes}
$\ES[A_1]$ of Fig.~\ref{fi:aes}, we have $a_{1} \ac a_2 \ac a_1$, hence $a_1$ and $a_2$ cannot appear in the same computation.
In this view, condition (3) corresponds to irreflexiveness of conflict
in {\pes}s, while condition (4) requires that binary symmetric conflict is
explicitly represented by asymmetric conflict in both directions.
Indeed, prime event structures can be identified with the subclass of {\aes}s where $\ac$ is symmetric.

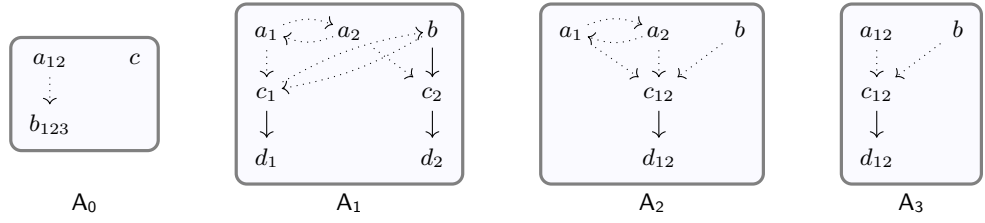
\begin{figure}
  \setlength{\tabcolsep}{14pt}
  \begin{tabular}{cccc}
    \begin{tikzcd}[boxedcd, row sep=4mm, column sep=5mm]
      a_{12} \arrow[ac, d]  & c \\
      b_{123}            &  
    \end{tikzcd}
    &
    \begin{tikzcd}[boxedcd, row sep=4mm, column sep=5mm]
      a_1 \arrow[ac, d]
      \arrow[ac, bend left=20, shorten <= -3pt, shorten >= -3pt, r]     
      & a_{2}
      \arrow[ac, bend left=20, shorten <= -3pt, shorten >= -3pt, l]
      \arrow[dr, ac]
      &  b \ar[d] \arrow[ac, bend left=10, shorten <= -3pt, shorten >= -3pt, dll]
      \\
      c_1 \ar[d]
      \arrow[ac, bend left=10, shorten <= -3pt, shorten >= -3pt, urr]
      & & c_2 \arrow[d]\\
      d_1     &  & d_2
    \end{tikzcd}
    &
    \begin{tikzcd}[boxedcd, row sep=4mm, column sep=5mm]
      a_1 \arrow[ac, dr]
      \arrow[ac, bend left=20, shorten <= -3pt, shorten >= -3pt, r]     
      & a_{2}
      \arrow[ac, bend left=20, shorten <= -3pt, shorten >= -3pt, l]
      \arrow[d, ac]
      &  b \ar[ac, dl]
      \\
      & c_{12} \ar[d]     
      &\\
      & d_{12} &
    \end{tikzcd}
    &
    \begin{tikzcd}[boxedcd, row sep=4mm, column sep=5mm]
      a_{12}
      \arrow[d, ac]
      &  b \ar[ac, dl]
      \\
      c_{12} \ar[d]     
      &\\
      d_{12} &
    \end{tikzcd}
    \\
    $\ES[A_0]$ & $\ES[A_1]$ & $\ES[A_2]$ & $\ES[A_3]$
  \end{tabular}

  \caption{Some asymmetric event structures}
  \label{fi:aes}
\end{figure}

Configurations are again defined as causally closed and
conflict free sets of events.

\begin{definition}[{\aes} configuration]
  Let $\ES[A] = \langle E, \leq, \ac,\lambda \rangle$ be an
  {\aes}. A configuration of $\ES[A]$ is a finite set of events
  $C \subseteq E$ such that\
  (i)~for any $x \in C$, $\causes{x} \subseteq C$ (causally closed)\ \
  (ii)~$\ac$ is acyclic on $C$ (conflict free).
\end{definition}

Also {\aes}s can be  seen as special poset event structures.
Given an {\aes}
$\ES[A] = \langle E, \leq, \ac, \lambda \rangle$ and its set of
configurations $\conf{\ES[A]}$, the local order of a configuration
$C \in \conf{\ES[A]}$ is $\leq_C = (\ac \cap (C \times C))^*$, i.e., the
transitive closure of restriction of the asymmetric conflict to $C$.
The prefix order on configurations is not simply set-inclusion:
since a configuration $C$ cannot be extended with an event which is
prevented by some of the events already present in $C$. Hence for
$C_1, C_2 \in \conf{\ES[A]}$ we have $C_1 \sqsubseteq C_2$ iff
$C_1 \subseteq C_2$ and for all
$x \in C_1, \ y \in C_2 \setminus C_1$, $\lnot(y \ac x)$.
For instance, the configurations $\conf{\ES[A_0]}$ of $\ES[A_0]$,
ordered by prefix, can be obtained from Fig.~\ref{fi:conf}, by
replacing all occurrences of $b_{12}$ and $b_3$, by $b_{123}$. Note,
e.g., that $\{b_{123}\} \not\sqsubseteq \{a_{12}, b_{123}\}$ since
$a_{12} \ac b_{123}$.
  
\subparagraph{Flow event structures.}

In some situations, it can be quite useful to have the possibility of
modelling in a direct way the presence of multiple disjunctive and
mutually exclusive causes for an event, something that is not possible
in {\pes}s and in {\aes}s, where for each event there is a uniquely
determined minimal set of causes. For instance, in a process
calculus with non deterministic choice ``+'' and sequential
composition ``;'' in order to give a {\pes} semantics to $(a+b); c$ we are
forced to use two different events to represent the execution of $c$,
one for the execution of $c$ after $a$ and the other for the execution
of $c$ after $b$.

We briefly describe a model that overcomes this limitation, namely
flow~\cite{BC:SCCS,Bou:FESFN} event structures.

\begin{definition}[flow event structure]
  A \emph{flow event structure} ({\fes}) is a tuple
  $\langle E, \prec, \#, \lambda \rangle$, where $E$ is a
  set of events, $\prec \subseteq E \times E$ is an irreflexive
  relation called the \emph{flow relation},
  $\# \subseteq E \times E$ is the \emph{symmetric conflict} relation, and
  $\lambda : E \to \Lambda$ is a labelling function.
\end{definition}
Causality is replaced by an irreflexive (in general non
transitive) flow relation $\prec$, intuitively representing immediate
causal dependency. Moreover, conflict is no longer hereditary.

An event can have causes which are in conflict and these have a
disjunctive interpretation, i.e., the event will be enabled by a
maximal conflict-free subset of its causes.
This is formalised by the notion of configuration.

\begin{definition}[{\fes} configuration]
  Let $\ES[F] = \langle E, \prec, \#,\lambda \rangle$ be an {\fes}. A
  configuration of $\ES[F]$ is a finite set of events $C \subseteq E$
  such that (i) $\prec$ is acyclic on $C$, (ii) $\neg (x \# x')$ for
  all $x, x' \in C$ and (iii) for all $x \in C$ and $y \notin C$ with
  $y \prec x$, there exists $z \in C$ such that $y \# z$ and
  $z \prec x$.
\end{definition}

Some examples of {\fes}s can be found in Fig.~\ref{fi:fes}. Relation $\prec$
is represented by a double headed solid arrow. For instance, consider
the {\fes} $\ES[F]_1$. The set $C = \{a, d_{01}\}$ is a
configuration. We have $b \prec d_{01}$ and $b \not \in C$, but this
is fine since there is $a \in C$ such that $a \# b$ and
$a \prec d_{01}$.

\begin{figure}
  \begin{center}
  \begin{tabular}{cccc}
    \begin{tikzcd}[boxedcd, row sep=4mm, column sep=7mm]
      a \arrow[flow, d] \arrow[conflict, r] & b \arrow[flow, d] \arrow[conflict, r] & c \arrow[flow, d]\\
      d_0 \arrow[conflict, r] \arrow[conflict, bend right=16, rr]    &  d_1 \arrow[conflict, r]   & d_2 
    \end{tikzcd}
    &
    \begin{tikzcd}[boxedcd, row sep=4mm, column sep=1mm]
      a \arrow[flow, dr] \arrow[conflict, rr] & & b \arrow[flow, dl] \arrow[conflict, rr] & & c \arrow[flow, d]\\
      & d_{01} \arrow[conflict, rrr] & &   &  d_2
    \end{tikzcd}
    &
    \begin{tikzcd}[boxedcd, row sep=4mm, column sep=2mm]
      a \arrow[flow, d] \arrow[conflict, rr] & & b \arrow[flow, dr] \arrow[conflict, rr] & & c \arrow[flow, dl]\\
      d_0 \arrow[conflict, rrr] & & & d_{12}  & 
    \end{tikzcd}
    &
    \begin{tikzcd}[boxedcd, row sep=4mm, column sep=2mm]
      a \arrow[flow, dr] \arrow[conflict, r] & b \arrow[flow, d] \arrow[conflict, r] & c \arrow[flow, dl]\\
      & d_{012} &
    \end{tikzcd}                                                
     \\
    $\ES[F_0]$ & $\ES[F_1]$ & $\ES[F_2]$ & $\ES[F_3]$
  \end{tabular}
  \end{center}
  \caption{Some flow structures}
  \label{fi:fes}
\end{figure}
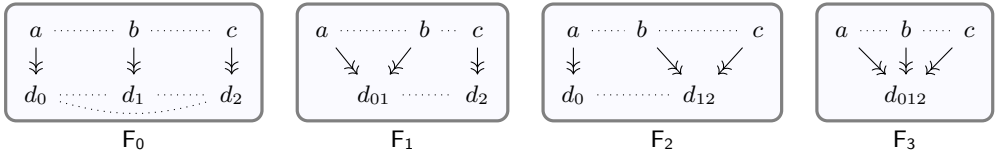

Under mild assumptions that exclude the presence of non-executable
events (a condition referred to as fullness
in~\cite{Bou:FESFN}), {\fes}s can be seen as poset event
structures,
by endowing each configuration $C$ with a local order arising as the reflexive and transitive closure of the
restriction of the flow relation to $C$, i.e., $\leq_C = (\prec \cap (C \times C))^*$.

\subparagraph{Bundle event structures.}
  
Bundle event structures~\cite{Lan92:BES,Lan:TASF} are another event
structure model that has been introduced in order to enable a direct
representation of disjunctive causes, thus easing the
definition of the semantics of the process description language
\textsc{lotos}.

\begin{definition}[bundle event structure]
  \label{de:bundle}
  A \emph{bundle event structure} is a triple $\langle E, \mapsto, \#
  \rangle$, where $E$ is the denumerable set of events, $\# \subseteq
  E \times E$ is the (irreflexive) \emph{symmetric conflict} relation
  and $\mapsto \subseteq \Powfin{E} \times E$ is the \emph{bundle
    relation} such that $X \times X  \subseteq \#$.
\end{definition}

Here a set of multiple disjunctive and mutually exclusive causes for
an event is called a \emph{bundle set} for the event, and comes into
play as a primitive notion.  The explicit representation of the
bundles makes bundle event structures strictly less expressive than
flow event structures.
(see~\cite{Lan:TASF}
for a wider discussion).
On the other hand, bundle event structures offer the advantage of
having a simpler theory. For instance, differently from what happens
for flow event structures, non-executable events can be removed
without affecting the behaviour of the event structure.

Configurations can be defined as conflict free sets of events that
contain, for each event, a element from of each of its bundles. Formally, $C \subseteq E$ is a configuration if (i) the relation $\mapsto_C$ defined, for $x, y \in C$, by $x \mapsto_C y$ when $X \mapsto y$ and $x \in X$, is acyclic (ii) $\neg (x \# y)$ for all $x, y \in C$; (iii) $X \cap C \neq \emptyset$ for all $x \in C$ and $X \mapsto x$. Endowing configurations with  $\mapsto_C^*$ turns a bundle event structure into an event structure in the sense of \cref{de:event-structure}.

\section{Foldings of Event Structures}
\label{se:folding}

In this section, we study a notion of folding, which is intended to
formalise the intuition of a behaviour-preserving quotient for an
event structure. We prove that there  always exists a minimal quotient
and we show that foldings between general poset event structures always arise,
in a suitable formal sense, from foldings over prime event structures.

\subsection{Morphisms and Foldings}

We first endow event structures with a notion of morphism. Below,
given two event structures $\ES$, $\ES'$, a function $f : E \to E'$
and a configuration $C \in \conf{\ES}$, we write $f(C)$ to refer to
the configuration whose underlying set is $\{ f(x) \mid x \in C \}$,
endowed with the order $f(x) \leq_{f(C)} f(y)$ iff $x \leq y$.

\begin{definition}[morphism]
  \label{de:es-morphism}
  Let $\ES, \ES'$ be event structures. A \emph{(strong) morphism}
  $f : \ES \to \ES'$ is a function between the underlying sets of
  events such that $\lambda = \lambda' \circ f$ and for all
  configurations $C \in \conf{\ES}$, the function $f$ is injective on
  $C$ and $f(C) \in \conf{\ES'}$.
\end{definition}

Hereafter, the qualification ``strong'' will be omitted since this is the
only kind of morphisms we deal with. It is motivated by the fact that
normally morphisms on event structures are designed to represent
simulations. If this were the purpose, then the requirement on
preservation of configurations could have been weaker, i.e., we
could have asked the order in the target configuration to be included
in (not identical to) the image of the order of the source
configuration (precisely, given a configuration
$\langle C, \leq_C \rangle \in \conf{\ES}$ then there exists
$\langle C', \leq_{C'} \rangle \in \conf{\ES'}$ such that $C' = f(C)$
and for all $x, y \in C$, $f(x) \leq_{C'} f(y)$ implies $x \leq_C y$).
Moreover, morphisms could have been partial.
However, in our setting, for the
objective of defining history-preserving quotients, the stronger
notion works fine and simplifies the presentation.

\begin{remark}
  \emph{The composition of morphisms is a morphism and the identity is
    a morphism. Hence the class of event structures and event structure
    morphisms form a category $\cat{ES}$.}
\end{remark}

\begin{definition}[folding]
  Let $\ES$ and $\ES'$ be event structures. A
  \emph{folding} is a morphism $f : \ES \to \ES'$
  such that the relation
  $R_f = \{ (C, f_{|C}, f(C)) \mid C \in \conf{\ES}\}$
  is a hhp-bisimulation.
\end{definition}
In words, a folding is a function that ``merges'' some sets of events
of an event structure into single event without altering the
behaviour modulo hhp-bisimilarity.
In~\cite{ABG:RESHPB} the notion of folding asks for the preservation of hp-bisimilarity, a weaker behavioural equivalence which is defined as hhp-bisimilarity but omitting the requirement of downward-closure.
Note that, as far as the notion of folding is concerned, this makes no difference: $R_f$ is downward-closed by definition, hence it is a hhp-bisimulation whenever it is a hp-bisimulation. Instead, taking hhp-bisimilarity as the reference equivalence appears to be the right choice for the development of the theory. E.g., it allows one to prove Lemma~\ref{le:bisim-as-es} that plays an important role for arguing about the adequateness of the notion of folding.
Interestingly, foldings can be characterised as open maps in the sense
of~\cite{JNW:BFOM}, by taking conflict free prime event structures as
subcategory of observations. This is explicitly worked out in the
appendix (\cref{le:folding-open}).

As an example, consider the {\pes}s in Fig.~\ref{fi:pes} and 
the function $f_{02} : \ES[P_0] \to \ES[P_2]$ that maps events as suggested by the indices, i.e., $f_{02}(a_1) = f_{02}(a_2) = a_{12}$,
$f_{02}(b_1) = f_{02}(b_2) = b_{12}$, $f_{02}(b_3) = b_3$ and $f_{02}(c)=c$.
Then it is easy to see that $f_{02}$ is a folding.
Note that, instead, $f_{01} : \ES[P_0] \to \ES[P_1]$, again mapping events according to their indices, is not a folding. In fact, $f_{01}(\{a_1\}) = \{a_{12}\} \trans{b_2} \{ a_{12}, b_2\}$, but clearly there is no transition $\{a_1\} \trans{x}$ with $f_{01}(x) = b_2$, since the only counter-image of $b_2$ in $\ES[P_0]$ is $b_2$.

It is also interesting to observe that the greater expressiveness of
{\aes}s allows one to obtain smaller quotients. For instance, while
the {\pes} $\ES[P_2]$ in Fig.~\ref{fi:pes} is minimal in the class of
{\pes}s, if we view it as an {\aes}, it can be further reduced. In
fact the obvious function from $\ES[P_2]$ to the {\aes} $\ES[A_0]$ in
Fig.~\ref{fi:aes} can be easily seen to be a folding.

\begin{remark}
  \label{le:fold-comp}
  \emph{The composition of foldings is a folding and the identity is a
    folding. We can consider a subcategory $\cat{ES_f}$ of $\cat{ES}$
    with the same objects and foldings as
    morphisms (see~\cref{le:folding-composition} in the Appendix).}
\end{remark}

Again in the setting of {\aes}s, consider the structures in
Fig.~\ref{fi:aes} and the functions $g_{12} : \ES[A_1] \to \ES[A_2]$,
and $g_{23} : \ES[A_2] \to \ES[A_3]$, naturally induced by the
indices. These can be seen to be foldings. The first one merges $c_1$, in
conflict with $b$ and $c_2$ caused by $b$ to a single event $c_{12}$,
in asymmetric conflict with $b$. The second one merges the two
conflicting events $a_1$ and $a_2$ into a single one $a_{12}$. Their
composition $g_{13} = g_{23} \circ g_{12} : \ES[A_1] \to \ES[A_3]$ is
again a folding.

Consider the {\fes}s in Fig.~\ref{fi:fes}. Again the obvious functions
from $\ES[F_0]$ to $\ES[F_1]$ and $\ES[F_2]$ can be seen to be
foldings. Instead, seen as a {\pes}, the event structure $\ES[F_0]$ is
minimal.

The next result shows that if we know that $f : \ES[E] \to \ES[E']$ is a
morphism, then half of the conditions needed to be a hhp-bisimulation
and thus folding, i.e., condition (1) in \cref{de:hp-bisim},
is automatically satisfied. This is used later in proofs whenever we need to show that some map is a folding.

\begin{restatable}[from morphisms to foldings]{lemma}{lemorphtofold}
  \label{le:morph-to-fold}
  Let $\ES$ and $\ES'$ be event structures and let $f : \ES \to \ES'$
  be a morphism.
  If for all $C_1 \in \conf{\ES}$ and transition
  $f(C_1) \trans{x'} C_2'$ there exists $C_1 \trans{x} C_2$ such that
  $f(C_2) = C_2'$ then $f$ is a folding.
\end{restatable}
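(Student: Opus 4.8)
The plan is to show directly that the relation $R_f = \{(C, f_{|C}, f(C)) \mid C \in \conf{\ES}\}$ is a hhp-bisimulation; $f$ being a folding is then immediate by definition. First I would check that $R_f$ is a well-formed set of triples of the required shape: for every $C \in \conf{\ES}$, since $f$ is a morphism (\cref{de:es-morphism}) it is injective on $C$ and $f(C) \in \conf{\ES'}$, so $f_{|C} : C \to f(C)$ is a bijection; it is an isomorphism of posets because, by the very definition of the order on $f(C)$, one has $f(x) \leq_{f(C)} f(y)$ iff $x \leq_C y$; and it preserves labels since $\lambda = \lambda' \circ f$. Hence $f_{|C}$ is an isomorphism of configurations, and $(\emptyset,\emptyset,\emptyset) = (\emptyset, f_{|\emptyset}, f(\emptyset)) \in R_f$ since $\emptyset \in \conf{\ES}$.

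For condition (1) of \cref{de:hp-bisim}, given $(C_1, f_{|C_1}, f(C_1)) \in R_f$ and a transition $C_1 \trans{x} C_2$, I would take $x' = f(x)$ and $C_2' = f(C_2)$ and verify $f(C_1) \trans{f(x)} f(C_2)$. Since $f$ is injective on $C_2 \supseteq C_1$, we get $f(C_2) \setminus f(C_1) = f(C_2 \setminus C_1) = \{f(x)\}$. The remaining point is that $C_1 \sqsubseteq C_2$ implies $f(C_1) \sqsubseteq f(C_2)$: if $u_1 \in f(C_1)$, $u_2 \in f(C_2)$ and $u_2 \leq_{f(C_2)} u_1$, write $u_i = f(x_i)$ (with $x_i$ unique by injectivity of $f$ on $C_2$); by definition of the order on $f(C_2)$ this gives $x_2 \leq_{C_2} x_1$, whence $C_1 \sqsubseteq C_2$ yields $x_2 \in C_1$ and $x_2 \leq_{C_1} x_1$, so $u_2 \in f(C_1)$ and $u_2 \leq_{f(C_1)} u_1$. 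Finally $f_{|C_1}[x \mapsto f(x)] = f_{|C_2}$, so the resulting triple is $(C_2, f_{|C_2}, f(C_2)) \in R_f$.

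For condition (2), given $(C_1, f_{|C_1}, f(C_1)) \in R_f$ and a transition $f(C_1) \trans{x'} C_2'$, the hypothesis directly supplies $C_1 \trans{x} C_2$ with $f(C_2) = C_2'$. It remains to observe $x' = f(x)$: from $C_2 = C_1 \cup \{x\}$ with $x \notin C_1$ and $f$ injective on $C_2$ we get $C_2' = f(C_2) = f(C_1) \cup \{f(x)\}$ with $f(x) \notin f(C_1)$, and comparing with $C_2' = f(C_1) \cup \{x'\}$, $x' \notin f(C_1)$, forces $x' = f(x)$. Hence $f_{|C_1}[x \mapsto x'] = f_{|C_2}$ and $(C_2, f_{|C_2}, C_2') \in R_f$. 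Downward closure is immediate since $R_f$ is downward closed by construction: it already contains the triple $(C, f_{|C}, f(C))$ for every configuration $C \in \conf{\ES}$ (as noted in the discussion following the definition of folding). Thus $R_f$ is a hhp-bisimulation and $f$ is a folding.

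I do not expect a genuine obstacle here: one half of the bisimulation game, condition (1), holds for any morphism, and the other half is exactly the hypothesis of the lemma, so the content is entirely bookkeeping with the definitions of morphism, of the order induced on image configurations, and of the transition relation. The only mildly delicate step is the preservation of the prefix order, $C_1 \sqsubseteq C_2 \Rightarrow f(C_1) \sqsubseteq f(C_2)$, used in condition (1); it is a short unfolding of definitions but is the one place where all the structure of a (strong) morphism is actually needed.
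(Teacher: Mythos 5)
Your proof is correct and follows essentially the same route as the paper's: condition (2) of the bisimulation is exactly the hypothesis, and condition (1) follows because a (strong) morphism maps configurations to isomorphic configurations, so $C_1 \trans{x} C_2$ yields $f(C_1) \trans{f(x)} f(C_2)$. You merely spell out details the paper leaves implicit (preservation of the prefix order and the identification $x' = f(x)$ in condition (2)), which are correct and harmless.
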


A simple but crucial result shows that the target event structure for
a folding is completely determined by the mapping on events. This
allows us to view foldings as equivalences on the source event
structures. We first define the quotient induced by a morphism.

\begin{definition}[quotients from morphisms]
  \label{de:morph-quotient}
  Let $\ES$, $\ES'$ be event structures and let $f : \ES \to \ES'$ be
  a morphism. Let $\equiv_f$ be the equivalence relation on $\ES$ defined by
  $x \equiv_f y$ if $f(x)=f(y)$. We denote by $\quotient{\ES}{\equiv_f}$  the
  event structure with configurations
  $\conf{\quotient{\ES}{\equiv_f}} = \{ \eqclass{C}{\equiv_f} \mid C
  \in \conf{\ES} \}$ where
  $\eqclass{C}{\equiv_f} = \{ \eqclass{x}{\equiv_f} \mid x \in C \}$
  is ordered by
  $\eqclass{x}{\equiv_f} \leq_{\eqclass{C}{\equiv_f}}
  \eqclass{y}{\equiv_f}$ iff $x \leq_{C} y$.
\end{definition}

It is immediate to see that $\quotient{\ES}{\equiv_f}$ is a
well-defined event structure.

\begin{restatable}[folding as equivalences]{lemma}{lefoldisequivalence}
  \label{le:fold-is-equivalence}
  Let $\ES$, $\ES'$ be event structures and let $f : \ES \to \ES'$ be
  a morphism. If $f$ is a folding then $\quotient{\ES}{\equiv_f}$ is
  isomorphic to $\ES'$.
\end{restatable}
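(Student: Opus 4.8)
The goal is to show that when $f : \ES \to \ES'$ is a folding, the quotient event structure $\quotient{\ES}{\equiv_f}$ of Definition~\ref{de:morph-quotient} is isomorphic to $\ES'$. The natural candidate isomorphism is the map $\hat f : \quotient{\ES}{\equiv_f} \to \ES'$ sending $\eqclass{x}{\equiv_f}$ to $f(x)$; this is well-defined and injective on events by the very definition of $\equiv_f$, and it is label-preserving since $f$ is. So the whole content lies in showing that $\hat f$ is a bijection on events that restricts to an order-isomorphism between the configuration posets, i.e.\ that it induces a bijection $\conf{\quotient{\ES}{\equiv_f}} \to \conf{\ES'}$ respecting the local orders.

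First I would check surjectivity of $\hat f$ on events and on configurations. Since $f$ is a folding, hence in particular (by $(\emptyset,\emptyset,\emptyset)\in R_f$ and repeated use of clause~(2) of \cref{de:hp-bisim}) every configuration $C'\in\conf{\ES'}$ is matched by some $C\in\conf{\ES}$ with $f(C)=C'$; reachability of all configurations from $\emptyset$ (noted after \cref{de:transition}) makes this a routine induction on the length of a transition sequence reaching $C'$. This simultaneously gives that $E' = \bigcup\conf{\ES'}$ is covered, so $\hat f$ is surjective on events, and that $\conf{\ES'} = \{ f(C) \mid C\in\conf{\ES}\} = \{ \hat f(\eqclass{C}{\equiv_f}) \mid \eqclass{C}{\equiv_f}\in\conf{\quotient{\ES}{\equiv_f}}\}$, where the last equality is just unfolding the definition of $\hat f$ on the equivalence-class notation $\eqclass{C}{\equiv_f}$.

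Next I would verify that $\hat f$ and its inverse both preserve configurations and local order, so that the bijection on events lifts to an isomorphism of event structures. In one direction this is immediate: if $\eqclass{C}{\equiv_f}\in\conf{\quotient{\ES}{\equiv_f}}$ then $\hat f(\eqclass{C}{\equiv_f}) = f(C)\in\conf{\ES'}$ because $f$ is a morphism, and by Definition~\ref{de:morph-quotient} the local order on $\eqclass{C}{\equiv_f}$ is transported verbatim from $\leq_C$, which $f$ carries to $\leq_{f(C)}$ by definition of $f(C)$ — so $\hat f$ restricted to $\eqclass{C}{\equiv_f}$ is an isomorphism of configurations onto $f(C)$. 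For the converse direction one needs: given $C'\in\conf{\ES'}$, the set $\hat f^{-1}(C')$ is a configuration of the quotient and its local order matches $\leq_{C'}$. Picking $C\in\conf{\ES}$ with $f(C)=C'$ (from the previous paragraph), injectivity of $f$ on $C$ gives $\hat f^{-1}(C') = \eqclass{C}{\equiv_f}$ as sets, and then the order claim is exactly the content just established, read backwards.

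The main obstacle, and the only place the folding hypothesis does real work beyond bare morphism-hood, is surjectivity onto configurations — equivalently, that $f$ does not "miss" any configuration of $\ES'$. A morphism alone only guarantees $f(C)\in\conf{\ES'}$, not that every $C'$ arises this way; this is where clause~(2) of the hhp-bisimulation and the reachability of all configurations from $\emptyset$ are essential. I would phrase this as a lemma: for a folding $f$ and every $C'\in\conf{\ES'}$ there is $C\in\conf{\ES}$ with $f(C)=C'$, proved by induction on a transition path $\emptyset \trans{x_1'}\cdots\trans{x_n'} C'$, at each step using clause~(2) of \cref{de:hp-bisim} applied to the triple $(C,f_{|C},C')$ currently in $R_f$ to produce the required $C\trans{x}$. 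Everything else is bookkeeping about the definitions of $f(C)$, $\eqclass{C}{\equiv_f}$, and isomorphism of configurations. (One should also remark that the two event structures have the same coherence/prefix structure automatically, since these are determined by the families of configurations, which $\hat f$ identifies.)
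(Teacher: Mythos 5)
Your proposal is correct and follows essentially the same route as the paper's own proof: define the map $\eqclass{x}{\equiv_f} \mapsto f(x)$, note well-definedness and injectivity, obtain surjectivity (on events and configurations) from reachability of configurations from $\emptyset$ together with clause~(2) of the bisimulation $R_f$, and conclude via the fact that quotient configurations are mapped isomorphically. Your write-up is merely a little more explicit than the paper about the configuration-level bookkeeping in the inverse direction.
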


The previous result allows us to identify foldings with the
corresponding equivalences on the source event structures and
motivates the following definition.

\begin{definition}[folding equivalences]
  \label{de:foldeq}
  Let $\ES$ be an event structure. The set of folding equivalences
  over $\ES$ is
  $\foldeq{\ES} = \{ \equiv_f \mid f : \ES \to \ES'\ \mbox{folding for some $\ES'$}\}$.
\end{definition}

Hereafter, we will freely switch between the two views of foldings as
morphisms or as equivalences, since each will be convenient for some
purposes.

We next observe that given two foldings we can always take their
``join'', providing a new folding that, roughly speaking, produces a smaller
quotient than both the original ones.

\begin{restatable}[joining foldings]{proposition}{prfoldjoin}
  \label{pr:fold-join} 
  Let $\ES, \ES', \ES''$ be event structures and let
  $f' : \ES \to \ES'$, $f'' : \ES \to \ES''$ be foldings. Define
  $\ES'''$ as the quotient $\quotient{\ES}{\equiv}$ where
  $\equiv$ is the transitive closure of
  $\equiv_{f'} \cup \equiv_{f''}$.
  Then $g': \ES' \to \ES'''$ defined by $g'(x') = \eqclass{x}{\equiv}$
  if $f'(x)=x'$ and $g'': \ES'' \to \ES'''$ defined by
  $g''(x'') = \eqclass{x}{\equiv}$ if $f''(x)=x''$ are foldings.
\end{restatable}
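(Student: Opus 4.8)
The plan is to reduce the statement to the assertion that the canonical quotient map $q : \ES \to \ES'''$, $q(x) = \eqclass{x}{\equiv}$, is itself a folding, and then to ``divide $f'$ and $f''$ out of it''. First, $g'$ and $g''$ are well defined because $\equiv_{f'}$ and $\equiv_{f''}$ refine $\equiv$: if $f'(x)=f'(y)$ then $x \equiv_{f'} y$, hence $\eqclass{x}{\equiv}=\eqclass{y}{\equiv}$. Taking $x$ itself as a preimage of $f'(x)$ gives $g'(f'(x)) = \eqclass{x}{\equiv} = q(x)$, so $q = g' \circ f'$ and, symmetrically, $q = g'' \circ f''$. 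A preliminary observation I would record is that hhp-bisimulations between two fixed event structures are closed under relational inverse and under composition, sending $(C,\phi,C_2)$ and $(C_2,\psi,C_3)$ to $(C,\psi\circ\phi,C_3)$; this is a routine check of \cref{de:hp-bisim}, the only mildly delicate clause being downward closure of a composite, which follows by first restricting on the source side and then transporting the restriction through the middle component. Granting that $q$ is a folding, I claim $g'$ is a folding as well. Since $f'$ is a folding, every $C' \in \conf{\ES'}$ has the form $f'(C)$ for some $C \in \conf{\ES}$ (iterate clause~(2) of \cref{de:hp-bisim} starting from $(\emptyset,\emptyset,\emptyset)$); a direct computation then yields $R_{g'} = R_q \circ R_{f'}^{-1}$, since for $C' = f'(C)$ one has $q_{|C} \circ (f'_{|C})^{-1} = g'_{|C'}$ and $q(C) = g'(C')$. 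Hence $R_{g'}$ is a hhp-bisimulation; in particular each $g'_{|C'}$ is an isomorphism of configurations, so $g'$ is a morphism, and therefore a folding. The argument for $g''$ is symmetric, so the whole statement reduces to showing that $q : \ES \to \ES'''$ is a folding.

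The heart of the matter is to prove that $q$ is a morphism, equivalently that $\ES'''$ is a well-defined poset event structure on which $q$ acts faithfully. The crucial point is that \emph{no configuration of $\ES$ contains two distinct $\equiv$-related events}, so that $q$ is injective on every $C \in \conf{\ES}$. I would establish this by analysing $\equiv$-chains: since $\equiv_{f'}$ and $\equiv_{f''}$ are each transitive, any chain witnessing $x \equiv y$ can be normalised so as to alternate single $\equiv_{f'}$- and $\equiv_{f''}$-steps, and one then proceeds by induction on its length. The inductive step rests on a \emph{transport (swapping) property} of foldings, which must be established with care: if $h$ is a folding with $h(u)=h(v)$ and $D \in \conf{\ES}$ with $u \in D$, then there is $D^{\star} \in \conf{\ES}$ with $v \in D^{\star}$ and $h(D^{\star}) = h(D)$, obtained by applying $R_h$ (downward closure on the history $\hist{D}{u}$, then re-extension along the image configuration $h(D)$) so as to replace the ``slot'' occupied by $u$ with $v$. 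Transporting a hypothetical configuration witnessing ``$x$ and $y$ occurring together'' along the normalised chain, keeping its $f'$-image fixed across $\equiv_{f'}$-steps and its $f''$-image fixed across $\equiv_{f''}$-steps, one arrives at a configuration on which $f'$ (or $f''$) fails to be injective unless the two endpoints already coincide, contradicting that $f'$ and $f''$ are morphisms. Once this injectivity is secured, well-definedness of the order on $\eqclass{C}{\equiv}$ and of the prefix relation, prefix-closure and coherence of $\conf{\ES'''}$, and constancy of labels on $\equiv$-classes all follow from the fact that $f'$ and $f''$ are strong morphisms, hence preserve local orders exactly, together with coherence of $\conf{\ES}$.

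With $q$ known to be a morphism, I would conclude via \cref{le:morph-to-fold}: it remains to match every transition $q(C) \trans{w} D'''$ of $\ES'''$ by a transition $C \trans{x} D$ of $\ES$ with $q(D)=D'''$. Picking any $\widehat D \in \conf{\ES}$ with $q(\widehat D)=D'''$ and a representative $\widehat x \in \widehat D$ of the new class $w$, I would zig-zag along an $\equiv$-chain connecting $\widehat x$ to an event that can be appended to $C$, using alternately the backward simulation clauses of $R_{f'}$ and $R_{f''}$ to rebuild, one step at a time, a configuration of $\ES$ that extends $C$ by a single event whose $q$-image is $w$. This reuses the transport technique from the previous paragraph, which is why I expect it --- the swapping lemma for foldings and its iteration along normalised $\equiv$-chains --- to be the main obstacle; the remaining checks are routine bookkeeping. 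Once $q$ is known to be a folding, the reduction from the first paragraph delivers that $g'$ and $g''$ are foldings, completing the proof.
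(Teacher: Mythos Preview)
Your overall plan---show the quotient map $q : \ES \to \ES'''$ is a folding and then obtain $g',g''$ by writing $R_{g'} = R_q \circ R_{f'}^{-1}$---is a clean alternative to the paper's route, which never isolates $q$ but proves directly that $g'$ (and symmetrically $g''$) is a morphism and then a folding. For the folding clause the paper performs essentially the zig-zag you sketch in your last paragraph, but applied to $g'$: given $g'(C_1') \trans{x'''} D_2'''$ it picks $D_2 \in \conf{\ES}$ with $q(D_2)=D_2'''$, restricts to a prefix $D_1$ with $q(D_1)=g'(C_1')$, and then uses $f''$ (not $f'$) to transport the step $D_1 \trans{x} D_2$ over to $C_1$, exploiting $f''(C_1)=f''(D_1)$. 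The paper's injectivity argument for $g'$ is very brief (``by the properties of pushouts, $f''(x)=f''(y)$''), so your explicit identification of this as the crux is well taken.

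There is, however, a genuine gap in your argument: the swapping lemma you rely on is false for general poset event structures. Take $\ES$ with events $u,v,w$, causality $w<u$, conflict $u\#v$, and $w$ concurrent with $v$; its configurations are $\emptyset,\{v\},\{w\},\{w,u\},\{w,v\}$. The map $h$ identifying $u,v$ and fixing $w$ is a folding onto $\quotient{\ES}{\equiv_h}$, whose configurations include both $(\{w,uv\},\, w<uv)=h(\{w,u\})$ and $(\{w,uv\},\text{discrete})=h(\{w,v\})$ as \emph{distinct} posets. With $D=\{w,u\}$ one has $h(D)=(\{w,uv\},w<uv)$, yet no configuration containing $v$ has this image, since $h(\{w,v\})$ carries the discrete order. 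So you cannot in general ``replace the slot occupied by $u$ with $v$'' while keeping $h(D)$ fixed, and your transport of a witnessing configuration along an $\equiv$-chain breaks down. The claim you are after---no configuration contains two distinct $\equiv$-related events---is true, but its proof must use both foldings together (as the paper's transition-level zig-zag does), not swap through one folding at a time; an induction on chain length using your lemma will not terminate.
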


As an example, consider the {\pes} in Fig.~\ref{fi:pes} and two
morphisms $f_{30} : \ES[P_3] \to \ES[P_0]$ and
$f_{31} : \ES[P_3] \to \ES[P_1]$. The way all events are mapped by
$f_{30}$ and $f_{31}$ is naturally suggested by their labelling, apart
for the $b_{ij}$ for which we let $f_{30}(b_{ij}) = b_i$ while
$f_{31}(b_{ij}) = b_j$.
It can be seen that both are
foldings. Their join, constructed as in \cref{pr:fold-join}, is
$\ES[P_2]$ with the folding morphisms $f_{02} : \ES[P_0] \to \ES[P_2]$
and $f_{12} : \ES[P_1] \to \ES[P_2]$.

\begin{remark}
  \emph{\cref{pr:fold-join} is a consequence of the fact that the
    category $\cat{ES}$ has pushouts of foldings. Indeed, $\ES'''$ as
    defined above is the pushout of $f'$ and $f''$ (in $\cat{ES}$ and
    also in $\cat{ES_f}$). It can be seen that, instead, $\cat{ES}$
    does not have all pushouts (see Fig.~\ref{fi:no-pushout} in the
    Appendix for a counterexample).}
\end{remark}

When interpreted in the set of folding equivalences of an event
structure, \cref{pr:fold-join} has a clear meaning. Recall that
the equivalences over some fixed set $X$, ordered by inclusion, form a
complete lattice, where the top element is the universal equivalence
$X \times X$ and the bottom is the identity on $X$. Then
\cref{pr:fold-join} implies that $\foldeq{\ES}$ is a
sublattice of the lattice of equivalences.
Actually, it can be shown that $\foldeq{\ES}$ is itself a complete
lattice. Therefore each event structure $\ES$ admits a maximally
folded version.

\begin{restatable}[lattice of foldings]{proposition}{prfoldinglattice}
  \label{pr:folding-lattice}
  Let $\ES$ be an event structure. Then $\foldeq{\ES}$ is a sublattice
  of the complete lattice of equivalence relations over $\ES$. 
\end{restatable}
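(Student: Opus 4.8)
The plan is to show that $\foldeq{\ES}$ is closed under arbitrary infima in the complete lattice of equivalence relations over $\ES$; closure under arbitrary suprema then follows by the standard argument, since a family with an upper bound always has a least one (and $\ES \times \ES$ itself need not be a folding equivalence, so we genuinely want the lattice structure relative to the existing top of $\foldeq{\ES}$, namely the maximal folding, rather than $\ES \times \ES$). Concretely, I would first establish that $\foldeq{\ES}$ has a greatest element: by \cref{pr:fold-join} it is closed under binary joins, and a direct transfinite/Zorn-style argument — or, more cleanly, taking the join of the (set-indexed) family of \emph{all} folding equivalences, iterated if necessary — produces a top element $\equiv_{\max}$, using that there are only set-many equivalences on $E$ and that the join of two foldings is again a folding. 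This is exactly the "maximally folded version'' promised in the paragraph preceding the statement.

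Next I would show closure under arbitrary meets. Let $\{\equiv_i\}_{i \in I} \subseteq \foldeq{\ES}$ with corresponding foldings $f_i : \ES \to \ES_i$, and let $\equiv = \bigcap_{i \in I} \equiv_i$, which is again an equivalence on $E$. Form $\ES' = \quotient{\ES}{\equiv}$ and let $q : \ES \to \ES'$ be the quotient morphism $q(x) = \eqclass{x}{\equiv}$; since $\equiv \subseteq \equiv_i$, each $f_i$ factors as $f_i = h_i \circ q$ for a morphism $h_i : \ES' \to \ES_i$. The goal is to prove $q$ is a folding, i.e.\ that $R_q = \{(C, q_{|C}, q(C)) \mid C \in \conf{\ES}\}$ is a hhp-bisimulation. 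It is downward closed and a morphism-induced relation by construction, so by \cref{le:morph-to-fold} it suffices to show the "backward'' matching: for every $C_1 \in \conf{\ES}$ and every transition $q(C_1) \trans{x'} C_2'$ in $\ES'$ there is a transition $C_1 \trans{x} C_2$ in $\ES$ with $q(C_2) = C_2'$. The idea is that $q(C_1) \trans{x'} C_2'$ means $C_2' = q(C_1) \cup \{x'\}$ is a configuration of $\ES'$ extending $q(C_1)$ in the prefix order; pushing this forward along the $h_i$'s, each $h_i(q(C_1)) = f_i(C_1) \trans{h_i(x')} f_i(C_2')$ is a transition in $\ES_i$, and since $f_i$ is a folding there is a lifting $C_1 \trans{x_i} C_2^{(i)}$ in $\ES$ with $f_i(C_2^{(i)}) = f_i(C_2')$. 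I then need to argue that a single event $x \in E \setminus C_1$ works for all $i$ simultaneously, with $q(C_1 \cup \{x\})$ a configuration equal to $C_2'$. The uniformizing fact is that $\equiv$ is the common refinement: the events of $\ES'$ that $x'$ could represent form an $\equiv$-class, and one shows — using that $q(C_2')$ must be an actual configuration of $\ES'$ and that $\conf{\ES'} = \{\eqclass{C}{\equiv} \mid C \in \conf{\ES}\}$ by \cref{de:morph-quotient} — that $C_2'$ already equals $\eqclass{D}{\equiv}$ for some $D \in \conf{\ES}$; prefix-compatibility with $C_1$ plus $|D \setminus C_1| = 1$ (counting via injectivity of $q$ on configurations) then yields the desired single $x$ with $C_1 \sqsubseteq C_1 \cup \{x\} = D$ and $q(D) = C_2'$.

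The main obstacle I expect is precisely this last uniformization step: a priori the liftings $C_1 \trans{x_i} C_2^{(i)}$ provided by the different foldings $f_i$ could pick \emph{different} events $x_i \in E$, and one must rule out that the meet $\equiv$ is "too fine'' to admit a common witness. The key leverage is that $C_2' \in \conf{\ES'}$ is itself, by the description of $\conf{\quotient{\ES}{\equiv}}$, the $\equiv$-image of some configuration $D$ of $\ES$; so the event $x'$ is $\eqclass{x}{\equiv}$ for some $x \in D \setminus C_1$, and because $\equiv$ refines every $\equiv_i$, this same $x$ maps correctly under every $f_i$, so the $f_i$-liftings can all be taken to be this $x$ — the real content is checking $C_1 \sqsubseteq D$ (prefix, not just subset) in $\ES$, which follows from $q(C_1) \sqsubseteq C_2'$ in $\ES'$ together with the fact that $q$ reflects the prefix order on configurations (again a consequence of the quotient construction in \cref{de:morph-quotient}). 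Once $q$ is shown to be a folding, $\equiv = \equiv_q \in \foldeq{\ES}$, giving closure under arbitrary meets; combined with the existence of a top element, $\foldeq{\ES}$ is a complete lattice, and since meets are computed as in the ambient lattice of equivalences it is a \emph{sublattice} (at least for meets; for joins it is a sub-join-semilattice by \cref{pr:fold-join}, which is all the statement asserts together with completeness).
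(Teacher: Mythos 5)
There is a genuine gap, and it sits in the very first step of your plan: $\foldeq{\ES}$ is \emph{not} closed under intersections, so the strategy of proving closure under arbitrary infima and recovering suprema abstractly cannot work. A concrete counterexample: let $\ES[P]$ be the {\pes} with events $c,c'$ labelled $p$ and $z,w_1,w_2$ labelled $e$, empty causality and conflicts $c\# c'$, $c\# z$, $z\# w_1$, $z\# w_2$, $w_1\# w_2$ (all other pairs consistent). The quotient maps $g_1$, identifying $c$ with $c'$ and $z$ with $w_1$, and $g_2$, identifying $c$ with $c'$ and $z$ with $w_2$, are both foldings (check directly, or via \cref{pr:pes-folding} or \cref{co:fold-eq-pes}). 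Their intersection $\equiv\ =\ \equiv_{g_1}\cap\equiv_{g_2}$ identifies only $c$ and $c'$, and it is not a folding equivalence: by \cref{le:fold-is-equivalence} any folding with kernel $\equiv$ would have target isomorphic to $\quotient{\ES[P]}{\equiv}$, where $\{\eqclass{c}{\equiv},\eqclass{z}{\equiv}\}$ is a configuration (the image of $\{c',z\}$), so there is a transition $\{\eqclass{c}{\equiv}\}\trans{\eqclass{z}{\equiv}}\{\eqclass{c}{\equiv},\eqclass{z}{\equiv}\}$; this cannot be matched from $\{c\}$, since $\eqclass{z}{\equiv}=\{z\}$ and $c\# z$. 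This failure also pinpoints the flawed step in your argument: from $q(C_1)\trans{x'}C_2'$ and $C_2'=\eqclass{D}{\equiv}$ for some $D\in\conf{\ES}$ you cannot conclude $C_1\sqsubseteq D$, nor even $C_1\subseteq D$, because $D$ may use different representatives of the classes occurring in $q(C_1)$ (above, $D=\{c',z\}$ while $C_1=\{c\}$); and the claimed ``reflection of the prefix order by $q$'' is not a consequence of the quotient construction --- it is precisely the folding property you are trying to establish. The paper's own non-folding $f_{01}:\ES[P_0]\to\ES[P_1]$ already illustrates a quotient all of whose configurations are images of configurations, yet whose quotient map is not a folding.

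The paper argues in the opposite direction: it proves that $\foldeq{\ES}$ is closed under \emph{joins} of arbitrary families (\cref{le:multi}), by induction from \cref{pr:fold-join} for finite families and, for infinite ones, by a colimit-in-$\cat{Set}$ construction in which the crucial verification is reduced to a finite subfamily using finiteness of configurations; \cref{pr:folding-lattice} is then an immediate corollary, the top element being the join of all folding equivalences. Your route to the top element is moreover circular: transfinitely iterating binary joins needs joins of infinite chains of folding equivalences at limit stages, which is exactly the content of \cref{le:multi} that the proposal does not prove; and the ``standard argument'' you invoke runs from arbitrary joins to meets (computed inside $\foldeq{\ES}$, in general not as intersections), not the other way around. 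A correct repair is therefore to prove join-closure directly, extending \cref{pr:fold-join} to arbitrary families as the paper does.
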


\begin{remark}
  \emph{The above result arises from a generalisation of
    \cref{pr:fold-join} showing that, for any event structure $\ES$,
    each collection of foldings $f_i : \ES \to \ES_i$, with $i \in I$, admits a colimit
    in $\cat{ES}$. Thus the coslice category
    $(\ES \downarrow \cat{ES_f})$ has a terminal object, which is the
    maximally folded event structure.}
\end{remark}

It is natural to ask whether all behaviour preserving
quotients correspond to foldings. Strictly speaking, the answer is negative. More
precisely, there can be morphisms $f : \ES \to \ES'$ such that
$\quotient{\ES}{\equiv_f}$ is hhp-bisimilar to $\ES$, but $f$ is not a
folding. For an example, consider the {\pes}s $\ES[P_0]$ and
$\ES[P_1]$ in Fig.~\ref{fi:pes} and the morphism
$f_{01} : \ES[P_0] \to \ES[P_1]$ suggested by the indexing. We already
observed this is not a folding, but
$\quotient{\ES[P_0]}{\equiv_{f_{01}}}$, which is isomorphic to $\ES[P_1]$, is
hhp-bisimilar to $\ES[P_0]$.

However, we can show that for any behaviour preserving quotient, there
is a folding that produces a coarser equivalence, and thus a
smaller quotient. For instance, in the example discussed above, there is the
folding $f_{02} : \ES[P_0] \to \ES[P_2]$, that ``produces'' a smaller
quotient.

This follows from the possibility of joining foldings
(\cref{pr:fold-join}) and the fact that a hhp-bisimulation can be
always seen as an event structure, a result that generalises to our
setting a property proved for {\pes}s in~\cite{Ber:HHP}.%

\begin{restatable}[foldings subsume behavioural quotients]{proposition}{prfoldsubsume}
  \label{pr:fold-subsume}
  Let $\ES$ be an event structure and let $f : \ES \to \ES'$ be a
  morphism such that $\quotient{\ES}{\equiv_f}$ is hhp-bisimilar to
  $\ES$. Then there exists a folding $g : \ES \to \ES''$ such that
  $\equiv_g$ is coarser than $\equiv_f$.
\end{restatable}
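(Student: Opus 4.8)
The plan is to exploit Lemma~\ref{le:bisim-as-es}, which tells us that any hhp-bisimulation can itself be viewed as an event structure with canonical projections to the two related event structures, and to combine it with the join construction of Proposition~\ref{pr:fold-join}. Concretely, since $\quotient{\ES}{\equiv_f}$ is hhp-bisimilar to $\ES$, fix a hhp-bisimulation $R$ witnessing this. By Lemma~\ref{le:bisim-as-es} there is an event structure $\ES[R]$ whose events are (roughly) the triples of $R$, together with two projection morphisms $\pi_1 : \ES[R] \to \ES$ and $\pi_2 : \ES[R] \to \quotient{\ES}{\equiv_f}$, and these projections are foldings (being induced by a bisimulation). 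Composing $\pi_2$ with the quotient folding $\ES \to \quotient{\ES}{\equiv_f}$ does not directly help, but the key point is that both $\pi_1$ and the composite $\pi_1$ followed by nothing, versus $\pi_2$ followed by nothing, are foldings \emph{out of} $\ES[R]$; we can then transport along $\pi_1$.

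First I would set up the cospan: we have foldings $\pi_1 : \ES[R] \to \ES$ and $p \circ \pi_2 ? $ — more carefully, since $\pi_1$ is a folding, Lemma~\ref{le:fold-is-equivalence} lets us identify $\ES$ with $\quotient{\ES[R]}{\equiv_{\pi_1}}$, and the second projection $\pi_2$ composed with the canonical quotient $q : \ES \to \quotient{\ES}{\equiv_f}$ gives a folding $\ES[R] \to \quotient{\ES}{\equiv_f}$ whose induced equivalence, call it $\equiv_2$, refines $\equiv_{\pi_1}$ composed with $\equiv_f$ in a suitable sense. Now apply Proposition~\ref{pr:fold-join} to the two foldings $\pi_1 : \ES[R] \to \ES$ and $\pi_2' : \ES[R] \to \quotient{\ES}{\equiv_f}$ (where $\pi_2' = q \circ \pi_2$): the join is a folding $g_1 : \ES \to \ES''$ of $\ES$ such that $\equiv_{g_1}$ is the image under $\pi_1$ of the transitive closure of $\equiv_{\pi_1} \cup \equiv_{\pi_2'}$. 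Because $\pi_2'$ identifies two events of $\ES[R]$ exactly when their $\pi_1$-images lie in the same $\equiv_f$-class of configurations realised by $R$, the equivalence $\equiv_{g_1}$ on $\ES$ contains $\equiv_f$. Taking $g = g_1$ then yields a folding with $\equiv_g$ coarser than $\equiv_f$, as required.

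The main obstacle I expect is verifying precisely that the join of $\pi_1$ and $\pi_2'$, pushed down to $\ES$, has induced equivalence containing $\equiv_f$ — i.e., that whenever $x \equiv_f y$ in $\ES$, there are events in $\ES[R]$ projecting to $x$ and $y$ under $\pi_1$ that get identified by $\pi_2'$. This amounts to a ``fullness/surjectivity'' property of the bisimulation $R$: every event of $\ES$ and every event of $\quotient{\ES}{\equiv_f}$ must appear in some triple of $R$ as the image of a common $\ES[R]$-event, and the isomorphism components of $R$ must send any two $\equiv_f$-identified events of $\ES$ to the \emph{same} event of $\quotient{\ES}{\equiv_f}$ (which holds because $q$ is a morphism, hence injective on configurations, so within a single configuration $x \not\equiv_f y$ unless $q(x) = q(y)$ forces them equal in the quotient — the subtlety is handling $\equiv_f$-identifications that only manifest across different configurations). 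I would handle this by chasing through the definition of $\ES[R]$ from Lemma~\ref{le:bisim-as-es} and the downward-closure of $R$, which guarantees enough triples are present; once that surjectivity bookkeeping is in place, the rest is routine composition of foldings (Remark~\ref{le:fold-comp}) and an application of Proposition~\ref{pr:fold-join}.
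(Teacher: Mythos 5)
Your skeleton is the same as the paper's: build the event structure $\ES_R$ of Lemma~\ref{le:bisim-as-es} from a hhp-bisimulation $R$ between $\ES$ and $\quotient{\ES}{\equiv_f}$, and join the two projection foldings using Proposition~\ref{pr:fold-join}. (A minor slip first: $\pi_2' = q \circ \pi_2$ is ill-typed, since $\pi_2$ already has codomain $\quotient{\ES}{\equiv_f}$; the join is simply applied to $\pi_1$ and $\pi_2$.) The genuine gap is exactly the step you single out, and the argument you sketch for it does not hold. It is not true that $\pi_2$ identifies two events of $\ES_R$ precisely when their $\pi_1$-images are $\equiv_f$-related: an arbitrary hhp-bisimulation may match events ``crosswise'', i.e.\ it need never relate an event $x$ to its own class $q(x)$, and then the joined folding $g$ need not satisfy $\equiv_f \subseteq \equiv_g$. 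Concretely, let $\ES$ consist of three pairwise conflicting events $a_1,a_2,a_3$ with the same label and let $f$ identify $a_1$ with $a_2$, so that $\quotient{\ES}{\equiv_f}$ has two conflicting events $c_{12},c_3$ and is hhp-bisimilar to $\ES$. The relation containing, besides the empty triple, the matches $(\{a_1\},\cdot,\{c_{12}\})$, $(\{a_2\},\cdot,\{c_3\})$, $(\{a_3\},\cdot,\{c_3\})$ is a (downward-closed) hhp-bisimulation; the resulting $\ES_R$ has three events, $\pi_2$ identifies only the ones over $a_2$ and $a_3$, and the join yields a folding $g$ with $a_2 \equiv_g a_3$ but $a_1 \not\equiv_g a_2$, so $\equiv_g$ is not coarser than $\equiv_f$. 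Downward closure of $R$ does not supply the missing triples, and the injectivity of $q$ on configurations (your parenthetical argument) says nothing about where the isomorphism components of $R$ send an event, so the ``surjectivity bookkeeping'' you defer is precisely where the argument breaks.

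The paper does not conclude via your claim. After joining $\pi_1$ and $\pi_2$ into foldings $g : \ES \to \ES''$ and $g' : \quotient{\ES}{\equiv_f} \to \ES''$, it reads off the result from the two presentations $\ES'' = \quotient{\ES}{\equiv_g}$ and $\ES'' = \quotient{\left(\quotient{\ES}{\equiv_f}\right)}{\equiv_{g'}}$ given by Lemma~\ref{le:fold-is-equivalence}: coarseness with respect to $\equiv_f$ enters through the presentation of $\ES''$ as a quotient of $\quotient{\ES}{\equiv_f}$, not through a claim that $\pi_2$ identifies $\equiv_f$-related fibres of $\pi_1$. To repair your write-up you would have to either make that double-quotient reading explicit (and check the folding you finally exhibit out of $\ES$ is the one whose kernel refines along $q$), or restrict to a bisimulation that is guaranteed to relate each event of $\ES$ to its own $\equiv_f$-class — a property that is not automatic for an arbitrary $R$ and is essentially what being a folding would already give you. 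As it stands, the central bridging claim of your plan is false, so the proposal does not go through without this additional idea.
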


\subsection{Folding through Prime Event Structures}

Here we observe that each event structure is the folding of some
canonical {\pes}. We then prove that, interestingly enough, all
foldings between event structures arise from foldings of the
corresponding canonical {\pes}s.

We start with the definition of the canonical {\pes} associated with an
event structure.

\begin{definition}[{\pes} for an event structure]
  \label{de:pes-es}
  Let $\ES$ be an event structure. Its canonical {\pes} is
  $\pr(\ES) = \langle \Hist{\ES}, \sqsubseteq, \#, \lambda' \rangle$
  where $\sqsubseteq$ is prefix, $\#$ is inconsistency, i.e., for
  $H_1, H_2 \in \Hist{\ES}$ we let $H_1 \# H_2$ if
  $\neg (H_1 \cons H_2)$ and $\lambda'(H) = \lambda(x)$ when
  $H \in \Hist{x}$. Given a morphism $f : \ES \to \ES'$ we write
  $\pr(f) : \pr(\ES) \to \pr(\ES')$ for the morphism defined by
  $\pr(f)(H) = f(H)$.
\end{definition}

It can be easily seen that the definition above is well-given. In
particular, $\pr(\ES)$ is a well-defined {\pes} because, as proved in~\cite{Ren:PC}, a family of posets ordered by
prefix is finitary coherent prime algebraic domain. Then the tight
relation between this class of domains and {\pes} highlighted
in~\cite{Win:ES} allows one to conclude the proof.
For instance, in Fig.~\ref{fi:es}(right) one can find the canonical
{\pes} for the event structure on the left.

The canonical {\pes} associated with an event structure can always be folded to
the original event structure.

\begin{restatable}[unfolding event structures to {\pes}'s]{lemma}{leunfespes}
 \label{le:unf-es-pes}
  Let $\ES$ be an event structure. Define a function
  $\phi_{\ES} : \pr(\ES) \to \ES$, for all $H \in \Hist{\ES}$ by
  $\phi_{\ES}(H) = x$ if $H \in \Hist{x}$ for $x \in \ES$.  Then
  $\phi_{\ES}$ is a folding.
\end{restatable}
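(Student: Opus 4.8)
First I would record that $\phi_{\ES}$ is a well-defined function from $\Hist{\ES}$ (the events of $\pr(\ES)$) to $E$: every history $H=\hist{C}{x}$ has $x$ as its maximum, so the event $x$ with $H\in\Hist{x}$ is uniquely recoverable from $H$ --- this is precisely what makes the label $\lambda'(H)=\lambda(x)$ in \cref{de:pes-es} unambiguous. The structural ingredient I would rely on is the representation result already invoked to justify \cref{de:pes-es}: writing $\theta(C)=\{\hist{C}{x}\mid x\in C\}$, the map $\theta$ is an order isomorphism between $(\conf{\ES},\sqsubseteq)$ and the set of configurations of $\pr(\ES)$ ordered by inclusion (recall that in a {\pes} the prefix order on configurations is plain set inclusion), and its inverse sends a configuration $\mathcal C$ of $\pr(\ES)$ to $\{\phi_{\ES}(H)\mid H\in\mathcal C\}$.

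Next I would check that $\phi_{\ES}$ is a morphism in the sense of \cref{de:es-morphism}. Label preservation, $\lambda'=\lambda\circ\phi_{\ES}$, is immediate from the definition of $\lambda'$. Fix $\mathcal C\in\conf{\pr(\ES)}$ and let $C$ be the configuration of $\ES$ with $\mathcal C=\theta(C)$, so $\mathcal C=\{\hist{C}{x}\mid x\in C\}$; since each $\hist{C}{x}$ determines its maximum $x$, the map $\phi_{\ES}$ is injective on $\mathcal C$ and $\phi_{\ES}(\mathcal C)$ has underlying set $C$. To see that the orders agree, note that for $x,y\in C$ one has $\hist{C}{x}\sqsubseteq\hist{C}{y}$ iff $x\leq_C y$: if $x\leq_C y$ then $\hist{C}{x}\subseteq\hist{C}{y}$ and the downward-closure condition defining $\sqsubseteq$ both follow by transitivity of $\leq_C$; conversely $x\in\hist{C}{x}\subseteq\hist{C}{y}$ forces $x\leq_C y$. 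As the local order of $\mathcal C$ in the {\pes} $\pr(\ES)$ is the restriction of causality $\sqsubseteq$, this says exactly that $\phi_{\ES}(\mathcal C)$, ordered as prescribed before \cref{de:es-morphism}, coincides with the configuration $C$ of $\ES$. Hence $\phi_{\ES}(\mathcal C)\in\conf{\ES}$ and $\phi_{\ES}$ is a morphism; in fact each $(\phi_{\ES})_{|\mathcal C}\colon\mathcal C\to C$ is an isomorphism of configurations.

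To conclude that $\phi_{\ES}$ is a folding I would appeal to \cref{le:morph-to-fold}, so that it remains only to lift, in $\pr(\ES)$, single-event transitions of $\ES$ out of images of configurations. Let $\mathcal C_1\in\conf{\pr(\ES)}$, put $C_1=\phi_{\ES}(\mathcal C_1)$, and let $C_1\trans{x'}C_2'$ be a transition in $\ES$, i.e.\ $C_1\sqsubseteq C_2'$ with $C_2'\setminus C_1=\{x'\}$. Put $\mathcal C_2=\theta(C_2')$; then $\phi_{\ES}(\mathcal C_2)=C_2'$ by the previous paragraph. Since $\theta$ is an order isomorphism onto configurations of $\pr(\ES)$ ordered by inclusion, $C_1\sqsubseteq C_2'$ yields $\mathcal C_1=\theta(C_1)\subseteq\theta(C_2')=\mathcal C_2$, i.e.\ $\mathcal C_1\sqsubseteq\mathcal C_2$ in $\pr(\ES)$. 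Moreover $|\mathcal C_2\setminus\mathcal C_1|=|C_2'|-|C_1|=1$, because distinct events of a fixed configuration have distinct histories; hence $\mathcal C_1\trans{H}\mathcal C_2$ for a single event $H$ of $\pr(\ES)$, with $\phi_{\ES}(\mathcal C_2)=C_2'$, as required. By \cref{le:morph-to-fold}, $\phi_{\ES}$ is a folding; downward closure of $R_{\phi_{\ES}}$ holds by construction, as noted after the definition of folding.

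The one point needing some care is the translation through the family-of-posets representation theorem of~\cite{Ren:PC} (together with the {\pes}/domain correspondence of~\cite{Win:ES}) --- in particular the claim that $\theta$ is an order isomorphism, which hinges on the local orders of a configuration being restrictions of those of its extensions; I would state and quote this rather than re-derive it. Everything else is routine order- and prefix-bookkeeping, and I do not anticipate a genuine obstacle.
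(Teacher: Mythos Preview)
Your proposal is correct and follows essentially the same approach as the paper: both use \cref{le:morph-to-fold} to reduce the folding check to lifting single-event transitions, and both rely on the order isomorphism between $\conf{\ES}$ and $\conf{\pr(\ES)}$ (your $\theta$ is the paper's $\hset{\cdot}$ from \cref{le:conf-in-pes}). The only cosmetic difference is that for the lifting step the paper explicitly identifies the new event as $H_x=\hist{C_2'}{x'}$ and checks that its proper causes already lie in $\mathcal C_1$, while you obtain the same conclusion more abstractly via the order isomorphism plus a cardinality count; both arguments are valid and equivalent in substance.
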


We next show that any morphism and any folding from a {\pes} to an
event structure $\ES$ factorises uniquely through the {\pes}
$\pr(\ES)$ associated with $\ES$ (categorically, $\phi_{\ES}$ is cofree over $\ES$). This will be useful to relate
foldings in $\ES$ with foldings in $\pr(\ES)$.

\begin{restatable}[cofreeness of $\phi_{\ES}$]{lemma}{lefactorfold}
  \label{le:factor-fold}
  Let $\ES$ be an event structure, let $\ES[P']$ be a {\pes} and let
  $f : \ES[P'] \to \ES$ be an event structure morphism. Then there exists
  a unique morphism $g : \ES[P'] \to \pr(\ES)$ such that
  $f = \phi_{\ES} \circ g$.
  \begin{center}
    \begin{tikzcd}[ampersand  replacement=\&, row sep=5mm]
      \pr(\ES) \arrow[r, "\phi_{\ES}" above]
      \&
      \ES\\
      \ES[P'] \arrow[ur, "f" below] \arrow[u, dotted, "g" left]
      \& {}
    \end{tikzcd}
  \end{center}
  Moreover, when $f$ is a folding then so is $g$.
\end{restatable}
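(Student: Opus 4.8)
The plan is to define $g$ directly on events of $\ES[P']$ and verify the three things the statement demands: that $g$ is a morphism, that $f = \phi_{\ES} \circ g$, and that $g$ is unique with this property; the promotion from morphism to folding is then handled separately at the end using Lemma~\ref{le:morph-to-fold}. Since $\ES[P']$ is a {\pes}, an event $p \in \ES[P']$ has a uniquely determined minimal configuration, namely its set of causes $\causes{p}$, which carries the local order inherited from causality. Because $f$ is a morphism, it is injective on $\causes{p}$ and $f(\causes{p}) \in \conf{\ES}$, with $f(p)$ a maximal element of this configuration (it is the image of the top element $p$); hence the history $\hist{f(\causes{p})}{f(p)}$ is a well-defined element of $\Hist{f(p)} \subseteq \Hist{\ES}$. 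I would set $g(p) = \hist{f(\causes{p})}{f(p)}$. This immediately gives $\phi_{\ES}(g(p)) = f(p)$ by the definition of $\phi_{\ES}$ (it sends a history in $\Hist{x}$ to $x$), so the triangle commutes, and $\lambda'(g(p)) = \lambda(f(p)) = \lambda'(p)$ gives label preservation.

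The substantive work is showing $g$ is a morphism, i.e.\ that for each $C \in \conf{\ES[P']}$ the map $g$ is injective on $C$ and $g(C) \in \conf{\pr(\ES)}$. For this I would first argue that $g$ maps $C$, with its local order, to the set of histories $\{ \hist{f(C)}{f(p)} \mid p \in C \}$ ordered by prefix — one should check that for $p \le_C q$ in $C$ we get $\causes{p} \subseteq \causes{q}$ in $\ES[P']$, $f$ restricted to $C$ is an order-isomorphism onto $f(C)$, and hence $\hist{f(C)}{f(p)} \sqsubseteq \hist{f(C)}{f(q)}$, and conversely. Injectivity of $g$ on $C$ follows since $p \mapsto f(p)$ is injective on $C$ (as $f$ is a morphism) and the history $\hist{f(C)}{f(p)}$ determines $f(p)$ as its maximum, which determines $p$. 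That $g(C)$ is a configuration of $\pr(\ES)$ — causally (prefix-)closed and consistent — follows because $f(C) \in \conf{\ES}$: the histories of all events of a single configuration are pairwise compatible (they are all prefixes of $f(C)$) hence pairwise consistent, and if $H \sqsubseteq \hist{f(C)}{f(p)}$ for some $H \in \Hist{\ES}$, then $H$ is itself a history of some event in $f(C)$, so $H = g(q)$ for the corresponding $q \in C$ (here I use that $C$, being a {\pes}-configuration, is causally closed, so every prefix of $\hist{f(C)}{f(p)}$ is realised by an event of $C$).

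For uniqueness, suppose $g' : \ES[P'] \to \pr(\ES)$ is another morphism with $\phi_{\ES} \circ g' = f$. Fix $p \in \ES[P']$. Then $g'$ maps the configuration $\causes{p}$ to a configuration of $\pr(\ES)$, which by prefix-closure must contain $g'(q)$ for every $q \le p$; since $\phi_{\ES}(g'(q)) = f(q)$, the underlying set of $g'(p)$ — a history in $\Hist{\ES}$, hence a configuration of $\ES$ whose elements are exactly the $\phi_{\ES}$-images of its prefixes in $g'(\causes{p})$ — must be $\{ f(q) \mid q \le_{\causes{p}} p \} = f(\causes{p})$, with maximum $f(p)$, i.e.\ $g'(p) = \hist{f(\causes{p})}{f(p)} = g(p)$. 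Finally, if $f$ is a folding I would invoke Lemma~\ref{le:morph-to-fold}: given $C \in \conf{\ES[P']}$ and a transition $g(C) \trans{H'} D'$ in $\pr(\ES)$, apply $\phi_{\ES}$ to get a configuration $\phi_{\ES}(D')$ extending $f(C)$ by the single event $\phi_{\ES}(H')$; since $f$ is a folding there is a transition $C \trans{p} C''$ in $\ES[P']$ with $f(C'') = \phi_{\ES}(D')$, and one checks $g(C'') = D'$ using the commuting triangle together with the fact that a configuration of $\pr(\ES)$ is determined by its set of events and the prefix order, which is reconstructed from $f(C'')$. The main obstacle I anticipate is precisely this last reconstruction step — verifying that $g(C'') = D'$ rather than merely $\phi_{\ES}(g(C'')) = \phi_{\ES}(D')$ — which requires carefully using that both are configurations of the {\pes} $\pr(\ES)$ over the same set of $\ES$-events with the order forced by prefixing; the rest is a routine unwinding of the definitions of $\Hist{-}$, morphisms, and $\phi_{\ES}$.
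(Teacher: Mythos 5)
Your proposal is correct and follows essentially the same route as the paper's proof: the same definition $g(p) = \hist{f(\causes{p})}{f(p)} = f(\causes{p})$, the same verification that $g$ maps a configuration $C$ to the set of histories $\{\hist{f(C)}{f(p)} \mid p \in C\}$ ordered by prefix, and the same use of Lemma~\ref{le:morph-to-fold} together with the commuting triangle for the folding part. The only differences are cosmetic: you spell out the uniqueness argument explicitly (the paper leaves it implicit) and you close the final step $g(C'') = D'$ via injectivity of $\phi_{\ES}$ on configurations, where the paper computes directly $g(C_2') = g(C_1') \cup \{ g(x') \} = D_1 \cup \{H\} = D_2$.
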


\begin{remark}
  \emph{\cref{le:factor-fold} means that the category $\cat{PES}$ of
    prime event structures is a coreflective subcategory of $\cat{ES}$,
    i.e., $\pr : \cat{ES} \to \cat{PES}$ can be seen as a functor,
    right adjoint to the inclusion
    $\mathbb{I} : \cat{PES} \to \cat{ES}$. Moreover, $\pr$ restricts
    to a functor on the subcategory of foldings,
    $\pr : \cat{ES_f} \to \cat{PES_f}$, where an analogous result
    holds.}
\end{remark}

We conclude that all foldings between event structures arise from
foldings of the associated {\pes}s. Given that $\cat{PES}$ is a
coreflective subcategory of $\cat{ES}$ and foldings can be seen as open
maps, this result (and also the fact that morphisms $\phi_{\ES}$ are
foldings) can be derived from~\cite[Lemma 6]{JNW:BFOM}. The appendix
gives more details on this point (and also reports a direct proof).

\begin{restatable}[folding through {\pes}s]{proposition}{prfoldespes}
  \label{pr:fold-es-pes}
  Let $\ES, \ES'$ be event structures. For all morphisms
  $f : \ES \to \ES'$ consider $\pr(f) : \pr(\ES) \to \pr(\ES')$
  defined by $\pr(f)(H) = f(H)$. Then $f$ is a folding iff $\pr(f)$ is a
  folding.
\end{restatable}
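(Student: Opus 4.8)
The plan is to prove the two implications of $f$ is a folding iff $\pr(f)$ is a folding by exploiting the cofreeness of $\phi_{\ES}$ (\cref{le:factor-fold}), the fact that $\phi_{\ES}$ and $\phi_{\ES'}$ are foldings (\cref{le:unf-es-pes}), and closure of foldings under composition (\cref{le:fold-comp}). The key observation to establish first is a naturality-type square: for a morphism $f : \ES \to \ES'$ one has $\phi_{\ES'} \circ \pr(f) = f \circ \phi_{\ES}$. This is a direct check on events: for $H \in \Hist{\ES}$, say $H \in \Hist{x}$, we have $\pr(f)(H) = f(H)$, and since $f$ is a morphism that is injective on configurations, $f(H)$ is (isomorphic to) a history of $f(x)$ in $f(\hist{C}{x})$, so $\phi_{\ES'}(f(H)) = f(x) = f(\phi_{\ES}(H))$. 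A small lemma may be needed to confirm that $f(H) \in \Hist{\ES'}$ and indeed $f(H) \in \Hist{f(x)}$, which follows from $f$ preserving configurations and the local orders in the prescribed way; this is the one place where the morphism conditions are genuinely used.

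For the ``if'' direction, suppose $\pr(f)$ is a folding. Then the composite $\phi_{\ES'} \circ \pr(f) : \pr(\ES) \to \ES'$ is a folding, being a composition of foldings (\cref{le:unf-es-pes} for $\phi_{\ES'}$ and the hypothesis, together with \cref{le:fold-comp}). By the naturality square this composite equals $f \circ \phi_{\ES}$. Now I would argue that whenever a composite $f \circ \phi_{\ES}$ is a folding and $\phi_{\ES}$ is a (surjective) folding, then $f$ itself is a folding: surjectivity of $\phi_{\ES}$ on events lets one pull back any configuration/transition of $\ES$ to one of $\pr(\ES)$, and the hhp-bisimulation property of $R_{f \circ \phi_{\ES}}$ then transfers to $R_f$. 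This ``right cancellation of foldings along surjections'' is the technical heart of the direction and I expect it to be the main obstacle, though it should follow fairly mechanically from the definitions once one tracks configurations carefully. For the ``only if'' direction, suppose $f$ is a folding. Then $f \circ \phi_{\ES} : \pr(\ES) \to \ES'$ is a folding by \cref{le:fold-comp}. Since $\pr(\ES)$ is a {\pes}, \cref{le:factor-fold} applied to the morphism $f \circ \phi_{\ES} : \pr(\ES) \to \ES'$ yields a unique morphism $g : \pr(\ES) \to \pr(\ES')$ with $f \circ \phi_{\ES} = \phi_{\ES'} \circ g$, and moreover $g$ is a folding because $f \circ \phi_{\ES}$ is. Finally I would check $g = \pr(f)$ by the uniqueness clause: $\pr(f)$ is a morphism $\pr(\ES) \to \pr(\ES')$ satisfying $\phi_{\ES'} \circ \pr(f) = f \circ \phi_{\ES}$ (the naturality square), so uniqueness forces $g = \pr(f)$, whence $\pr(f)$ is a folding.

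Alternatively, and perhaps more cleanly, I could phrase the whole argument around the naturality square plus the two cancellation facts (left cancellation is free from \cref{le:factor-fold}'s ``moreover'' clause; right cancellation along the surjective folding $\phi_{\ES}$ is the lemma to prove), avoiding any appeal to open maps. The open-map route via~\cite[Lemma 6]{JNW:BFOM} is mentioned in the surrounding text as an alternative; I would keep the direct proof self-contained and relegate the open-map perspective to a remark. The only genuinely new ingredient beyond cited lemmas is the right-cancellation property of foldings along $\phi_{\ES}$, so I would isolate it as the first step and spend the bulk of the care there.
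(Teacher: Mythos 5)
Your proposal is correct and is essentially the paper's own argument: the same commuting square $\phi_{\ES'} \circ \pr(f) = f \circ \phi_{\ES}$, closure of foldings under composition, cofreeness (\cref{le:factor-fold}) with its uniqueness clause for the only-if direction, and cancellation of the folding $\phi_{\ES}$ on the right for the if direction. The ``right cancellation'' lemma you isolate as the technical heart is exactly the paper's \cref{le:factor-morph} (already proved in the appendix and also used for \cref{pr:fold-join}); only note that pulling configurations of $\ES$ back to $\pr(\ES)$ relies on $\phi_{\ES}$ being a folding (or on \cref{le:conf-in-pes}), not on mere surjectivity on events.
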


\section{Foldings for Prime and Asymmetric  Event Structures}
\label{se:folding-criteria}

In this section we study foldings on specific subclasses of poset
event structures, providing suitable characterisations. Motivated by
the fact that foldings on general poset event structures always arise
from foldings of the corresponding canonical {\pes}s we first and
mainly focus on {\pes}s. Then we discuss how this can be extended to
asymmetric event structures (and only give a hint to
flow and  bundle event structures). We will see that while {\pes}s admit a least
folding, the other classes of event structures do not.

\subsection{Folding Prime Event Structures}

Since foldings are special morphisms, we first provide a
characterisation of {\pes} morphisms.

\begin{restatable}[{\pes} morphisms]{lemma}{lepesmorphism}
  \label{le:pes-morphism}
  Let $\ES[P]$ and $\ES[P']$ be {\pes}s and let $f : P \to P'$ be a
  function on the underlying sets of events. Then $f$ is a morphism
  iff for all $x, y \in \ES[P]$
  \begin{enumerate}
  \item \label{le:pes-morphism:1}
    $\lambda'(f(x)) = \lambda(x)$;
    
  \item \label{le:pes-morphism:2}
    $f(\causes{x}) = \causes{f(x)}$; namely (a) for all $x' \in \ES[P']$, if $x' \leq f(y)$
      there exists $x \in \ES[P]$ such that $x \leq y$ and $f(x) = x'$
      (b) if $x \leq y$ then $f(x) \leq f(y)$;
      
    \item \label{le:pes-morphism:3}
      (a) if $f(x) = f(y)$ and $x \neq y$ then $x \# y$ and
      (b) if $f(x) \# f(y)$ then $x \# y$.    
  \end{enumerate}
\end{restatable}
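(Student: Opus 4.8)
The plan is to prove the two implications separately, relying throughout on the description of {\pes} configurations: for a {\pes}, the configurations are exactly the finite consistent causally closed sets of events, each carrying the order obtained by restricting causality. Consequently a poset event structure configuration $f(C)$ in the sense of \cref{de:es-morphism} belongs to $\conf{\ES[P']}$ precisely when its underlying set $\{f(x)\mid x\in C\}$ is consistent and causally closed \emph{and} the transported order agrees with restricted causality, i.e. for all $x,y\in C$ one has $x\leq y$ iff $f(x)\leq f(y)$ (this last equivalence makes sense because a morphism is injective on $C$). Two small structural observations will be used repeatedly: for every $x\in\ES[P]$ the set $\causes{x}$ is a configuration (consistency follows from irreflexivity and hereditarity of $\#$), and whenever $x\cons y$ the set $\causes{x}\cup\causes{y}$ is a configuration as well (again by hereditarity of conflict, a conflict between a cause of $x$ and a cause of $y$ would propagate to $x\#y$).

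For the left-to-right implication, assume $f$ is a morphism. Condition~(1) is just $\lambda=\lambda'\circ f$. For condition~(2), apply $f$ to the configuration $\causes{y}$: since $f(\causes{y})\in\conf{\ES[P']}$ and $f(y)\in f(\causes{y})$, causal closure of $f(\causes{y})$ gives $\causes{f(y)}\subseteq f(\causes{y})$, while the order-matching property recalled above (for $C=\causes{y}$) yields $f(z)\leq f(y)$ for every $z\leq y$, hence $f(\causes{y})\subseteq\causes{f(y)}$; splitting the resulting equality $f(\causes{y})=\causes{f(y)}$ into the two inclusions gives exactly (2b) and (2a). For condition~(3), suppose towards a contradiction that $x\cons y$: then $C=\causes{x}\cup\causes{y}$ is a configuration containing both $x$ and $y$; injectivity of $f$ on $C$ rules out $f(x)=f(y)$ with $x\neq y$, and consistency of $f(C)\in\conf{\ES[P']}$ rules out $f(x)\# f(y)$. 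This yields (3a) and (3b).

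For the right-to-left implication, assume (1)--(3) and fix $C\in\conf{\ES[P]}$. Injectivity of $f$ on $C$ is immediate: two distinct $x,y\in C$ would be in conflict by (3a), contradicting consistency of $C$. It remains to check $f(C)\in\conf{\ES[P']}$ with the correct order. \emph{Consistency}: if $f(x)\# f(y)$ for $x,y\in C$ then $x\# y$ by (3b), contradicting consistency of $C$. \emph{Causal closure}: if $x'\leq f(x)$ with $x\in C$, then by (2a) $x'=f(z)$ for some $z\leq x$, and $z\in C$ because $C$ is causally closed, so $x'\in f(C)$. \emph{Order}: for $x,y\in C$, if $x\leq y$ then $f(x)\leq f(y)$ by (2b); conversely if $f(x)\leq f(y)$ then by (2a) $f(x)=f(z)$ for some $z\leq y$, with $z\in C$ by causal closure, hence $z=x$ by injectivity of $f$ on $C$, so $x\leq y$. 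Thus the transported order on $f(C)$ coincides with the restriction of causality, $f(C)$ is a {\pes} configuration, and $f$ is a morphism.

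The only genuinely delicate point is the order-matching property underlying both directions: the abstract poset $f(C)$ of \cref{de:es-morphism} carries the order transported from $C$, and for $f(C)$ to be a \emph{bona fide} {\pes} configuration this transported order must agree with causality restricted to $f(C)$. This is exactly what forces condition~(2) to be a two-sided implication (both (2a) and (2b)) rather than mere monotonicity of $f$, and one must keep track of the fact that ``$f(\causes{x})=\causes{f(x)}$ for all $x$'' is equivalent to the conjunction of (2a) and (2b). Everything else is a routine unfolding of the definitions of {\pes} configuration and of morphism.
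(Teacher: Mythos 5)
Your proof is correct and follows essentially the same route as the paper: both directions hinge on the facts that $\causes{x}$ (and, for consistent pairs, $\causes{x}\cup\causes{y}$) is a configuration and that membership of $f(C)$ in $\conf{\ES[P']}$ forces the transported order to coincide with restricted causality. The only divergence is that the paper obtains the forward direction of condition (3) from its general precedence lemma (\cref{le:morphism-properties} in the appendix) instantiated to {\pes}s, whereas you argue it directly via injectivity and consistency of $f(\causes{x}\cup\causes{y})$ — an equally valid, more self-contained argument — and you are somewhat more explicit than the paper about why, in the converse direction, the transported order on $f(C)$ agrees with restricted causality (the step the paper compresses into ``$C \simeq f(C)$'').
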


These are the standard conditions characterising (total) {\pes}
morphisms (see, e.g.,~\cite{Win:ES}), with the addition of condition
(2b) that is imposed to ensure that configurations are mapped to
isomorphic configurations, as required by the notion of (strong)
morphism (\cref{de:es-morphism}).

We know that not all {\pes} morphisms are foldings.
We next identify some additional conditions characterising those
morphisms which are foldings.

\begin{restatable}[{\pes} foldings]{proposition}{prpesfolding}
  \label{pr:pes-folding}
  Let $\ES[P]$ and $\ES[P']$ be {\pes}s and let
  $f : \ES[P] \to \ES[P']$ be a morphism. Then $f$ is a folding if and
  only if it is surjective  and  for all $X, Y \subseteq \ES[P]$, $x, y \in \ES[P]$, $y' \in \ES[P']$
  \begin{enumerate}
  \item \label{pr:pes-folding:1}
    if $x \#^\forall f^{-1}(y')$ then $f(x) \# y'$;
    
  \item \label{pr:pes-folding:2}
    if $\setcons{(X \cup \{ x\})}$, $\setcons{(Y \cup \{ y\})}$,
    $\setcons{(X \cup Y)}$ and $f(x) = f(y)$ then there exists
    $z \in \ES[P]$ such that $f(z)=f(x)$ and
    $\setcons{(X \cup Y \cup \{ z\})}$.
  \end{enumerate}
\end{restatable}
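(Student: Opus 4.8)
The plan is to prove the two implications separately. Throughout I use the standard {\pes} facts that for every event $e$ the set $\causes{e}$ is a finite configuration and, more generally, that the downward closure of a consistent set is consistent (both consequences of hereditariness of conflict, together with the observation that no event conflicts with one of its own causes), together with the morphism characterisation of \cref{le:pes-morphism}. A recurring tool for the ``only if'' direction is the following \emph{lifting property} of a folding $f$: since $R_f$ is a hp-bisimulation and, by definition, contains $(D,f_{|D},f(D))$ for every configuration $D$ of $\ES[P]$, whenever $f(D)\sqsubseteq C'$ for $C'\in\conf{\ES[P']}$ there is a configuration $\bar D\supseteq D$ of $\ES[P]$ with $f(\bar D)=C'$; this is obtained by iterating clause~(2) of \cref{de:hp-bisim} along a linearisation of $C'\setminus f(D)$, starting from the triple $(D,f_{|D},f(D))\in R_f$.

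\emph{``Only if''.} Assume $f$ is a folding. For surjectivity, given $x'$ apply the lifting property with $D=\emptyset$ and $C'=\causes{x'}$, obtaining $\bar D$ with $f(\bar D)=\causes{x'}\ni x'$. For condition~(1), suppose for contradiction that $x\#^\forall f^{-1}(y')$ while $f(x)\cons y'$; then $\causes{f(x)}\cup\causes{y'}$ is consistent (if $a\leq f(x)$, $b\leq y'$ and $a\#b$, hereditariness forces $f(x)\#y'$), hence a configuration $C'$, and $f(\causes{x})=\causes{f(x)}\sqsubseteq C'$, so the lifting property with $D=\causes{x}$ gives $\bar D\supseteq\causes{x}\ni x$ with $f(\bar D)=C'\ni y'$; any $w\in\bar D$ with $f(w)=y'$ then lies in the configuration $\bar D$ together with $x$, so $x\cons w$, contradicting the hypothesis. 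For condition~(2), observe that morphisms reflect conflict (clause~(3b) of \cref{le:pes-morphism}), so $u\cons v$ implies $f(u)\cons f(v)$; using the three consistency hypotheses and $f(x)=f(y)=x'$ this makes $f(X)\cup f(Y)\cup\{x'\}$ consistent, so its downward closure $C^*$ is a configuration of $\ES[P']$. Letting $C_{XY}$ be the downward closure of $X\cup Y$ in $\ES[P]$, we have $f(C_{XY})\sqsubseteq C^*$, so the lifting property yields $\bar D\supseteq C_{XY}\supseteq X\cup Y$ with $f(\bar D)=C^*\ni x'$; any $z\in\bar D$ with $f(z)=x'$ satisfies $\setcons{(X\cup Y\cup\{z\})}$, as required.

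\emph{``If''.} Assume $f$ is surjective and satisfies (1) and (2). By \cref{le:morph-to-fold} it suffices, given a transition $f(C)\trans{x'}C'$ (so $C'=f(C)\cup\{x'\}$ with $x'\notin f(C)$), to find $x\in f^{-1}(x')$ with $C\cup\{x\}\in\conf{\ES[P]}$; then $x\notin C$ and $f(C\cup\{x\})=C'$ are automatic. The crux is the observation that \emph{consistency with a causally saturated set forces causal closure}: if $z\in\ES[P]$, $C$ is a configuration, $z$ is consistent with every element of $C$, and every cause of $f(z)$ other than $f(z)$ itself has a preimage in $C$, then $\causes{z}\setminus\{z\}\subseteq C$; indeed, for $y<z$ we have $f(y)\in f(\causes{z})=\causes{f(z)}$ with $f(y)\neq f(z)$ (else $y\#z$ by clause~(3a), impossible since $y<z$), so $f(y)$ has a preimage $c\in C$, and $c\neq y$ would give $c\#y$ (clause~(3a)) hence $c\#z$ (hereditariness), contradicting $c\cons z$; thus $y=c\in C$. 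Now, since $C'$ is a configuration, $\causes{x'}\setminus\{x'\}\subseteq f(C)$ and $x'\cons f(c)$ for every $c\in C$. Enumerate $C=\{c_1,\dots,c_N\}$ arbitrarily and build $t_0,\dots,t_N\in f^{-1}(x')$ with $\setcons{(\{c_1,\dots,c_i\}\cup\{t_i\})}$: take $t_0$ any preimage of $x'$ (surjectivity); given $t_i$, condition~(1) applied to $c_{i+1}$ (using $f(c_{i+1})\cons x'$) gives $w\in f^{-1}(x')$ with $c_{i+1}\cons w$, and condition~(2) with $X=\{c_1,\dots,c_i\}$, $x=t_i$, $Y=\{c_{i+1}\}$, $y=w$ yields $t_{i+1}$ --- the three premises of~(2) hold because all the sets involved sit inside the configuration $C$. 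Taking $x:=t_N$ gives $\setcons{(C\cup\{x\})}$, and since $C$ is a configuration containing a preimage of every element of $\causes{x'}\setminus\{x'\}=\causes{f(x)}\setminus\{f(x)\}$, the observation yields $\causes{x}\setminus\{x\}\subseteq C$; hence $C\cup\{x\}$ is a configuration and we are done.

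\emph{Main obstacle.} The delicate point is the ``if'' direction: one must produce a \emph{single} preimage of $x'$ that is simultaneously consistent with all of $C$ \emph{and} has all its causes inside $C$. Condition~(2) only amalgamates consistency witnesses and is silent about causes, so the cause constraint cannot be read off it directly; the resolution is the observation above, which makes the cause constraint a free consequence of consistency precisely because $C$ is ``causally saturated'' for $x'$ (which holds since $C'$ is a configuration). Everything else --- that \cref{le:morph-to-fold} applies, that the premises of condition~(2) hold at each amalgamation step, and that the various downward closures are configurations --- is routine bookkeeping with the {\pes} axioms and \cref{le:pes-morphism}.
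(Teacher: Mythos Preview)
Your proof is correct and follows essentially the same approach as the paper's: the ``only if'' direction lifts configurations through the bisimulation $R_f$, and the ``if'' direction inductively amalgamates consistency witnesses via conditions (1) and (2), then argues causal closure from injectivity of $f$ on configurations. Your packaging is slightly cleaner --- the explicit ``lifting property'' lets you handle condition~(1) without the paper's case split on whether $y'\leq f(x)$, and condition~(2) without the paper's case split on whether $x\in\causes{X\cup Y}$; likewise, isolating the ``causal saturation'' observation in the ``if'' direction makes the argument that $\causes{x}\setminus\{x\}\subseteq C$ more transparent than the paper's phrasing (which argues $X=\{x\}$ a posteriori) --- but the underlying ideas are identical.
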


The notion of folding on {\pes}s turns out to be closely related
to that of abstraction homomorphism for {\pes}s
introduced in~\cite{Cas:phd} for similar purposes. More precisely, abstraction homomorphisms can be
characterised as those {\pes} morphisms additionally satisying
condition (\ref{pr:pes-folding:1}) of \cref{pr:pes-folding}, while they do not necessarily satisfy condition (\ref{pr:pes-folding:2}). Their more liberal definition is explained by the fact that they are designed to work on a subclass of structured {\pes}s (see \cref{le:abs-fold} in the Appendix).

We finally show what the conditions characterising foldings
look like when transferred to equivalences.

\begin{restatable}[folding equivalences for {\pes}s]{corollary}{cofoldeqpes}
  \label{co:fold-eq-pes}
  Let $\ES[P]$ be a {\pes} and let $\equiv$ be an equivalence on
  $\ES[P]$. Then $\equiv$ is a folding equivalence in
  $\foldeq{\ES[P]}$ iff for all $x, y \in \ES[P]$, if $x \equiv y$
  then
  \begin{enumerate}
  \item $\lambda(x) = \lambda(y)$;\ \
  \item $\eqclass{\causes{x}}{\equiv} = \eqclass{\causes{y}}{\equiv}$;\ \
  \item $x \# y$.
  \end{enumerate}
  Moreover, for all
  $x, y \in \ES[P]$, $X, Y \subseteq \ES[P]$
  \begin{enumerate}
    \setcounter{enumi}{3}
  \item if $x \#^\forall \eqclass{y}{\equiv}$ then $\eqclass{x}{\equiv} \#^\forall \eqclass{y}{\equiv}$;
  \item if $\setcons{(X \cup \{ x\})}$, $\setcons{(Y \cup \{ y\})}$,
    $\setcons{(X \cup Y)}$ there exists
    $z \in \eqclass{x}{\equiv}$ such that
    $\setcons{(X \cup Y \cup \{ z\})}$.
  \end{enumerate}
\end{restatable}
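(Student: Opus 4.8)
The plan is to read Corollary~\ref{co:fold-eq-pes} as the translation of Lemma~\ref{le:pes-morphism} and Proposition~\ref{pr:pes-folding} into the language of equivalences. The bridge is Lemma~\ref{le:fold-is-equivalence}: for a folding $f:\ES[P]\to\ES[P']$ between {\pes}s we may identify $\ES[P']$ with $\quotient{\ES[P]}{\equiv_f}$, after which the event $f(z)$ corresponds to the class $\eqclass{z}{\equiv_f}$, while the fibre $f^{-1}(f(y))$ is exactly $\eqclass{y}{\equiv_f}$ regarded as a subset of $\ES[P]$. As in Proposition~\ref{pr:pes-folding}, ``folding'' is taken between {\pes}s here; equivalently, condition~(2) below is precisely what keeps the quotient a {\pes}.

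For the implication ``$\equiv\in\foldeq{\ES[P]}\Rightarrow$ (1)--(5)'' I would take a folding $f:\ES[P]\to\ES[P']$ with $\equiv_f=\equiv$. Being a {\pes} morphism, $f$ satisfies conditions (1)--(3) of Lemma~\ref{le:pes-morphism}, and being a folding it satisfies, by Proposition~\ref{pr:pes-folding}, surjectivity together with conditions (1)--(2) there. It then remains to unwind the dictionary above: (1) is $\lambda'(f(x))=\lambda(x)$ read through $f(x)=f(y)$; (2) is $f(\causes{x})=\causes{f(x)}=\causes{f(y)}=f(\causes{y})$ read through the bijection $f(z)\leftrightarrow\eqclass{z}{\equiv}$; and (3) is condition (3a) of Lemma~\ref{le:pes-morphism} for $x\neq y$. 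Condition (4) comes from Proposition~\ref{pr:pes-folding}(1) applied with $y'=f(y)$ (so $\{x\}\times\eqclass{y}{\equiv}\subseteq\#$ yields $f(x)\#f(y)$) followed by condition (3b) of Lemma~\ref{le:pes-morphism} applied to arbitrary representatives. Condition (5) is Proposition~\ref{pr:pes-folding}(2) after substituting $f(x)=f(y)\leftrightarrow x\equiv y$ and $\{z:f(z)=f(x)\}\leftrightarrow\eqclass{x}{\equiv}$; here (5) is to be read with the standing hypothesis $x\equiv y$, matching Proposition~\ref{pr:pes-folding}(2).

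For the converse I would build the target {\pes} $\ES[P']$ directly on the set $\ES[P]/{\equiv}$ of classes: set $\lambda'(\eqclass{x}{\equiv})=\lambda(x)$ (well defined by (1)); declare $\eqclass{z}{\equiv}\leq'\eqclass{x}{\equiv}$ iff $\eqclass{z}{\equiv}\in\eqclass{\causes{x}}{\equiv}$ (well defined by (2), i.e.\ independent of the chosen representative $x$); and declare $\eqclass{x}{\equiv}\#'\eqclass{y}{\equiv}$ iff $\eqclass{x}{\equiv}\times\eqclass{y}{\equiv}\subseteq\#$. Then I verify the {\pes} axioms: $\leq'$ is reflexive and transitive by (2); $\causes{\eqclass{x}{\equiv}}=\eqclass{\causes{x}}{\equiv}$ is finite; $\#'$ is irreflexive and symmetric because $\#$ is; and $\#'$ is hereditary, since given $\eqclass{x}{\equiv}\#'\eqclass{y}{\equiv}$ and $\eqclass{y}{\equiv}\leq'\eqclass{z}{\equiv}$, picking $w\equiv y$ with $w\leq z$ and using heredity of $\#$ in $\ES[P]$ gives $x'\#z$ for every $x'\equiv x$, and then (4) spreads the conflict to every representative of $\eqclass{z}{\equiv}$. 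Antisymmetry of $\leq'$ is where (3) is \emph{essential}: from $\eqclass{x}{\equiv}\leq'\eqclass{y}{\equiv}$ pick $w\equiv x$ with $w\leq y$, from $\eqclass{y}{\equiv}\leq'\eqclass{x}{\equiv}$ pick $v\equiv y$ with $v\leq x$; applying (2) to $v\equiv y$ together with $w\leq y$ yields some $v_1\leq v\leq x$ with $v_1\equiv x$, and if $v_1\neq x$ then $v_1<x$ and $v_1\#x$ by (3), contradicting $v_1\leq x$ and irreflexivity of $\#$ via heredity; hence $v_1=x$, so $v=x$ and $\eqclass{x}{\equiv}=\eqclass{y}{\equiv}$. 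With $\ES[P']$ in hand, the quotient map $f(x)=\eqclass{x}{\equiv}$ is a {\pes} morphism (its labelling, cause-reflection and conflict-reflection requirements, Lemma~\ref{le:pes-morphism}, are immediate from (1), (2), (3)) and a folding by Proposition~\ref{pr:pes-folding} (surjectivity is clear, its condition (1) holds by (4) and its condition (2) by (5)); since $\equiv_f=\equiv$, this gives $\equiv\in\foldeq{\ES[P]}$.

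The step I expect to be the real work is the verification that this candidate quotient is a genuine {\pes} --- concretely, antisymmetry of $\leq'$ and heredity of $\#'$, the two points where conditions (3) and (4) respectively do what pointwise reasoning cannot; everything else is bookkeeping along the dictionary $f(z)\leftrightarrow\eqclass{z}{\equiv}$, $f^{-1}(f(y))\leftrightarrow\eqclass{y}{\equiv}$. As an alternative to constructing $\ES[P']$ by hand, one could instead take the quotient poset event structure of Definition~\ref{de:morph-quotient} and show the quotient map is a folding via Lemma~\ref{le:morph-to-fold}, but reusing Proposition~\ref{pr:pes-folding} is shorter and keeps the symmetry with the forward direction.
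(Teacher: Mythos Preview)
Your proposal is correct and follows essentially the same route as the paper: derive (1)--(5) from Lemma~\ref{le:pes-morphism} and Proposition~\ref{pr:pes-folding} via Lemma~\ref{le:fold-is-equivalence}, and for the converse build the quotient {\pes} on $E/{\equiv}$ with $\leq'$ given by existential lifting and $\#'$ by universal lifting, then check that the quotient map satisfies the hypotheses of those same two results. The only cosmetic difference is that the paper packages the well-definedness checks into a single ``chain-lifting'' observation ($[x]\leq'[y]\leq'[z]$ implies the existence of representatives $x'\leq y'\leq z'$, proved from condition~(2)) and then says the order and conflict axioms are ``inherited'', whereas you spell out antisymmetry via~(3) and heredity via~(4) explicitly; your account is in fact more transparent on these two points.
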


For instance, in Fig.~\ref{fi:pes}, consider the equivalence $\equiv_{01}$ over $\ES[P_0]$ such that $a_1 \equiv_{01} a_2$. This produces $\ES[P_1]$ as quotient. This is not a folding equivalence since condition (4) fails: $a_1 \#^\forall \eqclass{b_2}{\equiv_{01}}$, but $\neg (a_2 \# b_2)$ and thus $\neg(\eqclass{a_1}{\equiv_{01}} \#^\forall \eqclass{b_2}{\equiv_{01}})$. Instead, the equivalence $\equiv_{02}$ over $\ES[P_0]$ such that $a_1 \equiv_{02} a_2$ and $b_1 \equiv_{02} b_2$, producing $\ES[P_2]$ as quotient, satisfies all five conditions. 

When {\pes}s are finite, the result above suggests a possible
way of identifying foldings: one can pair candidate events to be
folded on the basis of conditions (1)-(3) and then try to extend the
sets with condition (4)-(5) when possible. The procedure can be
inefficient due to the global flavor of the conditions. This will be further
discussed in the conclusions.

We know from \cref{pr:fold-join} that all event structures admit a ``maximally folded'' version. We next observe that the same result holds in the class of  {\pes}s, i.e., that for each {\pes} there is a uniquely determined minimal
quotient.

\begin{restatable}[joining foldings on {\pes}'s]{lemma}{lefoldjoinpes}
  \label{le:fold-join-pes}
  Let $\ES[P], \ES[P'], \ES[P'']$ be {\pes}s and let
  $f' : \ES[P] \to \ES[P']$, $f'' : \ES[P] \to \ES[P'']$ be
  foldings. Define $\ES'''$ along with $g' : \ES[P'] \to \ES'''$ and
  $g'': \ES[P''] \to \ES'''$ as in \cref{pr:fold-join}. Then
  $\ES'''$ is a {\pes}.
\end{restatable}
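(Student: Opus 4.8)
The statement to prove is that the pushout $\ES'''$ of two foldings $f' : \ES[P] \to \ES[P']$ and $f'' : \ES[P] \to \ES[P'']$ from a {\pes} $\ES[P]$ — constructed as the quotient $\quotient{\ES[P]}{\equiv}$ with $\equiv$ the transitive closure of $\equiv_{f'} \cup \equiv_{f''}$, as in \cref{pr:fold-join} — is again a {\pes}. By \cref{pr:fold-join} we already know $\ES'''$ is a well-defined event structure and that $g', g''$ are foldings; the only thing missing is to check that the resulting poset event structure is (isomorphic to) one arising from a {\pes}, i.e. its configurations are downward-closed consistent subsets of a set equipped with a partial order and a hereditary symmetric conflict.

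**Key steps.** First I would observe that a poset event structure is (isomorphic to) a {\pes} exactly when the prefix order on configurations is set inclusion and the family of configurations is closed under bounded unions and arbitrary downward-set intersections in the appropriate sense — more concretely, I would use the characterisation that an event structure comes from a {\pes} iff for every event $z$ the set of histories $\Hist{z}$ is a singleton (equivalently, every event has a unique minimal configuration containing it). So the plan is: (i) recall that in the {\pes} $\ES[P]$ every event $x$ has a unique history $\causes{x}$; (ii) show that in $\ES''' = \quotient{\ES[P]}{\equiv}$ every equivalence class $\eqclass{x}{\equiv}$ still has a unique history. For the latter, I would lean on \cref{co:fold-eq-pes}: since $\equiv = \equiv_{g}$ for the composite folding $g : \ES[P] \to \ES'''$, and foldings compose (Remark after \cref{le:morph-to-fold}, i.e. \cref{le:fold-comp}), the equivalence $\equiv$ is a folding equivalence on $\ES[P]$, so by condition (2) of \cref{co:fold-eq-pes} we have $\eqclass{\causes{x}}{\equiv} = \eqclass{\causes{y}}{\equiv}$ whenever $x \equiv y$. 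This says precisely that the downward closure of $\eqclass{x}{\equiv}$ in the quotient is independent of the representative $x$, hence is the unique history of $\eqclass{x}{\equiv}$. Then I would define causality on $\ES'''$ by $\eqclass{x}{\equiv} \leq \eqclass{y}{\equiv}$ iff $\eqclass{x}{\equiv} \in \eqclass{\causes{y}}{\equiv}$ and conflict by $\eqclass{x}{\equiv} \# \eqclass{y}{\equiv}$ iff every pair of representatives is in conflict (using condition (4) of \cref{co:fold-eq-pes} to see this matches inconsistency of the histories), and verify that $\leq$ is a partial order with finite pasts and that $\#$ is irreflexive (by condition (3), $x \equiv y$ forces $x \# y$, so no class is self-consistent only if it is a singleton — wait, rather: irreflexivity holds because a class conflicts with itself only when it has $\geq 2$ elements, which is consistent with $\#$ being defined on $\ES'''$ as genuine inconsistency) and hereditary. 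Finally I would check that the configurations of $\ES'''$ are exactly the consistent, causally closed finite subsets, which follows from the fact that $g$ is a folding together with the order-theoretic correspondence between {\pes}s and domains used in \cref{de:pes-es}.

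**Alternative cleaner route.** Actually the slickest argument is probably via \cref{pr:fold-es-pes} and \cref{le:unf-es-pes}: since $\ES[P]$ is already a {\pes}, its canonical {\pes} $\pr(\ES[P])$ folds onto it, and we have foldings $\pr(\ES[P]) \to \ES[P] \to \ES[P']$ and similarly to $\ES[P'']$, so by \cref{pr:fold-join} applied at the level of $\pr(\ES[P])$ — combined with \cref{le:factor-fold} which says foldings from {\pes}s factor through canonical {\pes}s — the pushout $\ES'''$ receives a folding from $\pr(\ES[P])$; one then argues that this forces $\ES'''$ to be prime because the unique-history property is inherited along such a folding, again invoking condition (2) of \cref{co:fold-eq-pes}. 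I would present whichever of these two is shorter; the first is more self-contained.

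**Main obstacle.** The delicate point is verifying that the equivalence classes in the quotient genuinely have unique histories — i.e. that $\equiv$ does not merge two events of $\ES[P]$ with "incompatible" causal pasts in a way that would create a disjunctive cause in $\ES'''$. This is exactly what condition (2) of \cref{co:fold-eq-pes} rules out, but one must be careful that it is \emph{this} $\equiv$ (the transitive closure) to which the corollary applies; that requires first establishing $\equiv \in \foldeq{\ES[P]}$, which comes from \cref{pr:fold-join} (the composite $\ES[P] \to \ES'''$ is a folding) plus the identification of folding morphisms with folding equivalences (\cref{le:fold-is-equivalence}, \cref{de:foldeq}). Once $\equiv$ is known to be a folding equivalence, the remaining verifications that $\ES'''$ satisfies \cref{def:pes} are routine bookkeeping about partial orders and hereditary conflict, using the conditions of \cref{co:fold-eq-pes} item by item.
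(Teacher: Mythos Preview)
Your unique-history strategy is sound in spirit, but the appeal to \cref{co:fold-eq-pes} is circular. You want to apply the forward direction of that corollary to the composite folding $g = g' \circ f' : \ES[P] \to \ES'''$ in order to extract condition~(2), namely $\eqclass{\causes{x}}{\equiv} = \eqclass{\causes{y}}{\equiv}$ whenever $x \equiv y$. But the forward direction of \cref{co:fold-eq-pes} is proved in the paper via \cref{le:pes-morphism} and \cref{pr:pes-folding}, both of which are stated for morphisms between {\pes}s; so invoking it for $g$ presupposes that $\ES'''$ is a {\pes}, which is exactly the claim at stake. This is not merely a gap in the proof of the corollary that one could patch: condition~(2) genuinely fails for folding equivalences whose target is not prime. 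For instance, take $\phi_{\ES} : \pr(\ES) \to \ES$ for the non-prime $\ES$ of Fig.~\ref{fi:es}; the events $\{c\}$ and $\{a \leq c\}$ of $\pr(\ES)$ are identified by $\phi_{\ES}$, yet their causes have different $\equiv_{\phi_{\ES}}$-images.

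The repair is straightforward and stays within your outline: do not invoke the corollary for $\equiv$, but instead observe that $\equiv_{f'}$ and $\equiv_{f''}$ each satisfy condition~(2) (their targets $\ES[P']$, $\ES[P'']$ \emph{are} {\pes}s, so \cref{le:pes-morphism}(\ref{le:pes-morphism:2}) gives $f'(\causes{x}) = \causes{f'(x)}$, whence $x \equiv_{f'} y$ implies $\eqclass{\causes{x}}{\equiv_{f'}} = \eqclass{\causes{y}}{\equiv_{f'}}$, and hence the same with $\equiv$ in place of $\equiv_{f'}$). A chain argument along a finite $\equiv_{f'}/\equiv_{f''}$-path from $x$ to $y$ then yields condition~(2) for $\equiv$ itself. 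From there your argument goes through: every event of $\ES'''$ has a unique history, so $\phi_{\ES'''}$ is a bijective folding, hence an isomorphism, and $\ES''' \simeq \pr(\ES''')$ is prime.

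For comparison, the paper's proof takes a purely categorical route and does not touch condition~(2) at all. It uses that $\cat{PES_f}$ is coreflective in $\cat{ES_f}$ (\cref{le:factor-fold}): the foldings $g', g''$ factor through $\phi_{\ES'''}$ as foldings $\ES[P'], \ES[P''] \to \pr(\ES''')$, and the pushout property (in $\cat{ES_f}$, so the mediating map is again a folding) produces a folding $h : \ES''' \to \pr(\ES''')$ with $\phi_{\ES'''} \circ h = \mathit{id}$; since foldings are surjective and $h$ is now also injective, $h$ and hence $\phi_{\ES'''}$ are isomorphisms. Your (repaired) argument is more hands-on and avoids the categorical machinery; the paper's is shorter once the coreflection is in place.
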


\begin{remark}
  \emph{\cref{le:fold-join-pes} is a consequence of the fact that the
    subcategory $\cat{PES_f}$ is a coreflective subcategory of
    $\cat{ES_f}$ and thus it is closed under pushouts.}
\end{remark} 

\subsection{Folding Asymmetric Event Structures}%

We know that foldings on all poset event structures arise from foldings on
the corresponding canonical {\pes}s. Still, for theoretical purposes
and for efficiency reasons, a direct approach, not requiring the
generation of the associated {\pes}, can be of interest. Here we
explicitly discuss the case of asymmetric event structures.
This generalises the results in~\cite{ABG:RESHPB} that identify conditions which are only sufficient and apply to a subclass of foldings (the so-called called elementary foldings, merging a single set of events).
Note also that, despite the fact that in this paper we work in a
slightly different framework, we continue to have that, as observed
in~\cite{ABG:RESHPB}, {\aes}s (and also {\fes}s) do not admit a
unique minimal quotient in general.

We first characterise morphisms in the sense of \cref{de:es-morphism} on {\aes}s.

\begin{restatable}[{\aes} morphisms]{lemma}{leaesmorphism}
  \label{le:aes-morphism}
  Let $\ES[A]$ and $\ES[A']$ be {\aes}s and let
  $f : A \to A'$ be a function on the underlying sets of events. Then $f$ is a morphism if and
  only if for all $x, y \in \ES[A]$, $x \neq y$
  \begin{enumerate}
  \item \label{le:aes-morphism:1}
    $\lambda(f(x)) = \lambda(x)$;

  \item \label{le:aes-morphism:2}
    $\causes{f(x)} \subseteq f(\causes{x})$;
        
  \item \label{le:aes-morphism:3}
    (a) if $f(x) \ac f(y)$ then $x \ac y$ and (b) if $x \ac y$ and $\neg (y \ac x)$ then $f(x) \ac f(y)$;
  \item \label{le:aes-morphism:4}
    if $f(x) = f(y)$ then $x \ac y$.
  \end{enumerate}
\end{restatable}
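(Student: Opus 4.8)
The plan is to prove the two implications of the iff separately, unpacking \cref{de:es-morphism} for the special case of {\aes}s, using the concrete description of {\aes}-configurations and of their local and prefix orders given just before the statement: for a configuration $C$, $\leq_C = (\ac \cap (C \times C))^*$, and $C_1 \sqsubseteq C_2$ iff $C_1 \subseteq C_2$ and no event of $C_2 \setminus C_1$ is in asymmetric conflict with an event of $C_1$.

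\medskip
\noindent\textbf{($\Rightarrow$) From morphism to the four conditions.}
Assume $f : \ES[A] \to \ES[A']$ is a morphism. Condition (1) is just the labelling requirement $\lambda = \lambda' \circ f$. For conditions (2)--(4) the idea is to feed carefully chosen configurations through $f$ and exploit injectivity on configurations plus the fact that $f(C)$ must be a configuration with the induced order. For (2): pick $C = \causes{x}$, which is a configuration (causally closed, and $\ac$ is acyclic on it by \cref{de:AES}(3)); then $f(\causes{x})$ is a configuration containing $f(x)$, hence causally closed, so $\causes{f(x)} \subseteq f(\causes{x})$. For (4): if $f(x)=f(y)$ with $x \neq y$, no configuration can contain both (injectivity of $f$ on configurations), so $\{x,y\}$ cannot be extended to a configuration; using that $\causes{x} \cup \causes{y} \cup \{x,y\}$ would be causally closed, this must fail acyclicity of $\ac$, and the only possible new cycle through both $x$ and $y$ gives (via \cref{de:AES}(2),(4)) $x \ac y$. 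For (3a): if $f(x) \ac f(y)$ and $\neg(x \ac y)$, take a configuration $C$ witnessing that $x,y$ can coexist with neither ordered after the other in a bad way — more precisely build $C \supseteq \causes{x}\cup\causes{y}\cup\{x,y\}$ (possible since the only obstruction, a cycle, is excluded once $\neg(x\ac y)$ and we also get $\neg(y\ac x)$ is not needed here) — then in $f(C)$ we would need the induced order to be a genuine partial order, but $f(x) \ac f(y)$ forces $f(x) \leq_{f(C)} f(y)$ while nothing forces the reverse, contradicting that $f(C)$'s order is the \emph{image} order $f(x)\leq_{f(C)} f(y)$ iff $x \leq_C y$ — so $x \ac y$ after all. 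For (3b): if $x \ac y$ and $\neg(y \ac x)$, then $x <_C y$-ish: there is a configuration $C$ with $x \leq_C y$ but $\neg(y \leq_C x)$; since $f$ preserves the local order as an isomorphism, $f(x) \leq_{f(C)} f(y)$ and $\neg(f(y) \leq_{f(C)} f(x))$, whence $f(x) \ac f(y)$ by the definition of $\leq_{f(C)}$ as the transitive closure of $\ac$ restricted to $f(C)$.

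\medskip
\noindent\textbf{($\Leftarrow$) From the four conditions to morphism.}
Assume (1)--(4). Given $C \in \conf{\ES[A]}$ we must show $f$ is injective on $C$ and $f(C) \in \conf{\ES[A']}$ with the induced order being the image order. Injectivity on $C$: if $f(x)=f(y)$ with $x \neq y$ and $x,y \in C$, then by (4) $x \ac y$ and, symmetrically applying (4) with the roles swapped, $y \ac x$, giving a $2$-cycle of $\ac$ inside $C$, contradicting conflict-freeness of $C$. Causal closure of $f(C)$: if $z' \leq f(x)$ for $x \in C$, then by (2) $z' = f(z)$ for some $z \leq x$, and $z \in C$ by causal closure of $C$, so $z' \in f(C)$. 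Conflict-freeness of $f(C)$: a cycle $f(x_0) \ac f(x_1) \ac \cdots \ac f(x_n) \ac f(x_0)$ in $f(C)$ pulls back by (3a) to $x_0 \ac x_1 \ac \cdots \ac x_n \ac x_0$, a cycle in $C$ — again impossible. Finally we must check that the local order $\leq_{f(C)}$ of $f(C)$ as an {\aes}-configuration, namely $(\ac' \cap (f(C)\times f(C)))^*$, coincides with the image order $f(x) \leq_{f(C)} f(y) \Leftrightarrow x \leq_C y$; one inclusion follows from (3b) (a covering step $x \ac y$ with $\neg(y\ac x)$ inside $C$ maps to $f(x)\ac' f(y)$, and general $x\leq_C y$ decomposes into such steps after possibly collapsing $\ac$-cliques that $C$'s conflict-freeness forbids), and the other from (3a) (an $\ac'$-step in $f(C)$ pulls back to an $\ac$-step in $C$). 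Then $f(C)$ with its intrinsic order is exactly the poset $f(C)$ with the image order, so it lies in $\conf{\ES[A']}$ and $f$ is a morphism in the sense of \cref{de:es-morphism}.

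\medskip
\noindent\textbf{Main obstacle.}
The routine parts are (1), (2), (4) and causal/conflict closure. The delicate point — and the one I would spend the most care on — is the exact correspondence between the \emph{intrinsic} local order $(\ac' \cap (f(C)\times f(C)))^*$ of the image configuration and the \emph{image} order transported from $C$, i.e.\ matching conditions (3a)/(3b) precisely to the ``$f(C)$ is a configuration with the right order'' requirement of \cref{de:es-morphism}. The subtlety is that $\ac$ need not be antisymmetric, so $\leq_C$ is a transitive closure that may \emph{collapse} $\ac$-cycles; inside a conflict-free $C$ such cycles are absent, which is what makes $\leq_C$ an actual partial order and what lets (3a)/(3b) do their job — but one has to handle the non-strict-versus-strict and the ``$\neg(y\ac x)$'' side conditions in (3) carefully, and keep track of why the one-directional $\ac$-arrows suffice to recover the full order after transitive closure on both sides. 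This is exactly the place where the analogous {\pes} statement (\cref{le:pes-morphism}) needed the extra clause (2b), and the {\aes} version of that bookkeeping is condition (3b).
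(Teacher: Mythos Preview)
Your ($\Leftarrow$) direction is essentially the paper's: injectivity on $C$ via~(4) and acyclicity, causal closure of $f(C)$ via~(2), $\ac'$-acyclicity of $f(C)$ via~(3a) reflecting cycles, and the matching of local orders via~(3a)/(3b). For ($\Rightarrow$) the paper takes a more abstract route than you: rather than building specific configurations for each condition, it observes that for {\aes}s the semantic precedence relation $\prc$ of \cref{de:precedence} coincides with $\ac$ and that {\aes}s have global precedence, and then invokes the general \cref{le:morphism-properties} to obtain (3a), (3b) and~(4) in one stroke; condition~(2) is handled as you do, via $\causes{x} \in \conf{\ES[A]}$. Your direct arguments for~(2) and~(4) are fine, and your~(3a) is recoverable (see below).

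There is, however, a genuine gap in your argument for~(3b). From $x \ac y$ and $\neg(y \ac x)$ you produce a configuration $C$ with $x <_C y$, hence $f(x) <_{f(C)} f(y)$. But $\leq_{f(C)} = (\ac' \cap f(C)^2)^*$, so this only yields an $\ac'$-\emph{chain} $f(x) = w_0 \ac' w_1 \ac' \cdots \ac' w_k = f(y)$ inside $f(C)$, not the direct edge $f(x) \ac' f(y)$; your ``whence $f(x) \ac f(y)$'' is exactly the unjustified step. The paper's appeal to global precedence of the target {\aes} is precisely the device that would collapse such a chain to a single $\ac'$-edge, and your argument has no substitute for it. A similar chain-to-edge issue appears in your~(3a), but there it can be repaired: in $C = \causes{x} \cup \causes{y}$ the first step of any $\ac$-chain out of $x$ must land in some $z \in \causes{y}$ (otherwise $z \leq x$ together with $x \ac z$ creates a cycle in $C$), and then $x \ac z$ with $z \leq y$ gives $x \ac y$ by axiom~(2) of \cref{de:AES}. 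For~(3b) the image $f(C)$ need not equal $\causes{f(x)} \cup \causes{f(y)}$, so this structural shortcut is unavailable.
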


These are the standard conditions characterising (total) {\aes}
morphisms (see~\cite{BCM:CNAED}), with the addition of
(\ref{le:aes-morphism:3}b), needed in order to ensure that
configurations are mapped to isomorphic configurations.

\begin{restatable}[{\aes} foldings]{proposition}{praesfolding}
  \label{pr:aes-folding}
  Let $\ES[A]$ and $\ES[A']$ be {\aes}s and let
  $f : \ES[A] \to \ES[A']$ be a morphism. Then $f$ is a folding if and
  only if it is surjective and for all $X, Y \subseteq \ES[A]$,
  $x, y \in \ES[A]$ with $x \notin X$, $y \notin Y$, $y' \in \ES[A']$
  \begin{enumerate}
  \item \label{le:aes-folding:1}
    if $f^{-1}(y') \ac^\forall x$ 
    then $y' \ac^\exists f(\causes{x})$;
    
  \item \label{le:aes-folding:2}
    if $\neg (x \ac^\exists X)$,
    $\neg (y \ac^\exists Y )$,
    $\setcons{(X \cup Y)}$ and $f(x) = f(y)$ then there exists
    $z \in \ES[A]$ such that $f(z)=f(x)$ and
    $\neg (z \ac^\exists X \cup Y)$.
    
  \item \label{le:aes-folding:3}
    given $H \in \Hist{x}$, if $\neg (H \ac^\exists X)$, and
    $H_1 \sqsubsetneq H$ such that
    $f(H_1) \cup \{f(x)\} \in \Hist{f(x)}$ there exists $x_1$ such
    that $H_1 \cup \{ x_1 \} \in \Hist{x_1}$ and $\neg (x_1 \ac^\exists X)$.
  \end{enumerate}
\end{restatable}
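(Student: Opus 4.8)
The plan is to reduce the equivalence to a single \emph{transition lifting property} and then recognise conditions (\ref{le:aes-folding:1})--(\ref{le:aes-folding:3}) as exactly the {\aes}-theoretic content of that property. First I would recall from \cref{le:morph-to-fold} that a morphism $f : \ES[A] \to \ES[A']$ is a folding as soon as every target transition $f(C) \trans{x'} C'$ lifts to a source transition $C \trans{x} C_2$ with $f(C_2) = C'$, and conversely --- unfolding the definition of folding, where the isomorphism component of the triples in $R_f$ forces $f(x) = x'$ and $C' = f(C_2)$ --- that a folding necessarily enjoys this property; call it (TLP). I would then rephrase (TLP) purely combinatorially using \cref{de:AES}: a transition $f(C) \trans{x'} C'$ is the datum of an event $x' \notin f(C)$ with $f(C) \cup \{x'\} \in \conf{\ES[A']}$ and $\neg(x' \ac^\exists f(C))$, and lifting it is exactly finding $x \in f^{-1}(x')$ with $\causes{x} \setminus \{x\} \subseteq C$ (causal closure of $C \cup \{x\}$) and $\neg(x \ac^\exists C)$ (the prefix clause); acyclicity of $\ac$ on $C \cup \{x\}$ is then automatic from acyclicity on $C$, acyclicity on $\causes{x}$ and $\neg(x \ac^\exists C)$. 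As a by-product, (TLP) already forces surjectivity of $f$: any event of $\ES[A']$ lies in a configuration, which is reached from $\emptyset$ by a chain of transitions, and lifting the chain step by step yields a preimage configuration.

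For the ``if'' direction I would assume $f$ is a surjective morphism satisfying (\ref{le:aes-folding:1})--(\ref{le:aes-folding:3}) and verify (TLP) for a fixed $C$ and $x'$. The $\ES[A']$-history $H' = \hist{f(C) \cup \{x'\}}{x'}$ satisfies $H' \setminus \{x'\} \subseteq f(C)$; proceeding along a linearisation of $H'$ and applying condition (\ref{le:aes-folding:3}) repeatedly --- with the ``later'' part of $C$ (events of $C$ not needed as causes) in the role of $X$, so that no asymmetric conflict is created back into $C$ --- I would build an event $x_0 \in f^{-1}(x')$ with a history $H \in \Hist{x_0}$ such that $f(H) = H'$ and $H \setminus \{x_0\} \subseteq C$; in particular $\causes{x_0} \setminus \{x_0\} \subseteq C$, giving causal closure. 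If moreover $\neg(x_0 \ac^\exists C)$ we are done; otherwise condition (\ref{le:aes-folding:2}), applied to the pair of sets built from $C$ and the two representatives of $x'$ at hand, replaces $x_0$ by some $z \in f^{-1}(x')$ with $\neg(z \ac^\exists C)$ while preserving $\causes{z} \setminus \{z\} \subseteq C$; condition (\ref{le:aes-folding:1}) is what makes the hypotheses of (\ref{le:aes-folding:2}) and (\ref{le:aes-folding:3}) available, since it reflects the source-side consequence $\neg(x' \ac^\exists f(C))$ of the prefix clause into a statement about preimages in $\ES[A]$. Having obtained $x \in f^{-1}(x')$ with $\causes{x} \setminus \{x\} \subseteq C$ and $\neg(x \ac^\exists C)$, I would check $C \cup \{x\} \in \conf{\ES[A]}$, $C \sqsubseteq C \cup \{x\}$ and $f(C \cup \{x\}) = f(C) \cup \{x'\}$, and conclude by \cref{le:morph-to-fold}.

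For the ``only if'' direction I would assume $f$ is a folding, hence (with surjectivity) satisfies (TLP), and extract each condition by instantiating (TLP) on a tailored configuration. For (\ref{le:aes-folding:3}) one takes a configuration realising the history $H$ together with $X$ and notes that, $f(H_1) \cup \{f(x)\}$ being a history of $f(x)$, it is an available extension of the image, whose lift is the required $x_1$. For (\ref{le:aes-folding:2}) one glues the two configurations obtained from $X$, $Y$ and a common history mapping into the image, and lifts the extension by $f(x) = f(y)$. For (\ref{le:aes-folding:1}), using coherence of the poset families one builds a configuration $C$ containing $x$ whose image can be extended by $y'$ --- which is possible precisely because $\neg(y' \ac^\exists f(\causes{x}))$ --- and the lift of the extending event is then a preimage of $y'$ not in asymmetric conflict with $x$, contradicting $f^{-1}(y') \ac^\forall x$.

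The crux is the inductive lifting of histories in the ``if'' direction: unlike for {\pes}s, an event of an {\aes} may have many histories, and the representatives of the events of $C$ must be chosen compatibly so as to respect \emph{simultaneously} causal closure and the \emph{directional} prefix order --- this is exactly what dictates the three-condition shape, with (\ref{le:aes-folding:3}) governing which histories lift, (\ref{le:aes-folding:2}) the choice of representative of $x'$, and (\ref{le:aes-folding:1}) the reflection of the prefix constraint. The other fiddly point is verifying the quantifier-alternating hypotheses of (\ref{le:aes-folding:1})--(\ref{le:aes-folding:3}) when instantiating (TLP) in the ``only if'' direction. As a consistency check, one can also route the argument through \cref{pr:fold-es-pes} ($f$ is a folding iff $\pr(f)$ is) and \cref{pr:pes-folding} applied to the canonical {\pes}s $\pr(\ES[A]), \pr(\ES[A'])$, whose events are histories, translating those two {\pes}-conditions back along the cofree foldings $\phi_{\ES[A]}, \phi_{\ES[A']}$; this should reproduce (\ref{le:aes-folding:1})--(\ref{le:aes-folding:3}) but is not obviously shorter than the direct route.
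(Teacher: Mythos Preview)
Your overall architecture --- reducing to the transition lifting property via \cref{le:morph-to-fold} and then translating (TLP) into the three conditions --- is exactly the paper's, and your ``only if'' direction (instantiate (TLP) on tailored configurations to read off each condition) matches the paper's argument closely.

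The gap is in the ``if'' direction, where you have the roles of the three conditions inverted. Condition~(\ref{le:aes-folding:3}) does not \emph{build up} a history along a linearisation of $H'$: it takes an \emph{already given} source history $H \in \Hist{x}$ and a strict prefix $H_1 \sqsubsetneq H$ and produces a new $x_1$ with the \emph{smaller} history $H_1 \cup \{x_1\}$. So ``applying condition~(\ref{le:aes-folding:3}) repeatedly along a linearisation of $H'$'' does not make sense as stated --- you have no starting $x$ with a source history to shrink. Likewise, your use of condition~(\ref{le:aes-folding:2}) (``replace $x_0$ by some $z$'') does not match its hypotheses, which require two elements $x,y$ each already free of asymmetric conflict into their respective sets $X,Y$.

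The paper's flow is the reverse of yours. Given $C_1$ and $f(C_1) \trans{x'}$, first use the contrapositive of condition~(\ref{le:aes-folding:1}) \emph{per element} $y \in C_1$ (from $\neg(x' \ac^\exists f(\causes{y}))$) to obtain, for each $y$, some $x_y \in f^{-1}(x')$ with $\neg(x_y \ac y)$. Second, use condition~(\ref{le:aes-folding:2}) \emph{inductively} over the finite consistent set $C_1$ to amalgamate these $x_y$'s into a single $x \in f^{-1}(x')$ with $\neg(x \ac^\exists C_1)$; this gives a transition $C_1 \trans{X} C_1 \cup \causes{x}$, but $X$ may be larger than a singleton. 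Third, apply condition~(\ref{le:aes-folding:3}) \emph{once}, with $H = \hist{C_1 \cup \causes{x}}{x}$ and $H_1$ the $f$-preimage in $C_1$ of $\hist{f(C_1)}{x'} \setminus \{x'\}$, to shrink to an $x_1$ whose history is $H_1 \cup \{x_1\} \subseteq C_1 \cup \{x_1\}$, yielding the single-step lift $C_1 \trans{x_1} C_1 \cup \{x_1\}$. In short: (\ref{le:aes-folding:1}) supplies candidates, (\ref{le:aes-folding:2}) glues them, (\ref{le:aes-folding:3}) trims the history --- not the other way round.
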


We already observed that working in the class of {\aes}s we can obtain smaller quotients than in the class of {\pes}s (see, e.g., the hhp-bisimilar structures $\ES[P_2]$ in Fig.~\ref{fi:pes} and $\ES[A_0]$ in Fig.~\ref{fi:aes}). However, not unexpectedly, the folding criteria for {\aes}s are less elegant and more complex than those for {\pes}s. In a practical use, the reference to histories could cause a loss of efficiency.
Moreover, the uniqueness of the minimal quotient is lost. Consider for
instance the {\aes}s in Fig.~\ref{fi:aes-non-min}. It can be seen that
$h_{01} : \ES[A_0] \to \ES[A_1]$ is a folding where the events $c_1$, caused
by $a$ and $c_0$ in conflict with $a$, are merged in a single event
$c_{01}$ in asymmetric conflict with $a$. Similarly, 
$h_{02} : \ES[A_0] \to \ES[A_2]$ is a folding obtained by merging $c_0$ and $c_2$. These are two
minimal foldings that do not admit a join in the class of
{\aes}s. In fact, if we merge all three $c$-labelled events we obtain
$\ES[A_3]$, and it is easy to see that the function
$h_{03} : \ES[A_0] \to \ES[A_3]$ is not a folding. In fact, consider
$\{ a, b \} \in \conf{\ES[A_0]}$. Then
$h_{03}(\{ a, b\}) = \{a,b\} \trans{c_{012}}$, a transition that
cannot be simulated in $\ES[A_0]$. Indeed, it can be seen that the join of $h_{01}$ and $h_{02}$ is the event structure $\ES$ in Fig.~\ref{fi:es}(right), which cannot be represented as an {\aes}.

\begin{figure}
  \setlength{\tabcolsep}{7pt}
  \begin{tabular}{cccc}    
    \begin{tikzcd}[boxedcd, row sep=6mm, column sep=7mm]
      a \arrow[d]
      \arrow[ac, bend left=10, r]
      \arrow[ac, bend left=10, rrd]
      & c_0
      \arrow[ac, bend left=10, l]
      \arrow[ac, bend left=10, r]     
      &  b \ar[d]
      \arrow[ac, bend left=10, l]
      \arrow[ac, bend left=10, lld]     
      \\
      c_1
      \arrow[ac, bend left=10, urr]
      & &
      c_2
      \arrow[ac, bend left=10, ull]
    \end{tikzcd}
    &
    \begin{tikzcd}[boxedcd, row sep=6mm, column sep=7mm]
      a \arrow[ac, d]
      \arrow[ac, bend left=10, rrd]
      &     
      &  b \ar[d]
      \arrow[ac, bend left=10, lld]     
      \\
      c_{01}
      \arrow[ac, bend left=10, urr]
      & &
      c_2
      \arrow[ac, bend left=10, ull]
    \end{tikzcd}
    &
    \begin{tikzcd}[boxedcd, row sep=6mm, column sep=7mm]
      a \arrow[d]
      \arrow[ac, bend left=10, rrd]
      &     
      &  b \ar[ac, d]
      \arrow[ac, bend left=10, lld]     
      \\
      c_1
      \arrow[ac, bend left=10, urr]
      & &
      c_{02}
      \arrow[ac, bend left=10, ull]
    \end{tikzcd}        
    &
    \begin{tikzcd}[boxedcd, row sep=6mm, column sep=3mm]
      a \arrow[ac,dr]
      &     
      &  b \ar[ac, dl]
      \\
      & c_{012} &
    \end{tikzcd}        
    \\
    $\ES[A_0]$ & $\ES[A_1]$ & $\ES[A_2]$ & $\ES[A_3]$
  \end{tabular}

  \caption{Asymmetric event structures do not admit a minimal quotient}
  \label{fi:aes-non-min}
\end{figure}
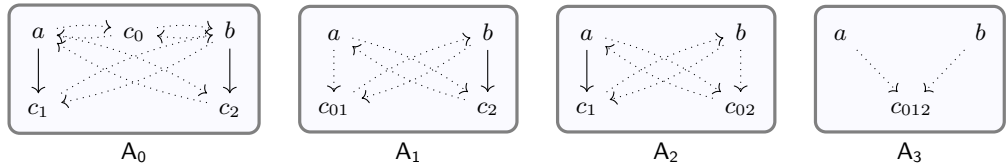

In passing, we note that also in the class of {\fes}s the existence of
minimal foldings is lost. In fact, consider Fig.~\ref{fi:fes}. It can
be easily seen that $\ES[F_1]$ and $\ES[F_2]$ are different minimal
foldings of $\ES[F_0]$. In particular, merging the three $d$-labelled
events as in $\ES[F_3]$ modifies the behaviour. In fact, in
$\ES[F_3]$, the event $d_{012}$ is not enabled in $C = \{ a \}$ since
$c \prec d_{012}$ and no event in $C$ is in conflict with $c$. Instead, in $\ES[F_0]$, the event $d_0$ is clearly enabled from $\{ a \}$.

Existence of a unique minimal folding could be possibly recovered by strengthening the notion of folding and, in particular, by requiring that foldings preserve and reflect histories. Note, however, that this would be against the spirit of our work where the notion of folding is not a choice. Rather, after having assumed hhp-bisimilarity as the reference behavioural equivalence, the notion of folding is essentially ``determined'' as a quotient (surjective function) that preserves the behaviour up to hhp-bisimilarity.

\section{Conclusions}
\label{se:conclusions}

We studied the problem of minimisation for poset event structures, a
class that encompasses many stable event structure models in the
literature, taking hereditary history preserving bisimilarity as
reference behavioural equivalence. We showed that a uniquely
determined minimal quotient always exists for poset event structures
and also in the subclass of prime event structures, while this is not
the case for various models extending prime event structures. We
showed that foldings between general poset event structures arise from foldings
of corresponding canonical prime event structures.
Finally, we provided a characterisation of foldings of prime event
structures, and discussed how this could be generalised to other
classes, developing explicitly the case of asymmetric event
structures.

As underlined throughout the paper, our theory of folding has many
connections with the literature on event structures. The idea of
``unfolding'' more expressive models to prime algebraic domains and
prime event structures has been studied by many
authors (e.g., in~\cite{Ren:PC,NPW:PNES,vGla:HPPG,GP:CSESPN,BC:SCCS}).
The same can be said for the idea of refining a single action into a complex computation (see, e.g.,~\cite{vGG:ENCC-AI} and references therein).
Instead, the problem of minimisation of event structures has received
less attention.
We already commented on the relation with the notion of abstraction
homomorphisms for {\pes}s~\cite{Cas:phd}, which captures the idea of
behaviour preserving abstraction in a subclass of structured {\pes}s.
In some cases, given a Petri net or an event structure
a special transition system can be extracted, on which minimisation is
performed.
In particular, in~\cite{MP:MTS} the authors propose an encoding of
safe Petri nets into causal automata, in a way that preserves
hp-bisimilarity. The causal automata can be transformed into a
standard labelled transition system, which in turn can be
minimised. However, in this way, the correspondence with the original
events is lost.

The notion of behaviour preserving function has been given an
elegant abstract characterisation in terms of open
maps~\cite{JNW:BFOM}. In the paper, we mentioned the
possibility of viewing our foldings as open maps and we observed that
various results admit a categorical interpretation. This gives clear
indications of the possibility of providing a general abstract view
of the results in this paper, something which represents an
interesting topic of future research.

The characterisation of foldings on prime (and asymmetric) event
structures can be used as a basis to develop, at least in the case of
finite structures, an algorithm for the definition of behaviour
preserving quotients. The fact that conditions for folding refer to
sets of events might make the minimisation procedure very
inefficient. Determining suitable heuristics for the identification of
folding sets and investigating the possibility of having more ``local''
conditions characterising foldings are interesting directions of
future development.

Although not explicitly discussed in the paper, considering elementary
foldings, i.e., foldings that just merge a single set of events, one
can indeed determine some more efficient folding rules. This is
essentially what is done for {\aes}s and {\fes}s
in~\cite{ABG:RESHPB}.  However, restricting to elementary foldings
is limitative, since it can be seen that general foldings
cannot be always decomposed in terms of elementary ones (e.g., it can be seen that in Fig.~\ref{fi:pes}, the folding $f_{02} : \ES[P_0] \to \ES[P_2]$ cannot be obtained as the composition of elementary foldings).

When dealing with possibly infinite event structures one could work on some finitary representation and try to devise reduction rules acting on the representation and inducing foldings on the corresponding event structure.
Observe that working, e.g., on finite safe Petri nets, the minimisation procedure would be necessarily incomplete, given that hhp-bisimilarity is known to be undecidable~\cite{JNS:UDG}.

\bibliography{bibliography}

\appendix

\section{Proofs for Section~\ref{se:preliminaries} (Event Structures and History Preserving Bisimilarity)}

\begin{alemma}[properties of histories]
  \label{le:hist}
  Let $\ES$ be an event structure. Then
  \begin{enumerate}
    
  \item \label{le:hist:1}
    for all $C \in \conf{\ES}$, we have $\hist{C}{x} \sqsubseteq C$,
    hence $\hist{C}{x} \in \conf{\ES}$;

  \item \label{le:hist:3}
    for all $C_1, C_2 \in \conf{\ES}$, $C_1 \sqsubseteq C_2$ iff
    for all $x \in C_1$, $\hist{C_1}{x} = \hist{C_2}{x}$;

  \item \label{le:hist:2}
    for all $H_1, H_2 \in \Hist{x}$, if $H_1 \cons H_2$ then $H_1 = H_2$;
    
  \end{enumerate}
\end{alemma}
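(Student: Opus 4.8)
The plan is to establish the three items in turn, each by unfolding the definitions of history, prefix, and compatibility, with items~\ref{le:hist:3} and~\ref{le:hist:2} reusing the earlier ones. For item~\ref{le:hist:1}, I would note that $\hist{C}{x} \subseteq C$ and that $\leq_{\hist{C}{x}}$ is by definition the restriction of $\leq_C$, so it only remains to verify the downward-closure clause of $\sqsubseteq$: if $x_1 \in \hist{C}{x}$, $x_2 \in C$ and $x_2 \leq_C x_1$, then $x_2 \leq_C x$ by transitivity (since $x_1 \leq_C x$), whence $x_2 \in \hist{C}{x}$ and $x_2 \leq_{\hist{C}{x}} x_1$. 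Thus $\hist{C}{x} \sqsubseteq C$, and $\hist{C}{x} \in \conf{\ES}$ follows since $\conf{\ES}$ is prefix-closed.

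For item~\ref{le:hist:3}, in the forward direction I assume $C_1 \sqsubseteq C_2$ and pick $x \in C_1$: the inclusion $\hist{C_2}{x} \subseteq \hist{C_1}{x}$, with coinciding orders, is exactly the prefix clause instantiated at $x_1 = x$, while the reverse inclusion uses that for $C_1 \sqsubseteq C_2$ the order $\leq_{C_1}$ agrees with $\leq_{C_2}$ on $C_1$, so that $y \leq_{C_1} x$ yields $y \leq_{C_2} x$. In the backward direction, reading the right-hand side as also entailing $C_1 \subseteq C_2$ (which is implicit, as $\hist{C_2}{x}$ presupposes $x \in C_2$), I take $x_1 \in C_1$, $x_2 \in C_2$ with $x_2 \leq_{C_2} x_1$; then $x_2 \in \hist{C_2}{x_1} = \hist{C_1}{x_1}$, so $x_2 \in C_1$ and $x_2 \leq_{C_1} x_1$, which is precisely the defining condition of $C_1 \sqsubseteq C_2$.

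For item~\ref{le:hist:2}, the crucial remark is that every $H \in \Hist{x}$ is its own history at $x$, i.e.\ $\hist{H}{x} = H$: writing $H = \hist{C}{x}$, each $y \in H$ satisfies $y \leq_C x$, hence $y \leq_H x$, so $\hist{H}{x} = H$ (with the inherited order). Now let $H_1, H_2 \in \Hist{x}$ with $H_1 \cons H_2$, and let $C \in \conf{\ES}$ be a common upper bound, so $H_1 \sqsubseteq C$, $H_2 \sqsubseteq C$, and $x \in H_1 \subseteq C$. Applying item~\ref{le:hist:3} to $H_i \sqsubseteq C$ at the event $x$ gives $H_i = \hist{H_i}{x} = \hist{C}{x}$ for $i = 1, 2$, hence $H_1 = H_2$.

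The whole argument is mostly bookkeeping with the prefix relation and restricted orders; the only points needing care are keeping track of the agreement of local orders on shared events (pervasive in item~\ref{le:hist:3}) and, for item~\ref{le:hist:2}, isolating the self-history identity $\hist{H}{x} = H$, which is what forces any history of $x$ to coincide with $\hist{C}{x}$ for a chosen upper bound $C$.
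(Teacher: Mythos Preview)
Your proposal is correct and follows essentially the same route as the paper's proof: item~\ref{le:hist:1} by direct verification of the prefix clause, item~\ref{le:hist:3} by unfolding the prefix definition in both directions, and item~\ref{le:hist:2} by choosing a common upper bound $C$ and using item~\ref{le:hist:3} to force $H_i = \hist{H_i}{x} = \hist{C}{x}$. The only difference is expository: you spell out item~\ref{le:hist:1} and the self-history identity $\hist{H}{x} = H$ explicitly, whereas the paper leaves these implicit.
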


\begin{proof}
  \begin{enumerate}

  \item Immediate by the definition of $\hist{C}{x}$.

  \item Let $C_1, C_2 \in \conf{\ES}$ such that $C_1 \sqsubseteq
    C_2$. For all $x \in C_1$ we have that
  \begin{align*}
    \hist{C_2}{x}
    & = \{ y \in C_2 \mid y \leq_{C_2} x \}\\
    & = \{ y \in C_1 \mid y \leq_{C_1} x \} & \mbox{[since $C_1 \sqsubseteq C_2$]}\\
    & = \hist{C_1}{x} 
  \end{align*}

  Conversely, assume that for all $x \in C_1$ we have that
  $\hist{C_1}{x} = \hist{C_2}{x}$.
  Then, since $x \in \hist{C_i}{x}$, for $i \in \{1,2\}$, clearly
  $C_1 \subseteq C_2$.
  Moreover, for all $y \in C_1$ and $x \in C_2$, if $x \leq_{C_2} y$
  then $x \in \hist{C_2}{y}$. Therefore, since by hypothesis
  $\hist{C_1}{y} = \hist{C_2}{y}$, we have $x \in C_1$ and
  $x \leq_{C_1} y$, as desired. Therefore, $C_1 \sqsubseteq C_2$.
  
\item Let $H_1, H_2 \in \Hist{x}$ and assume that $H_1 \cons H_2$. This
  means that there exists $C \in \conf{\ES}$ such that
  $H_1, H_2 \subseteq C$. Therefore, by point (\ref{le:hist:3}), we
  have $H_1 = \hist{H_1}{x} = \hist{C}{x} = \hist{H_2}{x} = H_2$.
\end{enumerate}
\end{proof}

 \begin{alemma}[configurations are reachable]
  \label{le:conf-reach}
  Let $\ES$ be an event structure and let $C \in \conf{\ES}$ be a
  configuration. Then $\emptyset \trans{}^* C$. More in detail,
  if $x_1, x_2, \ldots, x_n$ is any linearisation of $C$ compatible
  with $\leq_C$ then, for all $k \in \{1, \ldots, n\}$,
  $\{ x_1, \ldots, x_{k-1}\} \trans{x_k} \{ x_1, \ldots, x_{k-1},
  x_k\}$ .
\end{alemma}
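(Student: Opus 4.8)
The plan is to exhibit the intermediate configurations explicitly and check that the displayed chain of transitions is valid; then $\emptyset \trans{}^* C$ follows at once by composing them. Fix a configuration $C \in \conf{\ES}$. Since $C$ is finite and $\leq_C$ is a partial order, a linearisation $x_1, x_2, \ldots, x_n$ of $C$ compatible with $\leq_C$ exists (i.e.\ $x_i \leq_C x_j$ implies $i \le j$). For $0 \le k \le n$ put $C_k = \{x_1, \ldots, x_k\}$, endowed with the restriction of $\leq_C$ to $C_k$; thus $C_0 = \emptyset$ and $C_n = C$.

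First I would show that each $C_k$ is a prefix of $C$. We have $C_k \subseteq C$ by construction, and if $x \in C_k$, $y \in C$ with $y \leq_C x$, writing $x = x_i$ with $i \le k$, compatibility of the linearisation gives $y = x_j$ with $j \le i \le k$, so $y \in C_k$; moreover $y \leq_{C_k} x$ because $\leq_{C_k}$ is by definition the restriction of $\leq_C$. Hence $C_k \sqsubseteq C$, and since $\conf{\ES}$ is prefix-closed (it is a family of posets), $C_k \in \conf{\ES}$.

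Next I would observe that $C_{k-1} \sqsubseteq C_k$ for each $k \in \{1, \ldots, n\}$: again $C_{k-1} \subseteq C_k$ trivially, and for $x \in C_{k-1}$, $y \in C_k$ with $y \leq_{C_k} x$ we have $y \leq_C x$, so the same argument as above — now bounding indices by $k-1$ — yields $y \in C_{k-1}$ with $y \leq_{C_{k-1}} x$. Since $C_k \setminus C_{k-1} = \{x_k\}$, Definition~\ref{de:transition} gives $C_{k-1} \trans{x_k} C_k$. Chaining these for $k = 1, \ldots, n$ produces $\emptyset = C_0 \trans{x_1} C_1 \trans{x_2} \cdots \trans{x_n} C_n = C$, i.e.\ $\emptyset \trans{}^* C$, as required.

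There is essentially no hard part here; the only points that need a little care are that the empty poset indeed belongs to $\conf{\ES}$ (it is a prefix of any configuration, and $\conf{\ES}$ is nonempty since $E = \bigcup \conf{\ES}$), and that the order carried by the intermediate sets $C_k$ must be taken to be the restriction of $\leq_C$ — it is precisely this choice that makes the prefix relations hold and the transition labels come out as the singletons $\{x_k\}$.
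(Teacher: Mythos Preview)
Your proof is correct and follows the same approach as the paper, which simply states that the result is an ``immediate consequence of the prefix-closedness of the family of configurations.'' You have merely spelled out in full the verification that each $C_k$ is a prefix of $C$ (hence a configuration) and that $C_{k-1} \sqsubseteq C_k$, which is exactly what the paper leaves implicit.
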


\begin{proof}
  Immediate consequence of the prefix-closedness of
  the family of configurations.
\end{proof}

\section{Proofs for Section~\ref{se:folding} (Foldings of Event Structures)}

\begin{alemma}[foldings are closed under composition]
  \label{le:folding-composition}
  Let $\ES$, $\ES'$, $\ES''$ be event structures and let
  $f : \ES \to \ES'$ and $f' : \ES' \to \ES''$ be foldings. Then
  $f' \circ f : \ES \to \ES''$ is a folding.
\end{alemma}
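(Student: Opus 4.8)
The plan is to show that $f' \circ f$ is a morphism and that the induced relation $R_{f' \circ f}$ is a hhp-bisimulation. Since the composition of morphisms is a morphism (by the Remark after Definition~\ref{de:es-morphism}), the only work is to establish that $R_{f' \circ f} = \{ (C, (f' \circ f)_{|C}, (f' \circ f)(C)) \mid C \in \conf{\ES}\}$ is a hhp-bisimulation between $\ES$ and $\ES''$. The natural strategy is to use the two given hhp-bisimulations $R_f$ (between $\ES$ and $\ES'$) and $R_{f'}$ (between $\ES'$ and $\ES''$) and compose them relationally, then check that the composite agrees with $R_{f' \circ f}$.

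First I would spell out the relational composition: a triple $(C, h, C'')$ should lie in the composite precisely when there is $C' \in \conf{\ES'}$ with $(C, f_{|C}, C') \in R_f$ and $(C', f'_{|C'}, C'') \in R_{f'}$ and $h = f'_{|C'} \circ f_{|C}$. Because $R_f$ is exactly $\{(C, f_{|C}, f(C))\}$, the intermediate configuration is forced to be $C' = f(C)$, and then $C'' = f'(f(C)) = (f'\circ f)(C)$ and $h = (f'\circ f)_{|C}$; so the composite relation is literally $R_{f'\circ f}$. Thus it suffices to verify that the relational composition of two hhp-bisimulations (of this restricted ``graph-of-a-folding'' form) is again a hhp-bisimulation — this is a routine diagram chase. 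For the forward transfer condition: given $(C_1, (f'\circ f)_{|C_1}, (f'\circ f)(C_1)) \in R_{f'\circ f}$ and a transition $C_1 \trans{x} C_2$, apply clause (1) of the hhp-bisimulation $R_f$ to get $f(C_1) \trans{x_1'} f(C_2)$ with the triples matching, then apply clause (1) of $R_{f'}$ to $f(C_1) \trans{x_1'} f(C_2)$ to get $f'(f(C_1)) \trans{x''} f'(f(C_2))$ compatibly; composing the isomorphisms gives the required triple in $R_{f'\circ f}$. The backward condition (clause (2)) is entirely symmetric, going through $R_{f'}$ first and then $R_f$. Finally, downward-closure of $R_{f'\circ f}$ follows from downward-closure of $R_f$ and $R_{f'}$: if $(C_1, (f'\circ f)_{|C_1}, (f'\circ f)(C_1))\in R_{f'\circ f}$ and $C_2 \subseteq C_1$, then $(C_2, f_{|C_2}, f(C_2)) \in R_f$ and $(f(C_2), f'_{|f(C_2)}, f'(f(C_2))) \in R_{f'}$, whose composite is $(C_2, (f'\circ f)_{|C_2}, (f'\circ f)(C_2))$.

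Alternatively, and perhaps more cleanly, I would invoke Lemma~\ref{le:morph-to-fold}: since $f'\circ f$ is already known to be a morphism, it is enough to check that every transition $(f'\circ f)(C_1) \trans{x''} C_2''$ can be lifted to a transition $C_1 \trans{x} C_2$ with $(f'\circ f)(C_2) = C_2''$. Given such an $x''$, first use that $f'$ is a folding: $(f'(f(C_1)), f'_{|f(C_1)}, \ldots)$ behaves well, so there is a transition $f(C_1) \trans{x'} D'$ with $f'(D') = C_2''$; in particular $D' = f(C_1) \cup \{x'\}$ for some $x'$ (using that $f'$ restricted to configurations is injective, so the new event is a single event and $D' \in \conf{\ES'}$). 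Then use that $f$ is a folding to lift $f(C_1) \trans{x'} D'$ to a transition $C_1 \trans{x} C_2$ with $f(C_2) = D'$; hence $(f'\circ f)(C_2) = f'(f(C_2)) = f'(D') = C_2''$, as needed, and Lemma~\ref{le:morph-to-fold} concludes.

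The main obstacle, such as it is, is purely bookkeeping: one must be careful that the intermediate event $x'$ produced is genuinely a single event (so that $f(C_1) \trans{x'} D'$ is a legitimate single-step transition in the sense of Definition~\ref{de:transition}), which follows from the fact that morphisms are injective on configurations and map configurations to configurations, so $|D'| = |C_2''| = |C_1| + 1 = |f(C_1)| + 1$. There is no genuine mathematical difficulty here; the statement is essentially the observation that ``graph-of-folding'' relations compose and that hhp-bisimilarity is transitive, and the proof is a short chase either directly or via Lemma~\ref{le:morph-to-fold}.
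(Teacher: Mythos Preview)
Your proposal is correct, and your second approach via Lemma~\ref{le:morph-to-fold} is exactly the route the paper takes: lift the single-event transition first through $f'$ and then through $f$. Your worry about the intermediate step being a single event is unnecessary, since the hhp-bisimulation clauses in Definition~\ref{de:hp-bisim} (and hence the folding property) are stated for single-event transitions, so the lift is automatically of the form $f(C_1)\trans{x'}C_2'$; no cardinality argument is needed.
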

\begin{proof}
  We rely on the characterisation of foldings provided in
  \cref{le:morph-to-fold}. Let $C_1 \in \conf{\ES}$ and assume that
  $f'(f(C_1)) \trans{x''} C_2''$. Since $f(C_1) \in \conf{\ES'}$ and
  $f'$ is a folding, there exists $x'$ such that
  $f(C_1) \trans{x'} C_2'$ with $f'(x') = x''$ and $f'(C_2') =
  C_2''$. In turn, since $f$ is a folding, from
  $f(C_1) \trans{x'} C_2'$, we derive the existence of a transition
  $C_1 \trans{x} C_2$ with $f(x) = x'$ and $f(C_2) = C_2'$. Therefore
  $f'(f(x)) = x''$ and $f'(f(C_2)) = C_2''$, as desired.
\end{proof}

\lemorphtofold*

\begin{proof}
  We have to show that
  $R_f = \{ (C, f_{|C}, f(C)) \mid C \in \conf{\ES}\}$ satisfies
  conditions (1) and (2) of \cref{de:hp-bisim}. Condition
  (2) is in the hypotheses. Concerning (1), let $C_1 \in \conf{\ES}$ and
  consider a transition $C_1 \trans{x} C_2$. Then by definition of
  morphism, $f(C_i) \in \conf{\ES}$ and isomorphic to $C_i$, for
  $i \in \{1,2\}$. Therefore $f(C_1) \trans{f(x)} f(C_2)$.
\end{proof}

Relying on \cref{le:morph-to-fold} we can derive that foldings
arise as open maps in the sense of~\cite{JNW:BFOM}.

\begin{adefinition}[open map]
  \label{de:open-map}
  Let $\cat{M}$ be a category and let $\cat{C}$ be a subcategory of
  $\cat{M}$. A morphism $f: M \to M'$ is $\cat{C}$-\emph{open} if for
  all morphisms $e: C \to C'$ and commuting square
  \begin{center}
    \begin{tikzcd}%
      C \arrow[r, "c" above] \arrow[d, "e" left] & \ES  \arrow[d, "f" right]\\
      C' \arrow[r, "c'" below]  \arrow[ru, dotted, "c''"] & \ES'
      & {}
    \end{tikzcd}
  \end{center}
  there exists a morphism $c'' : C' \to \ES$ such that the two
  triangles commute.  
\end{adefinition}

Let $\cat{Pom}$ denote the subcategory of $\cat{ES}$ having
conflict-free {\pes}s as objects and injective morphisms as
arrows. Then \cref{le:morph-to-fold} leads to show that foldings are
$\cat{Pom}$-open morphisms in $\cat{ES}$, generalising to our setting
a result proved for prime event structures in~\cite{JNW:BFOM}.

\begin{alemma}[foldings as open maps]
  \label{le:folding-open}
  Let $\ES$, $\ES'$ be event structures and let $f : \ES \to \ES'$ be
  a morphism. Then $f$ is a folding if and only if $f$ is $\cat{Pom}$-open.
\end{alemma}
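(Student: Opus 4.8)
The plan is to prove the two implications separately, using \cref{le:morph-to-fold} for one direction and an explicit diagonal-filling construction for the other. Throughout I will use that the objects of $\cat{Pom}$ are finite conflict-free {\pes}s, whose configurations are exactly the downward-closed subsets of the underlying poset, and that (as follows from \cref{le:pes-morphism}) an arrow of $\cat{Pom}$ is a rigid embedding onto a downward-closed subset. For the direction ``$\cat{Pom}$-open $\Rightarrow$ folding'', by \cref{le:morph-to-fold} it suffices to lift transitions: given $C_1 \in \conf{\ES}$ and $f(C_1) \trans{x'} C_2'$ in $\ES'$, I must produce $C_1 \trans{x} C_2$ with $f(C_2) = C_2'$. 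I build the square by taking $P$ to be the conflict-free {\pes} whose labelled poset is the configuration $C_1$, $P'$ the one whose poset is $C_2'$, $p : P \to \ES$ and $p' : P' \to \ES'$ the inclusions (morphisms, since a prefix of a configuration is a configuration), and $e = f_{|C_1} : P \to P'$ (a morphism, since morphisms preserve prefixes, so $D \sqsubseteq C_1$ gives $f(D) \sqsubseteq f(C_1) \sqsubseteq C_2'$). The outer square commutes, both composites sending $y \in C_1$ to $f(y)$. From a diagonal $p'' : P' \to \ES$ I set $x = p''(x')$ and $C_2 = p''(C_2')$: then $p'' \circ e = p$ forces $p''(f(y)) = y$ for $y \in C_1$, hence $C_2 = C_1 \cup \{x\}$ with $x \notin C_1$; $f \circ p'' = p'$ with $p'$ the inclusion forces $f(C_2) = C_2'$; and $C_1 \sqsubseteq C_2$ since $p''$ is a poset isomorphism carrying the prefix $f(C_1) \sqsubseteq C_2'$ onto $C_1 \subseteq C_2$. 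So \cref{le:morph-to-fold} applies.

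For the converse, assume $f$ is a folding, so $R_f = \{(C, f_{|C}, f(C))\}$ is an hhp-bisimulation; only clause (2) will be used. Given a commuting square with $e : P \to P'$ in $\cat{Pom}$, $p : P \to \ES$, $p' : P' \to \ES'$, I may assume $e$ is the inclusion of a downward-closed subset $P \subseteq P'$. I construct the diagonal $p'' : P' \to \ES$ by induction, adjoining the events of $P' \setminus P$ one at a time along a linearisation of $\leq_{P'}$. At a stage where $p''$ is defined on a downward-closed $Q$ with $P \subseteq Q \subseteq P'$, $p''_{|P} = p$ and $f \circ p''_{|Q} = p'_{|Q}$, let $w$ be the next event (so $\causes{w} \setminus \{w\} \subseteq Q$); from the transition $Q \trans{w} Q \cup \{w\}$ in $P'$ we get $p'(Q) \trans{p'(w)} p'(Q) \cup \{p'(w)\}$ in $\ES'$, and since $(p''(Q), f_{|p''(Q)}, p'(Q)) \in R_f$, clause (2) yields $p''(Q) \trans{v} p''(Q) \cup \{v\}$ in $\ES$ with $f(v) = p'(w)$ and $f(p''(Q) \cup \{v\}) = p'(Q) \cup \{p'(w)\}$; set $p''(w) = v$. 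Once the finite set $P'$ is exhausted, the resulting $p''$ satisfies $p'' \circ e = p$ and $f \circ p'' = p'$, which is exactly the diagonal required for $\cat{Pom}$-openness.

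The step that needs genuine work — and which I expect to be the main obstacle — is verifying that each extension of $p''$ is still a morphism, i.e.\ that it sends every downward-closed $S \subseteq Q \cup \{w\}$ to a configuration of $\ES$. For $S$ not containing $w$ this is the induction hypothesis; for $S = R \cup \{w\}$ (with $R$ downward-closed in $Q$ and $\causes{w} \setminus \{w\} \subseteq R$) one must show $p''(R) \cup \{v\} \sqsubseteq p''(Q) \cup \{v\}$, and this hinges on pinning down the history of the new event, namely $\hist{(p''(Q) \cup \{v\})}{v} = p''(\causes{w} \setminus \{w\}) \cup \{v\}$, obtained by transporting the equality $\hist{(p'(Q) \cup \{p'(w)\})}{p'(w)} = p'(\causes{w})$ through the configuration-isomorphisms induced by $f$ and $p'$. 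Granted this, the prefix relation follows from $p''_{|Q}$ reflecting order and $R$ being downward-closed, while label preservation and injectivity on $Q \cup \{w\}$ are immediate ($\lambda(v) = \lambda(p'(w)) = \lambda(w)$ and $v \notin p''(Q)$). The remaining arguments are routine manipulations of prefixes and of the bisimulation clauses.
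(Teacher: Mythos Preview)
Your proof is correct. The direction ``$\cat{Pom}$-open $\Rightarrow$ folding'' matches the paper exactly: both build the square from the configurations $C_1$ and $C_2'$ viewed as pomsets, with $f_{|C_1}$ as the $\cat{Pom}$-arrow, and read off the lifted transition from the diagonal.

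For ``folding $\Rightarrow$ $\cat{Pom}$-open'' you and the paper diverge. You extend the diagonal one event at a time along a linearisation of $P'$, invoking clause~(2) of the hhp-bisimulation at each step and then verifying that the partial map remains a morphism; this is where your ``main obstacle'' (pinning down the history of the new event) arises. The paper instead does the lifting in a single shot: it observes that the whole pomset $P'$ (being finite and conflict-free) is itself a configuration, so $p'(P') \in \conf{\ES'}$ and $f(p(P)) \sqsubseteq p'(P')$ gives a single (multi-event) transition $f(p(P)) \trans{X'} p'(P')$; the folding lifts this to $p(P) \trans{X} D$ with $f(D) = p'(P')$, and the diagonal is then simply $p'' = (f_{|D})^{-1} \circ p'$, automatically a morphism because it is a poset isomorphism onto a configuration of $\ES$. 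This buys you the morphism property for free and avoids the inductive bookkeeping entirely. (The paper is tacitly using that an hhp-bisimulation lifts multi-event transitions, which follows by the same linearisation-induction you perform, but packaged once and for all rather than interleaved with the diagonal construction.) Your explicit handling of finiteness of $P'$ is a point the paper leaves implicit.
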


\begin{proof}
  Let $f$ be a folding. In order to prove that $f$ is a
  $\cat{Pom}$-open map, assume to have a commuting square as in
  \cref{de:open-map}. Since $C$ is a conflict-free prime
  event structures, its set of events, ordered by causality, which
  abusing the notation, we still denote by $C$ is a
  configuration. Since $c$ is a morphism $c(C) \in \conf{\ES}$ and
  $c(C) \simeq C$, and thus $f(c(C)) \in \conf{\ES'}$ and
  $f(c(C)) \simeq C$. Similarly, $c'(C') \in \conf{\ES'}$ and
  $c'(C') \simeq C'$.  Finally observe that $e(C) \sqsubseteq
  C'$. Thus $c'(e(C)) = f(c(C)) \sqsubseteq c'(C')$, meaning that
  $f(c(C)) \trans{X'} c'(C')$ for a suitable $X'$. By definition of
  folding, there must be a transition $c(C) \trans{X} D$ such that
  $f(D) = c'(C')$. Therefore, we can define $c'' : C' \to \ES$ as
  follows: for all $x' \in C'$, let $c''(x')$ be the unique $y \in D$
  such that $f(y) = c'(x')$.
  
  Conversely, assume that $f$ is an $\cat{Pom}$-open map. We show that
  $f$ satisfies the condition of \cref{le:morph-to-fold}. Let
  $C_1 \in \conf{\ES}$ and consider a transition
  $f(C_1) \trans{x'} C_2'$. If we view configurations $C_1, C_2'$ as
  pomsets, then we can build the following commuting square
  \begin{center}
      \begin{tikzcd}%
      C_1 \arrow[hookrightarrow, r] \arrow[d, "f_{|C_1}" left] & \ES  \arrow[d, "f" right]\\
      C_2' \arrow[hookrightarrow, r]  \arrow[ru, dotted, "c''"] & \ES'
      & {}
    \end{tikzcd}
  \end{center}
  By the fact that $f$ is open, we get the morphism $c''$, and it is
  immediate to see that $C_1 \trans{x} c''(C_2')$ is the desired
  transition that completes the proof.
\end{proof}

\lefoldisequivalence*

\begin{proof}
  Consider the function $g : \quotient{E}{\equiv_f} \to \ES'$ defined by
  $g(\eqclass{x}{\equiv_f}) = f(x)$. It is well defined, since all
  elements in $\eqclass{x}{\equiv_f}$ have the same $f$-image, and
  clearly injective. Moreover, it is also surjective. In fact, if
  $x' \in \ES'$ then there exists $C' \in \conf{\ES'}$ such that
  $x' \in C'$. By \cref{le:conf-reach}, configuration $C'$
  is reachable from the empty one, and thus, since $f$ is an
  hp-bisimulation, there exists $C \in \conf{\ES}$ such that
  $C' = f(C)$. Therefore there is $e \in C$ such that $f(x)=x'$ and
  thus $g(\eqclass{x}{\equiv_f})=x'$.

  Finally, observe that by definition, for all configuration
  $C' \in \conf{\quotient{\ES}{\equiv_f}}$, we have $g(C') \simeq C'$,
  hence we conclude.
\end{proof}

\begin{alemma}[factorising morphisms]
  \label{le:factor-morph}
  Let $\ES$, $\ES'$, $\ES''$ be event structures and let
  $f : \ES'' \to \ES'$ be a morphism and $h : \ES'' \to \ES$ be a folding.
  Let
  $g : \ES \to \ES'$ be a function such that $f = g \circ h$.
  \begin{center}
    \begin{tikzcd}%
      \ES \arrow[r, "g" above]
      &
      \ES'\\
      \ES'' \arrow[ur, "f" below] \arrow[u, "h" left]
      & {}
    \end{tikzcd}
  \end{center}
  Then $g$ is a morphism. Moreover, if $f$ is a folding then $g$ is. 
\end{alemma}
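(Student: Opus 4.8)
The plan is to verify the two morphism conditions of \cref{de:es-morphism} for $g$ directly, and then---under the extra hypothesis that $f$ is a folding---to conclude via \cref{le:morph-to-fold}. The device that makes everything work is that a folding lets one lift configurations backwards: for every $C \in \conf{\ES}$ there is a $D \in \conf{\ES''}$ with $h(D) = C$ and $h_{|D} : D \to C$ an isomorphism of configurations. I would establish this first. Since $h$ is a folding, $R_h = \{(D, h_{|D}, h(D)) \mid D \in \conf{\ES''}\}$ is a hhp-bisimulation containing $(\emptyset,\emptyset,\emptyset)$, and by \cref{le:conf-reach} the configuration $C$ is reached from $\emptyset$ by a sequence of singleton transitions; applying condition~(2) of \cref{de:hp-bisim} step by step along this sequence yields the required $D$. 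In particular, taking the union over all $C$, this shows $h$ is surjective.

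Next I would show that $g$ is a morphism. Label preservation is immediate: since $f$ and $h$ preserve labels and $f = g \circ h$, we get $\lambda_{\ES'} \circ g \circ h = \lambda_{\ES} \circ h$, and surjectivity of $h$ cancels $h$ to give $\lambda_{\ES'} \circ g = \lambda_{\ES}$. For the configuration condition, fix $C \in \conf{\ES}$ and pick $D$ as above. Writing elements of $C$ as $h(\tilde x)$ with $\tilde x \in D$, the poset $g(C)$ (with $g(h(\tilde x)) \leq_{g(C)} g(h(\tilde y))$ iff $h(\tilde x) \leq_C h(\tilde y)$) coincides with $f(D)$: as sets $g(C) = g(h(D)) = f(D)$, and, using that $h_{|D}$ is an order isomorphism onto $h(D)$ and that $f$ is a morphism, $g(h(\tilde x)) \leq_{g(C)} g(h(\tilde y))$ iff $\tilde x \leq_D \tilde y$ iff $f(\tilde x) \leq_{f(D)} f(\tilde y)$. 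Hence $g(C) = f(D) \in \conf{\ES'}$. Injectivity of $g$ on $C$ is similar: if $g(h(\tilde x)) = g(h(\tilde y))$ then $f(\tilde x) = f(\tilde y)$, so $\tilde x = \tilde y$ because $f$ is injective on $D$, whence $h(\tilde x) = h(\tilde y)$.

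Finally, assuming $f$ is a folding, I would invoke \cref{le:morph-to-fold}: given $C_1 \in \conf{\ES}$ and a transition $g(C_1) \trans{x'} C_2'$, one must find $C_1 \trans{x} C_2$ with $g(C_2) = C_2'$. Lift $C_1$ to $D_1 \in \conf{\ES''}$ with $h(D_1) = C_1$; then $g(C_1) = f(D_1)$, so $f(D_1) \trans{x'} C_2'$, and since $R_f$ is a hhp-bisimulation, condition~(2) of \cref{de:hp-bisim} applied to $(D_1, f_{|D_1}, f(D_1)) \in R_f$ gives $D_1 \trans{\tilde x} D_2$ with $f(D_2) = C_2'$. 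Set $C_2 := h(D_2) \in \conf{\ES}$. From $D_1 \sqsubseteq D_2$ one gets $h(D_1) \sqsubseteq h(D_2)$ (morphisms preserve prefixes: for $\tilde a \in D_1$, $\tilde b \in D_2$ with $h(\tilde b) \leq_{h(D_2)} h(\tilde a)$ one has $\tilde b \leq_{D_2} \tilde a$, hence $\tilde b \in D_1$ since $D_1 \sqsubseteq D_2$, and then $\tilde b \leq_{D_1} \tilde a$), so $C_1 \sqsubseteq C_2$; and as $h$ is injective on $D_2$ with $D_2 \setminus D_1 = \{\tilde x\}$, we have $C_2 \setminus C_1 = \{h(\tilde x)\}$, i.e.\ $C_1 \trans{h(\tilde x)} C_2$. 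Then $g(C_2) = g(h(D_2)) = f(D_2) = C_2'$, which is exactly what \cref{le:morph-to-fold} requires. The only step that is not pure bookkeeping is the backward lifting of configurations along $h$, which is precisely where the folding hypothesis on $h$ (together with reachability of configurations) is used; everything else is a routine chase with the image-of-configuration conventions and the fact that morphisms preserve the prefix order.
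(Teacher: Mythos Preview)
Your proof is correct and follows essentially the same approach as the paper: lift configurations of $\ES$ back to $\ES''$ along the folding $h$, then transport properties of $f$ to $g$ via $f = g \circ h$. You are simply more explicit than the paper about several steps it leaves implicit---the reachability-based justification for the lifting, label preservation, injectivity on configurations, and why $h$ preserves prefixes---but the skeleton of the argument is identical.
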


\begin{proof}
  Let us show that $g$ is a morphism. For all $C \in \conf{\ES}$, since
  $h$ is a folding, there exists $C'' \in \conf{\ES''}$ such that
  $h(C'') = C$ and $C'' \simeq C$.
  Since $f$ is a morphism $f(C'') \in \conf{\ES'}$. Therefore
  $g(C) = g(h(C'')) = f(C'')$, as desired.

  Let assume now that $g$ is a folding. Let $C_1 \in \conf{\ES}$ and
  suppose that there is a transition $g(C_1) \trans{x'} C_2'$. Since
  $h$ is a folding, there is a configuration $C_1'' \in \conf{\ES''}$
  such that $C_1 = h(C_1'')$. Therefore
  $f(C_1'') = g(h(C_1'') = g(C_1) \trans{x'} C_2'$.  Since $f$ is a
  folding there is a transition $C_1'' \trans{x''} C_2''$ with
  $f(C_2'') = C_2'$. Therefore
  $h(C_2'') = C_1 \trans{h(x'')} h(C_2'')$ with
  $g(h(C_2'')) = f(C_2) = C_2'$, as desired.
\end{proof}

\prfoldjoin*

\begin{proof}
  We actually show that the construction described in the statement
  produces the pushout in the category $\cat{ES}$ and also in
  $\cat{ES_f}$. Consider the diagram
  \begin{center}
    \begin{tikzcd}[row sep=small]      
      &
      \ES \arrow[dl, "f'" above] \arrow[dr, "f''" above]
      &
      \\
      \ES' \arrow[dr, "g'" below]
      &
      {}
      &
      \ES''  \arrow[dl, "g''" below]\\
      &
      \ES'''
      &
    \end{tikzcd}
  \end{center}

  Observe that $\ES'''$, with functions $g'$ and $g''$ is the pushout
  in $\cat{Set}$, as it easily follows recalling that $f'$ and $f''$ are
  surjective. Another immediate observation is that the set of
  configurations of $\ES'''$ can be written
  \begin{equation}
    \label{eq:conf-po}
    \conf{\ES'''} = \{ g'(f'(C)) \mid C \in \conf{\ES} \} = \{
    g''(f''(C)) \mid C \in \conf{\ES} \}
  \end{equation}

  We prove that $g'$ is a folding. In fact
  \begin{itemize}
    
  \item $g'$ is a morphism.\\
    For all $C' \in \conf{\ES'}$, since $f'$ is a folding, there is
    $C \in \conf{\ES}$ such that $f'(C) = C'$. Therefore
    $g'(C') = g'(f'(C)) \in \conf{\ES'''}$, by construction. Moreover,
    $g'$ is injective on $C'$. In fact, take $x', y' \in C'$, with
    $g'(x') = g'(y')$. Since $C' = f'(C)$, there are $x, y \in C$ such
    that $f'(x)=x'$ and $f'(y)=y'$. Therefore, $g'(f'(x))=g'(f'(y))$,
    and thus, by the properties of pushouts, $f''(x) = f''(y)$. Since
    $f''$ is a folding, thus a morphism, this implies $x=y$ and thus
    $x' = f'(x)= f'(y)=y'$, as desired.
    
  \item $g'$ is a folding.\\
    Let $C_1' \in \conf{\ES'}$ and assume that
    $f'(C_1') \trans{x'''} D_2'''$. By (\ref{eq:conf-po}) we know that
    there is $D_2 \in \conf{\ES}$ such that $D_2''' = g'(f'(D_2))$ and
    $D_2 \simeq D_2'''$. Therefore, there is $D_1 \sqsubseteq D_2$
    such that $f'(g'(D_1)) = g'(C_1')$ and
    \begin{equation}
      \label{eq:step-po}
      D_1 \trans{x} D_2.
    \end{equation}
    
    Define $D_1' = f'(D_1) \in \conf{\ES'}$.
    Now, since $f'$ is a folding and $C_1' \in \conf{\ES_1}$, there is
    also $C_1 \in \conf{\ES}$ such that $f'(C_1) = C_1'$. Recall that
    $g'(D_1') = f'(g'(D_1)) = g'(C_1')$, hence, by pushout properties,
    it must be $f''(C_1) = f''(D_1)$.
    From (\ref{eq:step-po}), since $f''$ is a folding, we deduce
    $f''(C_1) = f''(D_1) \trans{x''} D_2''$, with $f''(D_2) = D_2''$.
    And, using again the fact that $f''$ is a folding, this implies
    $C_1 \trans{y} C_2$, with $f''(C_2) = D_2'' = f''(D_2)$.

    Now, we use the fact that $f'$ is a folding, and derive that
    $C_1' = f'(C_1) \trans{f'(x_1)} f'(C_2)$. If we call
    $C_2' = f'(C_2)$, we have that $g'(C_2') = g'(D_2')$, as desired,
    since $f''(C_2) = f''(D_2)$.
  \end{itemize}
  In the same way, one concludes that also $g''$ is a folding.

  Given any other $\ES_1$ with morphisms $g_1' : \ES' \to \ES_1$ and
  $g_1'' : \ES'' \to \ES_1$ such that
  $g_1' \circ f' = g_2' \circ f''$, we show that there exists a unique
  morphism $h : \ES''' \to \ES_1$ that makes the diagram commute.
    \begin{center}
    \begin{tikzcd}[row sep=small]      
      &
      \ES \arrow[dl, "f'" above] \arrow[dr, "f''" above]
      &
      \\
      \ES' \arrow[dr, "g'" below] \arrow[ddr, bend right, "g_1'" below]
      &
      {}
      &
      \ES''  \arrow[dl, "g''" below] \arrow[ddl, bend left, "g_1''" below]\\
      &
      \ES''' \arrow[d, dotted, "h" left]
      &\\
      &
      \ES_1
      &
    \end{tikzcd}
  \end{center}
  Consider the unique map $h : \ES''' \to \ES_1$ making the diagram
  commute in $\cat{Set}$. Since $g'$ is a folding and $g_1'$ is a
  morphism, by \cref{le:factor-morph}, also $h$ is a morphism.
  This proves that $\ES'''$ is a pushout in $\cat{ES}$.
  
  By the same result, if $g_1'$ is a folding, also the mediating
  morphism $h$ is. This means that the same construction produces a
  pushout in $\cat{ES_f}$.
\end{proof}

As a counterexample to the existence of  pushouts in $\cat{ES}$ for general morphisms, consider the obvious mappings $f_{45} : \ES[P_4] \to \ES[P_5]$ and $f_{46} : \ES[P_4] \to \ES[P_6]$ in Fig.~\ref{fi:no-pushout}.
  
\begin{figure}
  \begin{center}
  \setlength{\tabcolsep}{24pt}
  \begin{tabular}{ccc}
    \begin{tikzcd}[boxedcd, sep=4mm]
      a_1
      \arrow[d] \arrow[conflict, r]
      & a_2 \arrow[d] \\
      b_1           & b_2 
    \end{tikzcd}
    &
    \begin{tikzcd}[boxedcd, sep=4mm]
      a_1
      \arrow[d]
      & a_2 \arrow[d] \\
      b_1           & b_2 
    \end{tikzcd}
    &
    \begin{tikzcd}[boxedcd, row sep=4mm, column sep=3mm]
      a_{12} \arrow[d] \\
      b_{12}
    \end{tikzcd}
    \\
    $\ES[P_4]$ & $\ES[P_5]$ & $\ES[P_6]$
  \end{tabular}
  \end{center}
  \caption{Non existence of pushout of general morphisms}
  \label{fi:no-pushout}
\end{figure}
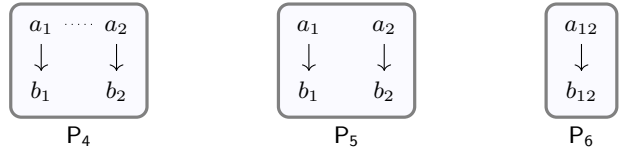

\begin{alemma}[multi-colimit]
  \label{le:multi}
  Let $\ES$ be an event structure. Each collection of foldings
  $f_i : \ES \to \ES_i$ with $i \in I$ has a colimit in
  $\cat{ES}$. Therefore the coslice
  category $(\ES \downarrow \cat{ES_f})$ has a terminal object.
\end{alemma}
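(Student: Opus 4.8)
The plan is to realise the colimit as the quotient of $\ES$ by the join of the folding equivalences $\equiv_{f_i}$, reducing the infinite family to the binary pushouts already supplied by \cref{pr:fold-join}. Let $\equiv$ be the transitive closure of $\bigcup_{i\in I}\equiv_{f_i}$, and for each \emph{finite} $J\subseteq I$ let $\equiv_J$ be the transitive closure of $\bigcup_{i\in J}\equiv_{f_i}$. By iterating \cref{pr:fold-join} and composing the resulting foldings with the $f_i$ (using \cref{le:folding-composition}), one proves by induction on $|J|$ that each $\equiv_J$ is a folding equivalence, realised by a folding $q_J:\ES\to\ES_J$ with $\ES_J=\quotient{\ES}{\equiv_J}$. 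Since every instance of $x\equiv y$ is witnessed by a finite zigzag, $\equiv$ is the union of the $\equiv_J$ over finite $J\subseteq I$. Set $\ES'''=\quotient{\ES}{\equiv}$ with quotient map $q:\ES\to\ES'''$; as $\equiv_J\subseteq\equiv$, each $q_J$ is followed by a function $p_J:\ES_J\to\ES'''$ with $p_J\circ q_J=q$, and we put $g_i:\ES_i\to\ES'''$, $g_i(f_i(x))=\eqclass{x}{\equiv}$, so that $g_i\circ f_i=q$ and each $g_i$ is surjective.

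First I would check that $q$ is a morphism, so that $\ES'''$ is a genuine coherent poset event structure. The crucial point is that $q$ is injective on each configuration $C\in\conf{\ES}$: if $x,y\in C$ with $x\equiv y$, then, $C$ being finite, $x\equiv_J y$ for some finite $J$, and $q_J$, being a folding, is injective on $C$, forcing $x=y$. Given this, the local order transported along the bijection $q|_C$ is well defined, $q$ preserves configurations by construction, and one checks that $\conf{\ES'''}$ is a coherent family of posets (prefix-closedness is routine; coherence is obtained by amalgamating finitely many lifts through the finite stages $\ES_J$). Hence $q$ is a morphism, and then $p_J$ and each $g_i$ are morphisms by \cref{le:factor-morph}.

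Next I would show $q$ is a folding via \cref{le:morph-to-fold}. Given $C_1\in\conf{\ES}$ and a transition $q(C_1)\trans{x'''}D'''$, write $D'''=q(\tilde C_2)$; since $q|_{\tilde C_2}$ is an order-isomorphism onto $q(\tilde C_2)$, the prefix $q(C_1)\sqsubseteq D'''$ lifts to a prefix $\tilde C_1\sqsubseteq\tilde C_2$ with $q(\tilde C_1)=q(C_1)$, so $\tilde C_1\trans{\tilde x}\tilde C_2$. Choose a finite $J$ large enough that $q_J(\tilde C_1)=q_J(C_1)$, which is possible because $C_1,\tilde C_1$ are finite and $q$ merges two events exactly when some finite stage $q_J$ does. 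Then in $\ES_J$ we have $q_J(C_1)=q_J(\tilde C_1)\trans{q_J(\tilde x)}q_J(\tilde C_2)$, and since $q_J$ is a folding this single-event transition lifts to some $C_1\trans{x}C_2$ with $q_J(C_2)=q_J(\tilde C_2)$, whence $q(C_2)=p_J(q_J(C_2))=p_J(q_J(\tilde C_2))=D'''$, as required. Thus $q$ is a folding, and so is each $g_i$ by \cref{le:factor-morph} again (as $g_i\circ f_i=q$ with $f_i$ a folding). This lifting step — funnelling a transition of the heavily collapsed $\ES'''$ all the way back to $\ES$ through a single finite stage — is the part I expect to be the main obstacle, and it is also the reason the infinite case needs no genuinely new idea beyond \cref{pr:fold-join} and the finiteness of configurations; the other delicate point is the coherence of $\conf{\ES'''}$, which is again localised to finitely many configurations and their finite-stage lifts.

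Finally, the universal property. For a cocone given by morphisms $h_i:\ES_i\to\ES_1$ with all composites $a:=h_i\circ f_i$ equal, $a$ identifies any two $\equiv_{f_i}$-related events, hence factors through $q$ as a function $u:\ES'''\to\ES_1$, unique since $q$ is surjective; $u$ is a morphism by \cref{le:factor-morph} (a folding if the $h_i$ are), and $u\circ g_i=h_i$ follows by cancelling the surjection $f_i$. So $\ES'''$ together with $q$ and the $g_i$ is the colimit, in $\cat{ES}$ and in $\cat{ES_f}$. For the last assertion, instantiate $I$ as the set $\foldeq{\ES}$ of folding equivalences (a set by \cref{le:fold-is-equivalence}, with the quotient maps as the corresponding foldings): the resulting colimit $t:\ES\to\ES'''$ receives, from every folding $f:\ES\to\ES'$, the unique coslice morphism $g_f$, hence is a terminal object of $(\ES\downarrow\cat{ES_f})$.
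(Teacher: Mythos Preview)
Your proposal is correct and follows essentially the same route as the paper: build the quotient as the $\cat{Set}$ colimit, handle finite $I$ by iterating \cref{pr:fold-join}, and for infinite $I$ exploit finiteness of configurations to reduce any identification $q(C)=q(C')$ to a finite stage $q_J$, then replay the argument of \cref{pr:fold-join}. The paper's proof is terser---it just names the ``delicate point'' (that $q(C)=q(C')$ need not come from a single $f_j$ but does come from some finite-stage $f_J$) and says the rest goes through as in \cref{pr:fold-join}---whereas you spell out the lifting of a transition through a finite stage and flag coherence of $\conf{\ES'''}$ explicitly; but the architecture is the same.
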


\begin{proof}
  When $I$ is finite, the proof proceeds by straightforward induction
  on $I$, using \cref{pr:fold-join}.
  If instead $I$ is infinite, let $\ES'$ be the colimit of the
  $f_i$'s in $\cat{Set}$.
  \begin{center}
    \begin{tikzcd}[row sep=small]      
      & &
      \ES \arrow[dll, "f_i'" above] \arrow[dl, "f_j" below right]
      &
      \\
      \ES_i \arrow[drr, "g_i" below]
      &
      \ES_j  \arrow[dr, "g_j" above right]
      &
      \ldots
      \\
      &
      &
      \ES'
      &
    \end{tikzcd}
  \end{center}
  with configurations
  $\conf{\ES'} = \{ g_i(f_i(C)) \mid C \in \conf{\ES} \}$. The proof
  of the fact that the $g_i$'s are foldings then proceeds as in
  \cref{pr:fold-join}.
  The only delicate point is the following. Given configurations
  $C, C' \in \conf{\ES}$, define $C_i = f(C)$ and
  $C_i'= f(C') \in \conf{\ES_i}$. If $g_i(C_i) = g_i(C_i')$, then it
  is not necessarily the case that $f_j(C) = f_j(C')$ for some
  $j \in I$. However, since configurations are finite, there is a
  finite subset $J \subseteq I$ such that, if $\ES_J$ is the colimit
  of $\{ f_j \mid j \in J \}$ and $f_J : \ES \to \ES_J$ the
  corresponding folding, whose existence is proved in the first part,
  then $f_J(C) = f_J(C')$.
  Exploiting this fact, we can conclude exactly as in \cref{pr:fold-join}.
\end{proof}

\prfoldinglattice*

\begin{proof}
  Immediate corollary of Lemma~\ref{le:multi}.
\end{proof}

\begin{alemma}[hhp-bisimulation as an event structure]
  \label{le:bisim-as-es}
  Let $\ES'$, $\ES''$ be event structures and let $R$ be a
  hhp-bisimulation between them. Then there exists a (prime) event
  structure $\ES_R$ and two foldings $\pi' : \ES_R \to \ES'$ and
  $\pi'' : \ES_R \to \ES''$.
\end{alemma}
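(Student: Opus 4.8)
The plan is to build $\ES_R$ directly out of $R$, generalising the prime-event-structure construction of~\cite{Ber:HHP}. I would take as events of $\ES_R$ the \emph{pointed} triples of $R$: those $(H_1, g, H_2) \in R$ whose first component $H_1$ is a history of $\ES'$ --- equivalently, since $R$ is downward closed and histories are prefixes (\cref{le:hist}), whose second component $H_2 = g(H_1)$ is a history of $\ES''$. For each triple $(C,f,C') \in R$, call the event set $\{(\hist{C}{x}, f_{|\hist{C}{x}}, f(\hist{C}{x})) \mid x \in C\}$ the \emph{$R$-configuration of $(C,f,C')$}; its elements are pointed triples of $R$, by downward closure of $R$. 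I would label the event $(H_1,g,H_2)$ by the common label of $\max H_1$ and $\max H_2$; order events by $(H_1,g,H_2) \leq (H_1',g',H_2')$ iff $H_1 \sqsubseteq H_1'$, $H_2 \sqsubseteq H_2'$ and $g \subseteq g'$; and declare two events in conflict iff no $R$-configuration contains both. Since histories are finite posets, each event has finitely many $\leq$-predecessors; conflict is visibly irreflexive (every event lies in the $R$-configuration of the triple it \emph{is}), symmetric and inherited along $\leq$. So $\ES_R$ is a {\pes}.

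Next I would show that $\conf{\ES_R}$ is exactly the set of $R$-configurations, which are moreover in bijection with $R$. One direction is routine: an $R$-configuration is finite, pairwise consistent by construction, and $\leq$-downward closed (a $\leq$-predecessor of $(\hist{C}{x}, f_{|\hist{C}{x}}, f(\hist{C}{x}))$ is forced by the $g \subseteq g'$ clause to be $(\hist{C}{y}, f_{|\hist{C}{y}}, f(\hist{C}{y}))$ for some $y \leq_C x$), hence a configuration of the {\pes} $\ES_R$. The converse --- that every configuration of $\ES_R$ is an $R$-configuration --- is the step I expect to be the main obstacle, since it is where coherence has to be used. I would prove it by induction on the size of a configuration $D$ of $\ES_R$: remove a $\leq$-maximal event $e$, so that $D \setminus \{e\}$ is the $R$-configuration of some $(C_0, f_0, C_0')$ by induction; the causes of $e$ pin down a prefix $\tilde C \sqsubseteq C_0$ with $\tilde C \trans{x} H$ in $\ES'$, where $H$ is the history of $e$; then, using the pairwise consistency of $D$ together with coherence of $\ES'$ and $\ES''$ and the zig-zag conditions on $R$, one assembles the triple of $R$ whose $R$-configuration is $D$. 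Alternatively, and more smoothly, this can be read off from the correspondence between families of posets and finitary coherent prime algebraic domains of~\cite{Ren:PC,Win:ES}, exactly as in the argument that $\pr(\ES)$ is well defined (\cref{de:pes-es}). Either way $\conf{\ES_R}$ is the set of $R$-configurations.

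Finally I would define $\pi' : \ES_R \to \ES'$ and $\pi'' : \ES_R \to \ES''$ by $\pi'(H_1,g,H_2) = \max H_1$ and $\pi''(H_1,g,H_2) = \max H_2$. They preserve labels by construction; on the $R$-configuration of $(C,f,C')$ the map $x \mapsto (\hist{C}{x}, f_{|\hist{C}{x}}, f(\hist{C}{x}))$ is a bijection inverted by $\pi'$, and since $\hist{C}{x} \sqsubseteq \hist{C}{y}$ iff $x \leq_C y$ the local order is transported faithfully, so $\pi'$ is injective on configurations and sends the $R$-configuration of $(C,f,C')$ to $C \in \conf{\ES'}$; hence $\pi'$ and, symmetrically, $\pi''$ are morphisms. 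To see they are foldings I would invoke \cref{le:morph-to-fold}, so that it suffices to lift transitions out of $\pi'$-images: given $(C,f,C') \in R$ and a transition $C \trans{x'} C_2$ in $\ES'$, clause~(1) of \cref{de:hp-bisim} applied to $R$ yields $C' \trans{x''} C_2'$ with $(C_2, f[x'\mapsto x''], C_2') \in R$; since the histories of the events already present do not change when $C$ grows to $C_2$, the $R$-configuration of $(C_2, f[x'\mapsto x''], C_2')$ extends that of $(C,f,C')$ by the single event $(\hist{C_2}{x'},\dots)$, giving a transition of $\ES_R$ whose $\pi'$-image is $C \trans{x'} C_2$. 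Thus $\pi'$, and symmetrically $\pi''$, is a folding. Everything beyond the identification of $\conf{\ES_R}$ (equivalently, the coherence of the construction) is bookkeeping paralleling the {\pes}-level argument of~\cite{Ber:HHP}.
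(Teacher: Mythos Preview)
Your proposal is correct and follows essentially the same construction as the paper: the same events (history triples in $R$), the same causality (prefix/inclusion on the components), the same conflict (non-coexistence in any $R$-configuration), the same projections, and the same appeal to \cref{le:morph-to-fold} for the folding property. The only notable difference is emphasis: the paper first declares $\conf{\ES_R}$ to be the set of $R$-configurations and then remarks that this happens to be a {\pes}, whereas you define the {\pes} first and then argue that its configurations coincide with the $R$-configurations; you rightly flag this identification (and the underlying use of coherence) as the one non-trivial step, which the paper leaves as ``it can be seen''.
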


\begin{proof}
  Let $\ES'$, $\ES''$ be event structures and let $R$ be a
  hhp-bisimulation between them. Define $\ES_R$ as follows. Events are
  histories related by $R$, namely the triples
  $\{ (H', f, H'') \} \mid H' \in \Hist{\ES'} \}$, labelled by
  $\lambda_{\ES_R}(H', f, H'') = \lambda_{\ES}(x')$ when
  $H' \in \Hist{x'}$.
  For each $(C', f, C'') \in R$, define
  \begin{center}
    $C_f = \{ (\hist{C'}{x'} , f_{|\hist{C'}{x'}}, \hist{C''}{f(x')})
    \mid x \in C' \}$
  \end{center}
  ordered by pointwise inclusion, i.e.,
  $(H'_1, f_1, H''_1) \leq_{C_f} (H'_2, f_2, H''_2)$ if
  $f_1 \subseteq f_2$, and thus $H'_1 \subseteq H''_1$,
  $H'_2 \subseteq H''_2$.
  The set of configurations of $\ES_R$  is
  $\conf{\ES_R} = \{ C_R \mid C \in \conf{\ES}\}$.

  It is easy to see that $\conf{\ES_R}$ is
  well-defined. Prefix-closedness of $\conf{\ES_R}$ follows from the
  fact that $R$ is downward-closed by definition of
  hhp-bisimulation. It can be seen that $\ES_R$ is actually a prime
  event structure, with causality defined by
  $(H'_1, f_1, H''_1) \leq (H'_2, f_2, H''_2)$ if
  $H'_1 \sqsubseteq H'_2$ and $f_1 \sqsubseteq f_2$, and conflict
  defined by $(H'_1, f_1, H''_1) \# (H'_2, f_2, H''_2)$ if there is no
  $(C',f, C'') \in R$ such that $H'_1, H'_2 \sqsubseteq C'$ and
  $f_1, f_2 \subseteq f$.

  Consider two configurations $C_{f_1}, C_{f_2} \in \conf{\ES_R}$,
  arising from the triples $(C'_i, f_i, C''_i) \in R$, for
  $i \in \{1,2\}$. Then it holds that
  \begin{quote}
    $C_{f_1} \sqsubseteq C_{f_2}$\\
    iff $C_{f_1} \subseteq C_{f_2}$\\
    iff for all $x' \in C'_1$, $(\hist{C_1'}{x'} , {f_1}_{|\hist{C_1'}{x'}}, \hist{C_1''}{f_1(x')}) \in C_{f_2}$\\
    iff for all $x' \in C'_1$, $\hist{C_1'}{x'}=\hist{C_2'}{x'}$ and $f_1(x') = f_2(x')$\\
    iff $C_1' \sqsubseteq C_2'$ and $f_1 \subseteq f_2$.
  \end{quote}

  We can now define $\pi' : \ES_R \to \ES'$ as $\pi'(H',f,H'') = x'$
  if $H' \in \Hist(x')$ and, similarly, $\pi'' : \ES_R \to \ES'$ as
  $\pi''(H',f,H'') = x''$ if $H'' \in \Hist(x'')$.

  Then $\pi'$ and $\pi''$ are well-defined morphisms and they are
  foldings. We prove this for $\pi'$ (for $\pi''$ the proof is
  completely analogous).

  \begin{itemize}
    
  \item $\pi'$ is a morphism.\\
    This is immediate by observing that for any configuration
    $C_f \in \conf{\ES_R}$, arising from the triple
    $(C', f, C'') \in R$, then we have $\pi'(C_f) = C'$. Note that,
    concerning the local order, for $x', y' \in C'$ we have
    $(\hist{C'}{x'} , f_{|\hist{C'}{x'}}, \hist{C''}{f(x')})
    \leq_{C_f} (\hist{C'}{y'} , f_{|\hist{C'}{y'}},
    \hist{C''}{f(y')})$ iff inclusion holds pointwise iff
    $x' \in \hist{C'}{y'}$ iff $x' \leq_{C'} y'$, which means
    $\pi'(\hist{C'}{x'}) = x' \leq_{C'} y' =\pi'(\hist{C'}{y'})$.
    
  \item $\pi$ is a folding.\\
    In fact, for any configuration $C_f \in \conf{\ES_R}$, arising
    from the triple $(C', f, C'') \in R$, if
    $\pi'(C_f) = C' \trans{x'} D'$ then, since $R$ is an
    hhp-bisimulation, there is $C'' \trans{x''} D''$ with
    $(C'', g, D'') \in R$ with $g = f[x' \mapsto x'']$. Hence, if we
    let $H' = \hist{D'}{x'}$, we have that
    $C_f \trans{(H', g_{|H'}, g(H'))} C_g$ and $\pi'(C_g) = D'$, as
    desired.
  \end{itemize}
\end{proof}

\prfoldsubsume*

\begin{proof}
  Let $R$ be a hhp-bisimulation between $\ES$ and
  $\quotient{\ES}{\equiv_f}$. Consider the event structure $\ES_R$ and
  the foldings $\pi : \ES_R \to \ES$ and
  $\pi' : \ES_R \to \quotient{\ES}{\equiv_f}$, given by
  \cref{le:bisim-as-es}.
  By \cref{pr:fold-join} we can close the diagram as follows:
    \begin{center}
    \begin{tikzcd}[row sep=small]      
      &
      \ES_R \arrow[dl, "\pi" above] \arrow[dr, "\pi'" above]
      &
      \\
      \ES \arrow[dr, "g" below]
      &
      {}
      &
      \quotient{\ES}{\equiv_f}  \arrow[dl, "g'" below]\\
      &
      \ES''
      &
    \end{tikzcd}
  \end{center}
  and both $g$ and $g'$ are foldings. Then
  $\ES'' = \quotient{\ES}{\equiv_g} =
  \quotient{\left(\quotient{\ES}{\equiv_f}\right)}{\equiv_{g'}}$ and we conclude.
\end{proof}

\begin{alemma}[configurations of the canonical {\pes}]
  \label{le:conf-in-pes}
  Let $\ES$ be an event structure.  Then $\conf{\ES}$ and
  $\conf{\pr(ES)}$ seen as partial orders, ordered by prefix, are
  isomorphic.

  More in detail, for all $C \in \conf{\ES}$ it holds
  $\hset{C} = \{ \hist{C}{x} \mid x \in C \}$, with inclusion as
  local order, is in $\conf{\pr(\ES)}$. Moreover $C \simeq \hset{C}$
  and $\hset{\cdot} : \conf{\ES} \to \conf{\pr(\ES)}$ is a poset
  isomorphism.

  Its inverse is as follows. For $D \in \conf{\pr(\ES)}$
  consider $\flt{D} = \bigcup D$. Then, for each $x \in \flt{D}$ there
  exists a unique $H_x \in D$ such that $H_x \in \Hist{x}$. Define the
  order $\leq_{\flt{D}}$, for $x, y \in {\flt{D}}$, by
  $x \leq_{\flt{D}} y$ iff $x \in H_y$. Then $\flt{D} \in \conf{\ES}$
  and $\flt{D} \simeq D$ as posets.
\end{alemma}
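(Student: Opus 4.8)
The plan is to exhibit $\hset{\cdot}$ and $\flt{\cdot}$ as mutually inverse bijections between $\conf{\ES}$ and $\conf{\pr(\ES)}$ that both preserve the prefix order; the isomorphism of the two posets is then immediate. (Conceptually this is the domain/{\pes} correspondence specialised to $\conf{\ES}$ and made explicit.) Throughout I would use \cref{le:hist}: every history $\hist{C}{x}$ is a prefix of $C$ (hence a configuration of $\ES$), $C_1 \sqsubseteq C_2$ iff $\hist{C_1}{x} = \hist{C_2}{x}$ for all $x \in C_1$, and two compatible histories of the same event coincide. I also recall that in any {\pes} the prefix order between configurations is plain inclusion and the local order induced on a configuration is the restriction of causality; for $\pr(\ES)$ causality is $\sqsubseteq$, and $\sqsubseteq$ restricted to a set of histories all of which are prefixes of a single $C \in \conf{\ES}$ collapses to $\subseteq$, matching the ``inclusion as local order'' of the statement.

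For the direction $\ES \to \pr(\ES)$, first I would check $\hset{C} \in \conf{\pr(\ES)}$: it is finite; it is consistent, since any $\hist{C}{x}, \hist{C}{y} \in \hset{C}$ are prefixes of $C$, hence compatible, hence not inconsistent; and it is causally closed, since $H' \sqsubseteq \hist{C}{x}$ forces $H' \sqsubseteq C$, and writing $H' \in \Hist{y}$ with $y$ the top event of $H'$ one gets $H' = \hist{H'}{y} = \hist{C}{y}$ (by \cref{le:hist}) with $y \in C$, so $H' \in \hset{C}$. Next, $C \simeq \hset{C}$ via $x \mapsto \hist{C}{x}$: surjective by construction; injective because $x$ is the maximum of $\hist{C}{x}$; label-preserving by the definition of $\lambda'$ in \cref{de:pes-es}; and $x \leq_C y$ iff $\hist{C}{x} \subseteq \hist{C}{y}$. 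Finally $\flt{\hset{C}} = C$ as a poset: $\bigcup_{x \in C}\hist{C}{x} = C$ as sets, and for $x, y \in C$ the unique history of $y$ in $\hset{C}$ is $\hist{C}{y}$, so $x \leq_{\flt{\hset{C}}} y$ iff $x \in \hist{C}{y}$ iff $x \leq_C y$.

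The direction $\pr(\ES) \to \ES$ is the crux. Fix $D \in \conf{\pr(\ES)}$. I would first justify the parenthetical claim: given $x \in \flt{D}$, pick $H \in D$ with $x \in H$; then $\hist{H}{x} \sqsubseteq H$, so $\hist{H}{x} \in D$ by causal closure of $D$, and $H_x := \hist{H}{x} \in \Hist{x}$; uniqueness holds since any two histories of $x$ lying in $D$ are not inconsistent (as $D$ is consistent), hence compatible in $\ES$, hence equal by \cref{le:hist}. The main point is $\flt{D} \in \conf{\ES}$. The elements of $D$ are configurations of $\ES$ that are pairwise compatible (being pairwise non-inconsistent in $\pr(\ES)$ means, by the definition of $\#$ in \cref{de:pes-es}, being pairwise compatible in $\ES$), so by \emph{coherence} of the family of posets $D$ has an upper bound $C^{*} \in \conf{\ES}$, i.e.\ $H \sqsubseteq C^{*}$ for all $H \in D$. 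Then $\flt{D} = \bigcup D \subseteq C^{*}$ as sets; the order $\leq_{\flt{D}}$ is the restriction of $\leq_{C^{*}}$ (if $x \in H_y$ then $x \leq_{H_y} y$ and $H_y \sqsubseteq C^{*}$ give $x \leq_{C^{*}} y$; conversely if $x \leq_{C^{*}} y$ with $x \in \flt{D}$, the prefix property of $H_y \sqsubseteq C^{*}$ applied to $y \in H_y$ and $x \leq_{C^{*}} y$ yields $x \in H_y$); and the same prefix property gives that $z \leq_{C^{*}} x$ with $x \in \flt{D}$, $z \in C^{*}$ implies $z \in H_x \subseteq \flt{D}$. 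Hence $\flt{D} \sqsubseteq C^{*}$, so $\flt{D} \in \conf{\ES}$ by prefix-closedness. I expect this extraction of a common upper bound from coherence, together with the prefix-property bookkeeping, to be the only genuinely non-routine step.

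Finally I would close the loop. For $x \in \flt{D}$, $\hist{\flt{D}}{x} = \{ y \in \flt{D} \mid y \in H_x \} = H_x \in D$; conversely any $H \in D$, with top event $z$, satisfies $z \in \flt{D}$ and, by uniqueness, $H = H_z = \hist{\flt{D}}{z}$. Hence $\hset{\flt{D}} = D$, and combined with the second paragraph also $\flt{D} \simeq \hset{\flt{D}} = D$. So $\hset{\cdot}$ and $\flt{\cdot}$ are mutually inverse. They are monotone: $C_1 \sqsubseteq C_2$ gives $\hset{C_1} \subseteq \hset{C_2}$ by \cref{le:hist}, i.e.\ $\hset{C_1} \sqsubseteq \hset{C_2}$ in the {\pes} $\pr(\ES)$; and if $D_1 \subseteq D_2$ then $\flt{D_1} \subseteq \flt{D_2}$, and since the unique history at any $x \in \flt{D_1}$ is the same computed in $D_1$ or in $D_2$, the prefix property transfers to give $\flt{D_1} \sqsubseteq \flt{D_2}$. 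Therefore $\hset{\cdot} : \conf{\ES} \to \conf{\pr(\ES)}$ is a poset isomorphism with inverse $\flt{\cdot}$, which in particular gives the isomorphism of $\conf{\ES}$ and $\conf{\pr(\ES)}$ as posets ordered by prefix.
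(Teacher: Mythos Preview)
Your proof is correct and follows essentially the same approach as the paper: both establish that $\hset{C}$ is a configuration of $\pr(\ES)$, that $C \simeq \hset{C}$, and then use coherence to find an upper bound $C^*$ for the elements of $D$ and show $\flt{D} \sqsubseteq C^*$. The only organisational difference is that the paper proves injectivity of $\hset{\cdot}$ directly and then surjectivity via $\flt{\cdot}$, whereas you show $\flt{\hset{C}} = C$ and $\hset{\flt{D}} = D$ to get bijectivity; your version is also more explicit in places where the paper writes ``easy to check'' (notably the verification that $\flt{D} \sqsubseteq C^*$ and the existence part of $H_x$).
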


\begin{proof}
  Let $C \in \conf{\ES}$ and let us show that
  $\hset{C} = \{ \hist{C}{x} \mid x \in C \}$, with inclusion as local
  order, is in $\conf{\pr(\ES)}$. First, note that $\hset{C}$ is
  consistent by construction, since $\hist{C}{x} \sqsubseteq C$ for
  all $x \in C$. Moreover, it is causally closed. In fact, if
  $H \sqsubseteq \hist{C}{x}$ for some $H \in \Hist{\ES}$, then, if
  $H \in \Hist{y}$, by \cref{le:hist}(\ref{le:hist:3}) we have
  $H = \hist{\hist{C}{x}}{y} = \hist{C}{y} \in \hset{C}$ .
  Moreover, $\hset{C}$ is isomorphic to $C$, the isomorphism established
  by the mapping $\hist{C}{x} \mapsto x$. It is clearly
  bijective. Moreover, for all $x_1, x_2 \in C$ it holds that
  $\hist{C}{x_1} \subseteq \hist{C}{x_2}$ iff $x_1 \in \hist{C}{x_2}$
  and thus $x_1 \leq_C x_2$.

  Let us show that $\hset{\cdot} : \conf{\ES} \to \conf{\pr(\ES)}$ is
  a poset isomorphism. It is injective. In fact, if
  $\hset{C_1} = \hset{C_2}$ then clearly $C_1$ and $C_2$ contain the
  same events. Moreover, $\leq_{C_1} = \leq_{C_2}$ and thus the two
  configurations coincide. Otherwise, there would be $x, y \in C_1$
  such that $x \leq_{C_1} y$ and $\neg (x \leq_{C_2} y)$, or
  conversely $\neg(x \leq_{C_1} y)$ and $x \leq_{C_2} y$. Assume,
  without loss of generality, that we are in the first case. Then
  $x \in \hist{C_1}{y}$ and $x \not\in \hist{C_2}{y}$, and thus
  $\hset{C_1} \neq \hset{C_2}$ contradicting the hypotheses.
  Moreover, it preserves and reflects the prefix order, i.e., given
  $C_1, C_2 \in \conf{\ES}$ we have $C_1 \sqsubseteq C_2$ iff
  $\hset{C_1} \subseteq \hset{C_2}$ as it immediately follows from
  \cref{le:hist}(\ref{le:hist:3}).
  
  We conclude, by showing that it is also surjective.
  Consider any configuration $D \in \conf{\pr(\ES)}$. Since $D$ has no
  conflicts, its elements are pairwise compatible. Therefore, by
  coherence of the class of configurations, there exists
  $C \in \conf{E}$ such that $H \sqsubseteq C$ for all $H \in D$. Let
  $\flt{D} = \bigcup D$. Then, for each $x \in \flt{D}$ there exists a
  unique $H_x \in D$ such that $H_x \in \Hist{x}$, since by
  \cref{le:hist}(\ref{le:hist:3}) different histories of the same
  event are not compatible. Define the order $\leq_{\flt{D}}$, for
  $x, y \in {\flt{D}}$, by $x \leq_{\flt{D}} y$ iff $x \in H_y$. It is
  easy to check that $\flt{D} \sqsubseteq C$, and thus by prefix
  closedness of $\conf{\ES}$, we have $\flt{D} \in \conf{\ES}$. It is
  now immediate to see that $\hset{\flt{D}} = D$, thus we conclude.  
\end{proof}

\leunfespes*

\begin{proof}
  This results can be derived from the characterisation of foldings as
  $\cat{Pom}$-open maps (\cref{le:folding-open}), the fact that
  $\cat{PES}$ is a coreflective subcategory of $\cat{ES}$
  (\cref{le:factor-fold}) and then using~\cite[Lemma 6(ii)]{JNW:BFOM}.
  
  Explicitly, the fact that $\phi_{\ES}$ is a morphism immediately
  follows from the observation that $\phi_{\ES}(D) = \flt{D}$. Then by
  \cref{le:conf-in-pes}, we have $D \simeq \phi_{\ES}(D)$, as desired.

  In order to conclude that it is a folding we show that given
  $D_1 \in \conf{\pr(\ES)}$, if $\phi_{\ES}(D_1) \trans{x} C_2$ then
  $D_1 \trans{H} D_2$ with $\phi_{\ES}(D_2) = C_2$.
  Let $C_1 = \phi_{\ES}(D_1)$ and assume $C_1 \trans{x} C_2$. By
  definition of transition (\cref{de:transition}), we have
  $C_1 \sqsubseteq C_2$.
  Let $H_x = \hist{C_2}{x}$. By
  definition of $\pr(\ES)$, the causes
  $\causes{H_x} = \{ \hist{H_x}{y} \mid y \in H_x \}$.
  For all $y \in H_x \setminus \{x\}$, clearly $y \in C_1$. Moreover
  $\hist{H_x}{y} = \hist{C_2}{y} = \hist{C_1}{y}$. Therefore, by
  \cref{le:hist}(\ref{le:hist:3}), $\hist{H_x}{y} \in
  D_1$.
  We thus conclude that
  \begin{center}
    $D_1 \trans{H_x} D_2$
  \end{center}
  and moreover $\phi_{\ES}(D_2) \simeq C_2$. For the last statement, the
  only thing to observe is that the image of the causes of $H_x$ are
  exactly the causes of $x$. Indeed we have, for all $H \in D_2$, say
  $H \in \Hist{y}$, that $H \sqsubseteq H_x$ iff $y \in H_x$ iff
  $y \leq_{C_2} x$, as desired.
\end{proof}

\lefactorfold*

\begin{proof}
  The function $g$ can be defined, for all $x' \in \ES[P']$ as
  \begin{center}
    $g(x') = f(\causes{x'})$
  \end{center}
  Note that this is a well-defined morphism. First observe that
  $g(x') \in \Hist{\ES}$, hence it is an event in $\pr(\ES)$. In fact,
  for all $x' \in \ES[P']$, since $f$ is a morphism and
  $\causes{x'} \in \conf{\ES[P']}$, $f(\causes{x'}) \in \conf{\ES}$, and
  $f(\causes{x'}) \simeq \causes{x'}$, therefore
  $g(x') = f(\causes{x'}) = \hist{f(\causes{x'})}{f(x')} \in \Hist{\ES}$.
  Moreover, the reasoning above shows that $g(x') \in
  \Hist{f(x')}$. Therefore, if $g(x') = g(y')$ then $f(x') = f(y')$. This
  fact, recalling that $f$ is injective on configurations, implies
  that also $g$ is.
  Finally, for all $C' \in \conf{\ES[P']}$, since $f$ is a morphism,
  $f(C') \in \conf{\ES}$ and $f(C') \simeq C'$. Therefore its
  $g$-image is
  \begin{align*}
    g(C') & =  \{ g(x') \mid x' \in C' \}\\
          & =  \{ f(\causes{x'}) \mid x' \in C' \}\\
          & =  \{ \hist{f(\causes{x'})}{f(x')} \mid x' \in C' \}
          & \mbox{[Since morphisms preserve prefix order]}\\
          & =  \{ \hist{f(C')}{f(x')} \mid x' \in C' \}\\
          & = \hset{f(C')}
  \end{align*}
  Hence, by \cref{le:conf-in-pes},
  $g(C') = \hset{f(C')} \in \conf{\pr(\ES)}$ and
  $\hset{f(C')} \simeq C'$, as desired.

  For the second part, assume that $f$ is a folding and let us show
  that also $g$ is. We use the characterisation in
  \cref{le:morph-to-fold}. Let $C_1' \in \conf{\ES[P']}$ and
  assume that $g(C_1') \trans{H} D_2$. Since $\phi_{\ES}$ is a
  morphism,  this implies that
  $f(C_1') = \phi_{\ES}(g(C_1')) \trans{\phi_{\ES}(H)} \phi_{\ES}(D_2)$.
  Since $f$ is a folding, by \cref{le:morph-to-fold}, there exists
  a transition $C_1' \trans{x'} C_2'$ such that
  $f(C_2') = \phi_{\ES}(C_2)$.
  Observe that this implies $f(x') = \phi_{\ES}(H)$ and more generally
  $f(\causes{x'}) = \phi_{\ES}(\causes{H})$, but since
  $\phi_{\ES}(\causes{H}) = H$
  \begin{center}
    $f(\causes{x'}) =H$.
  \end{center}

  We only need to show that $g(C_2') = D_2$. This is an immediate
  consequence of the fact that
  $g(C_2') = g(C_1') \cup \{ g(x') \} = D_1 \cup \{ H \} = D_2$, as
  desired.
\end{proof}

\prfoldespes*

\begin{proof}
  This can be derived from the characterisation of foldings as
  $\cat{Pom}$-open maps (\cref{le:folding-open}), the fact that
  $\cat{PES}$ is a coreflective subcategory of $\cat{ES}$
  (\cref{le:factor-fold}) and then using~\cite[Lemma 6(iii)]{JNW:BFOM}.

  Explicitly, let $\ES, \ES'$ be event structures, let $f : \ES \to \ES'$ be a
  morphism and consider the commuting diagram
  \begin{center}
    \begin{tikzcd}%
      \ES \arrow[r, "f" above] 
      &
      \ES'\\
      \pr(\ES) \arrow[r, "\pr(f)" below] \arrow[u, "\phi_{\ES}" left]
      &
      \pr(\ES') \arrow[u, "\phi_{\ES'}" right]
    \end{tikzcd}
  \end{center}
  If $f$ is a folding then $f \circ \phi_{\ES} : \pr(\ES) \to \ES'$ is
  a composition of foldings and thus, by \cref{le:fold-comp}, it
  is a folding. In turn, by \cref{le:factor-fold} this implies
  that $\pr(f)$ is a folding.

  Conversely, if $\pr(f)$ is a folding, then
  $\phi_{\ES} \circ \pr(f) : \pr(\ES) \to \ES'$ is a composition of
  foldings and thus, by \cref{le:fold-comp}, it is a folding.  In turn, by
  \cref{le:factor-morph} this implies that $f$ is a folding.
\end{proof}

\section{Some Properties of  Morphisms and Foldings}
\label{se:abstract-properties}

In this section, we define some relations between the events of an event structure, based on the way in which such events occur in configurations. They can be used to prove general properties of morphisms
of event
structures, that then can be instantiated on specific subclasses.

\begin{adefinition}[precedence]
  \label{de:precedence}
  Let $\ES$ be an event structure.  The \emph{precedence} as
  the relation $\prc \subseteq \ES \times \ES$, defined for $x, y \in \ES$ by
  $x \prc y$ if for all $C \in \conf{\ES}$ such that
  $x, y \in C$ it holds $x <_C y$.
  We say that $\ES$ has \emph{global
  precedence} if for $x, y \in \ES$, if $x, y \in C$ and $x <_C y$ then
  $x \prc y$.
\end{adefinition}

In words, $x \prc y$ whenever in each computation where $x, y$ occur necessarily $x$ occurs before $y$. The precedence relation is useful also to define a notion of semantic conflict.
Observe that for any configuration $C$ the precedences expressed by
$\prc$ are always respected by $\leq_C$, i,.e.,
$\prc_{C}^* \subseteq \leq_C$.  When the event structure has global
precedence, the precedence relation is sufficient to completely
characterise the local order of configuration, i.e., for all
configurations $C$ it holds that $<_C = (\prc_{|C})^*$.

Closely connected, we can introduce a notion of semantic conflict.

\begin{adefinition}[conflict]
  \label{de:conflict}
  Let $\ES$ be an event structure.  The \emph{conflict} is
  relation $\# \subseteq 2^E$, defined for a finite $X \subseteq E$ by
  $\# X$ if there is no $C \in \conf{\ES}$ such that
  $X \subseteq C$. When $\{x,y\}$ we often write $x \# y$.
\end{adefinition}

We observe that conflict and precedence are strictly related. In particular, binary conflict can be characterised in terms of precedence.

\begin{aproposition}[precedence vs conflict]
  Let $\ES$ be an event structure. Then
  \begin{itemize}

  \item 
    for $X \subseteq E$, if $\prc_{|X}$ is cyclic then $\# X$.
    
  \item for $x, y \in E$, we have $x \# y$ iff $x \prec y \prec x$.
\end{itemize}
\end{aproposition}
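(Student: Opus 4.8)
The plan is to derive both statements from one elementary fact: any configuration $C$ that contains a set $X$ of events must order the elements of $X$ consistently with $\prc$, i.e.\ $\prc_{|X}\subseteq\ <_C$, and $<_C$, being the strict part of the partial order $\leq_C$, is irreflexive and transitive, hence acyclic. So everything reduces to the order axioms for the local orders together with the (vacuous) quantification hidden in the definition of $\prc$.

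For the first item I would argue by contraposition. Assume $\neg(\# X)$, so there is $C\in\conf{\ES}$ with $X\subseteq C$. For any $x,y\in X$ with $x\prc y$, applying the definition of precedence to this very $C$ gives $x<_C y$; hence $\prc_{|X}\subseteq\ <_C$, and since $<_C$ has no cycles, neither does $\prc_{|X}$. Contrapositively, if $\prc_{|X}$ is cyclic then $\# X$. (For infinite $X$ the claim is trivial, since configurations are finite and so no configuration contains $X$; I would only spell out the finite case.)

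For the second item I would split into the two implications. For $(\Rightarrow)$: if $x\# y$ then by definition no configuration contains both $x$ and $y$, so the universally quantified condition defining $x\prc y$ --- \emph{for every} $C\in\conf{\ES}$ containing $x$ and $y$ one has $x<_C y$ --- is vacuously satisfied, and likewise $y\prc x$; hence $x\prc y\prc x$. For $(\Leftarrow)$: assume $x\prc y$ and $y\prc x$ but, toward a contradiction, $\neg(x\# y)$; then some $C\in\conf{\ES}$ contains both, so $x<_C y$ and $y<_C x$, whence $x<_C x$ by transitivity, contradicting irreflexivity of $<_C$. One could equally observe that $(\Leftarrow)$ is just the first item instantiated to the $2$-cycle $x\prc y\prc x$ on $X=\{x,y\}$.

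I do not expect a genuine obstacle: the statement is elementary. The only points needing a little care are the vacuous-truth step in $(\Rightarrow)$ and the degenerate case $x=y$, which is harmless because $E=\bigcup\conf{\ES}$ implies every event lies in some configuration, so no event is self-conflicting and $x\prc x$ never holds, making the equivalence hold trivially there as well.
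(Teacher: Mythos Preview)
Your proof is correct and follows essentially the same approach as the paper: both arguments rest on the observation that $\prc$ restricted to any configuration is contained in the strict local order $<_C$, which is acyclic, together with the vacuous-truth reading of $\prc$ when two events never co-occur. Your version is in fact a bit more careful than the paper's, since you explicitly dispose of the infinite-$X$ case and the degenerate case $x=y$, neither of which the paper mentions.
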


\begin{proof}
  \begin{itemize}
  \item Let $X \subseteq E$. If $\prc_{|X}$ is cyclic, i.e., there are
    $x_1, \ldots, x_n \in X$ such that
    $x_1\prc x_2 \prc \ldots x_n \prc x_1$ then the events
    $x_1, \ldots, x_n$ and thus $X$ can never occur together in the same
    computation, i.e., there cannot be $C \in \conf{\ES}$ such
    that $X \subseteq C$. In fact, otherwise, we should have
    $\prc_{|C}^* \subseteq \leq_C$, contradicting the fact that $\leq_C$
    is a partial order. In words, each of the events $x_i$ should occur
    before the others, which is impossible.
  \item In particular, if $x \# y$ then $x, y$ can never be in the
    same computation, hence trivially $x \prec y$ and $y \prec x$, and
    observe that also the converse holds.
  \end{itemize}
\end{proof}

Morphism on event structures can be shown to enjoy interesting properties with respect to the semantic relations.

\begin{alemma}[morphism properties]
  \label{le:morphism-properties}
  Let $\ES, \ES'$ be event structures and let
  $f : \ES \to \ES'$ be a morphism. Then for all
  $x, y \in E$
  \begin{enumerate}
    
  \item if $f(x) \prc f(y)$ then $x \prc y$;
  \item if $f(x) = f(y)$ then $x \prc y$, hence by duality $x \# y$.
  \end{enumerate}
  Moreover, if $\ES$, $\ES'$ have global precedence, then
  \begin{enumerate}
    \setcounter{enumi}{2}    
  \item if $x \prc y$ and $\neg (y \prc x)$ then $f(x) \prc f(y)$;
  \end{enumerate}
\end{alemma}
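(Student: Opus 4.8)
The plan is to reduce all three items to one structural fact about morphisms, namely that a morphism restricts to an isomorphism on every configuration. So first I would record: for a morphism $f : \ES \to \ES'$ and any $C \in \conf{\ES}$, \cref{de:es-morphism} gives that $f$ is injective on $C$, that $f(C) \in \conf{\ES'}$, and that $f(C)$ carries the order determined by $f(x) \leq_{f(C)} f(y) \iff x \leq_C y$; hence $f_{|C} : C \to f(C)$ is an isomorphism of configurations, and in particular, for $x, y \in C$, the strict versions match, $x <_C y \iff f(x) <_{f(C)} f(y)$ (injectivity on $C$ is what turns $\leq$ into $<$). I would also record the elementary remark that, directly from \cref{de:precedence}, $x \prc y$ is satisfied vacuously as soon as $x$ and $y$ never lie in a common configuration, so that for distinct $x,y$ one has $x \# y$ (in the sense of \cref{de:conflict}) iff both $x \prc y$ and $y \prc x$; this is the ``duality'' referred to in item (2). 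Throughout I would tacitly restrict to $x \neq y$, exactly as the analogous {\pes} and {\aes} morphism lemmas do.

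With these in hand the first two items are one line each. For (1): assuming $f(x) \prc f(y)$, take any $C \in \conf{\ES}$ with $x, y \in C$ (if none exists, $x \prc y$ holds vacuously); then $f(x), f(y) \in f(C) \in \conf{\ES'}$, so $f(x) <_{f(C)} f(y)$ by hypothesis, and transporting back along the isomorphism $f_{|C}$ gives $x <_C y$; as $C$ was arbitrary, $x \prc y$. For (2): assuming $f(x) = f(y)$ with $x \neq y$, injectivity of $f$ on configurations forbids any $C \in \conf{\ES}$ from containing both $x$ and $y$, which is precisely $x \# y$; and then $x \prc y$ (and, symmetrically, $y \prc x$) holds vacuously, which is the claimed conclusion.

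For (3), assuming $\ES$ and $\ES'$ have global precedence and that $x \prc y$ and $\neg(y \prc x)$: I would unfold $\neg(y \prc x)$ into the existence of a configuration $C \in \conf{\ES}$ with $x, y \in C$ and $\neg(y <_C x)$; together with $x \prc y$ this forces $x <_C y$ in this particular $C$. Applying the structural fact, $f(x), f(y) \in f(C) \in \conf{\ES'}$ with $f(x) <_{f(C)} f(y)$; since $\ES'$ has global precedence, a single witnessing configuration already yields $f(x) \prc f(y)$, as required. Note that only the global precedence of the target $\ES'$ is actually used here; the hypothesis on $\ES$ is there for symmetry with related statements.

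I do not expect a real obstacle: each item is a direct transport along $f_{|C}$. The only point that needs care is the bookkeeping around the vacuous cases of $\prc$ --- that in (1) the quantification over configurations may be empty, that in (2) the hypothesis $f(x)=f(y)$ with $x\neq y$ genuinely rules out a common configuration, and that in (3) the hypothesis $\neg(y\prc x)$ does produce an honest witnessing configuration rather than being met vacuously. Once those are handled, the argument is essentially mechanical.
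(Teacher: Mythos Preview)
Your proposal is correct and follows essentially the same route as the paper: each item is a direct transport along the configuration isomorphism $f_{|C}$, with the vacuous cases of $\prc$ handling the absence of a common configuration. The one small difference is your observation in item~(3) that only the global precedence of the \emph{target} $\ES'$ is actually used; the paper invokes global precedence of $\ES$ as well, but as you note this is redundant, since $x <_C y$ already follows from the definition of $x \prc y$ once a common configuration $C$ is produced.
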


\begin{proof}
  Let $x, y \in E$
  \begin{enumerate}
  \item Assume $f(x) \prc f(y)$. Let $C \in \conf{\ES}$ be a
    configuration such that $x, y \in C$. Then $f(x), f(y) \in f(C)$
    and $C \in \conf{\ES'}$. Since $f(x) \prc f(y)$ we have that
    $f(x) <_{f(C)} f(y)$ and thus, since $f$ is a morphism, $x <_C
    y$. Since this holds for any configuration, we conclude $x \prc y$.
    
  \item Assume $f(x) = f(y)$. Since $f$ is injective on
    configurations, there cannot be $C \in \conf{\ES}$ such
    that $x, y \in C$. Therefore, trivially $x \prc y$ (and $y \prc x$,
    whence $x \# y$).

  \item If $\ES$, $\ES'$ have global precedence, $f$ is a folding and
    $x \prc y$ and $\neg(y \prc x)$ then $\neg (x \# y)$ and thus
    there is some configuration $C \in \conf{\ES}$ such that
    $x, y \in C$. Since $\ES$ has global precedence, $x \leq_C y$. Now
    $f(x), f(y) \in f(C)$ which is in $\conf{\ES'}$. Therefore
    $f(x) \leq_{f(C)} f(y)$. Again, since $\ES'$ has global
    precedence, $f(x) \prc f(y)$, as desired.

  \end{enumerate}
\end{proof}

\section{Proofs for Section~\ref{se:folding-criteria} (Foldings for Prime and Asymmetric  Event Structures)}

\lepesmorphism*

\begin{proof}
  First observe that {\pes}s have global precedence and $x \prc y$ iff
  $x \leq y$ or $x \# y$.

  Now, assume that $f$ is a morphism. Then property~(1) holds by
  definition. Property~(2) follows from the fact that
  $\causes{x} \in \conf{\ES[P]}$. Hence
  $f(\causes{x}) \in \conf{\ES[P']}$ and
  $f(\causes{x}) \simeq \causes{x}$, which implies
  $f(\causes{x}) = \causes{f(x)}$.

  Concerning condition (3b), observe that from
  \cref{le:morphism-properties}(1), instantiated with the notion of
  $\prc$ for {\pes}s, we get
  \begin{center}
    $f(x) \leq f(y)$ or $f(x) \# f(y)$ implies $x \leq y$ or $x \# y$.
  \end{center}
  In particular, if $f(x) \# f(y)$ then $x \leq y$ or $x \# y$ and,
  since conflict is symmetric, we also have $y \leq x$ or $y \# x$. It
  is now easy to see that only the second possibility $x \# y$ can
  hold true, which is the desired conclusion.
  Property (3a) immediately derives
  from \cref{le:morphism-properties}(2).
  
  Conversely, assume that $f$ satisfies conditions (1)-(3)
  above. Given a configuration $C \in \conf{\ES[P]}$, by conditions
  (2a) and (3b), $f(C)$ is a configuration in $\ES[P']$. By condition
  (3a), $f$ is injective on $C$. This, together with condition (2b),
  implies that $C \simeq f(C)$.
\end{proof}

\prpesfolding*

\begin{proof}
  Let $f : \ES[P] \to \ES[P']$ be a folding. Let us first observe that
  $f$ is surjective. Take $x' \in \ES[P']$. Since
  $\causes{x'} \in \conf{\ES[P']}$, we have that
  $\emptyset \trans{\causes{x'}} \causes{x'}$. Since $f$ is a folding,
  there must be $C \in \conf{\ES[P]}$ such that $f(C)=\causes{x'}$,
  and thus there is $x \in C$ such that $f(x) =x'$, as desired.

  We next show that properties
  (1) and (2) hold.

  \begin{enumerate}
    
  \item 
    We prove the contronominal, namely that if $f(x) \cons y'$
    then there is $y \in \ES[P]$ such that $f(y) = y'$ and
    $x \cons y$. Assume that $f(x) \cons y'$. We distinguish
    two possibilities:
    
    \begin{itemize}
      
    \item If $y' \leq f(x)$ then, by
      \cref{le:pes-morphism}(\ref{le:pes-morphism:2}a), there exists
      $y \leq x$ such that $f(y)=y'$. Hence $x \cons y$, as desired.
      
    \item Assume that, instead, $\neg (y' \leq f(x))$. Therefore, if we
      let $C' = \causes{f(x)} \cup\causes{y'}$ and
      $X' = C' \setminus \causes{f(x)}$
      \begin{equation}
        \label{eq:pes-folding:1}
        \causes{f(x)} \trans{X'} C'
      \end{equation}
      By \cref{le:pes-morphism}(\ref{le:pes-morphism:2}), we have
      that $f(\causes{x}) = \causes{f(x)}$. Therefore, since $f$ is a
      folding, there must be a transition $\causes{x} \trans{X} C$
      with $f(C) = C'$. This means that there exists $y \in C$ such
      that $f(y) \in C'$ and, since $x \in C$, necessarily
      $x \cons y$, as desired.
    \end{itemize}

  \item Assume that $\setcons{(X \cup \{ x\})}$, $\setcons{(Y \cup \{ y\})}$,
    $\setcons{(X \cup Y)}$ and $f(x) = f(y)$.
    Define $C = \causes{X \cup Y} \in \conf{\ES[P]}$. We distinguish two cases.

    \begin{itemize}
    \item If $x \in C$ then we can simply take $z=x$, since clearly
      $\setcons{(X \cup Y \cup \{ x\})}$.
      
    \item Assume now that $x \notin C$. Clearly $f(x) \notin
      f(C)$. Moreover, $\setcons{(f(C) \cup \{f(x)\})}$. In fact, by
      \cref{le:pes-morphism}(\ref{le:pes-morphism:3}), if for
      some $w \in C$ it were $f(w) \# f(x) = f(y)$ we would have
      $w \# x$ and $w \# y$, contradicting either
      $\setcons{(X \cup \{x\})}$ or $\setcons{(Y \cup \{y\})}$.
      
      Therefore $f(C) \trans{X'} f(C) \cup \causes{f(x)}$ with
      $X' = f\causes{f(x)} \setminus f(C)$. Since $f$ is a folding,
      this implies that $C \trans{X} D$ with
      $f(D) = f(C) \cup \causes{f(x)}$ and
      $D \simeq f(C) \cup \causes{f(x)}$. Therefore there exists
      $z \in D$ such that $f(z)=f(x)$. Since $X \cup Y \subseteq D$,
      we have that $\setcons{(X \cup Y \cup \{z\})}$, as desired.
    \end{itemize}      

    \smallskip

    For the converse implication, assume that $f$ is a surjective morphisms
    satisfying conditions (1) and (2). We have to prove that it is a
    folding.

    Let $C_1 \in \conf{\ES[P]}$ and assume that $f(C_1) \trans{x'} C_2'$.
    If $C_1=\emptyset$, take any $x \in \ES[P]$ such that $f(x)=x'$,
    which exists by surjectivity. By
    \cref{le:pes-morphism}(\ref{le:pes-morphism:2}b) we have
    $f(\causes{x}) = \causes{x'} = \{x'\}$, and thus
    $\causes{x}=\{x\}$. This means that
    $C_1=\emptyset \trans{x} \{x\}$, and we conclude.
    
    Otherwise, if $C_1 \neq \emptyset$, since for all $y \in C_1$ it
    holds that $f(y) \cons x'$, by condition (1), there exists some
    element $x_y \in \ES[P]$ such that $x_y \cons y$ and $f(x_y) =
    x$. Note that necessarily $\neg (x_y \leq y)$, otherwise, by
    \cref{le:pes-morphism}(\ref{le:pes-morphism:2}b) we would have
    $x' = f(x_y) \leq f(y)$, which is not the case.

    Since $C_1$ is finite and consistent, an inductive argument based
    on condition (2), allows to derive the existence of $x$ such that
    $f(x) = x'$ and $\setcons{(C_1 \cup \{x\})}$. Moreover, as argued
    above for the $x_y$s, it is not the case that $x \leq y$ for some
    $y \in C_1$. Therefore there is a transition
    \begin{center}
      $C_1 \trans{X} C_1 \cup \causes{x}$
    \end{center}
    where $X = \causes{x} \setminus C_1$.

    We argue that $X = \{x\}$ and thus we conclude. In fact, assume
    that there is some $z \in X \setminus \{x\}$. Since $f$ is a
    morphism $f(z) \leq f(x) = x'$. Now, since there is the transition
    $f(C_1) \trans{x'}$, all causes of $x'$ must be in $f(C_1)$. Note
    that, since $f$ is a morphism, by
    \cref{le:pes-morphism}(\ref{le:pes-morphism:2}), we have
    $\causes{x'} = \causes{f(x)} = f(\causes{x})$. Therefore, there
    must exist $z_1 \in C_1$ such that $f(z_1) = f(z)$. However, since
    $z, z_1 \in C_1 \cup (\causes{x} \setminus \{x\})$ which is a
    configuration in $\conf{\ES[P]}$, and $f$ is injective on
    configurations, we get $z=z_1 \in C_1$, contradicting the
    hypothesis.
  \end{enumerate}
  
\end{proof}

Given a {\pes} $\ES[P]$ and an event $x \in \ES[P]$ let us define $\scauses{x} = \causes{x} \setminus\{x\}$, $\consequences{x} = \{ y \mid y \in \ES[P]\ \land\ x < y\}$, and $\cset{x} = \{ y \mid y \in \ES[P]\ \land\ \neg(x \leq y\ \lor\ y \leq x\ \lor\ x\#y )\}$.

\begin{adefinition}[abstraction homomorphisms~\cite{Cas:phd}]
  \label{de:abstraction}
  Let $\ES[P]$, $\ES[P']$ be {\pes}s. An \emph{abstraction morphism} is a function $f : \ES[P] \to \ES[P']$ such that for all
  $x, y \in \ES[P]$
  \begin{enumerate}
  \item \label{de:abstraction:1}
    $\lambda'(f(x)) = \lambda(x)$;
    
  \item \label{de:abstraction:2}
    $f(\scauses{x}) = \scauses{f(x)}$;
    
  \item \label{de:abstraction:3}
    $f(\consequences{x}) = \consequences{f(x)}$;
    
  \item \label{de:abstraction:4}
    $f(\cset{x}) = \cset{f(x)}$
  \end{enumerate}
\end{adefinition}

\begin{alemma}[foldings vs abstraction homomorphisms]
  \label{le:abs-fold}
  Let $\ES[P]$, $\ES[P']$ be {\pes} and let $f : \ES[P] \to \ES[P']$
  be a function. Then $f$ is an abstraction morphism iff $f$ is a
  {\pes} morphism additionally satisying condition
  (\ref{pr:pes-folding:1}) of Lemma~\ref{pr:pes-folding}.
\end{alemma}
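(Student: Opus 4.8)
The plan is to verify in both directions that the two lists of conditions are equivalent, essentially translating one into the other. Throughout I use that, since we regard {\pes}s as poset event structures, every event lies in some configuration, so $\causes{x}$ is always a configuration, $<$ and $\#$ are disjoint, and the local order of a configuration $C$ is the restriction of global causality to $C$; I also use that a morphism sends configurations to isomorphic configurations. A preliminary remark used repeatedly: if $\neg(x \# z)$ then $\causes{x} \cup \causes{z}$ is a configuration containing both $x$ and $z$ (it is finite and causally closed, and any conflict $a \# b$ with $a \in \causes{x}$, $b \in \causes{z}$ would force, by heredity, $x \# z$); in particular $x < z$ implies $\neg(x\#z)$.

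For ``abstraction morphism $\Rightarrow$ {\pes} morphism satisfying \cref{pr:pes-folding}(\ref{pr:pes-folding:1})'': condition~(\ref{le:pes-morphism:1}) of \cref{le:pes-morphism} is condition~(\ref{de:abstraction:1}) of \cref{de:abstraction}; condition~(\ref{le:pes-morphism:2}) follows from $\causes{x} = \scauses{x} \cup \{x\}$ together with condition~(\ref{de:abstraction:2}), noting that $f(x) \notin \scauses{f(x)} = f(\scauses{x})$. For condition~(\ref{le:pes-morphism:3}a), if $f(x)=f(y)$, $x \neq y$ and $\neg(x\#y)$, then one of $x<y$, $y<x$, $y \in \cset{x}$ holds; the first two give $f(x) \in \scauses{f(x)}$ via~(\ref{de:abstraction:2}) and the last gives $f(x) \in \cset{f(x)}$ via~(\ref{de:abstraction:4}), both impossible. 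For condition~(\ref{le:pes-morphism:3}b), if $f(x)\#f(y)$ and $\neg(x\#y)$ then $x\neq y$ (else $f(x)\#f(x)$), so again $x<y$, $y<x$ or $y\in\cset{x}$: in the first two cases $f(x)$ and $f(y)$ are comparable by~(\ref{de:abstraction:2}), hence not in conflict by disjointness of $<$ and $\#$, while in the last $f(y)\in\cset{f(x)}$ by~(\ref{de:abstraction:4}), so $\neg(f(x)\#f(y))$ — a contradiction in all cases. Finally \cref{pr:pes-folding}(\ref{pr:pes-folding:1}) is shown contrapositively: if $\neg(f(x)\#y')$ then $f(x)=y'$, or $f(x)<y'$, or $y'<f(x)$, or $y'\in\cset{f(x)}$; using~(\ref{de:abstraction:3}), (\ref{de:abstraction:2}) or~(\ref{de:abstraction:4}) accordingly, $y'$ has a preimage $z$ (namely $z=x$, or $z$ with $x<z$, $z<x$, $z\in\cset{x}$) with $\neg(x\#z)$, so $\neg(x \#^\forall f^{-1}(y'))$.

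For the converse, assume $f$ is a {\pes} morphism satisfying \cref{pr:pes-folding}(\ref{pr:pes-folding:1}). Condition~(\ref{de:abstraction:1}) is immediate; condition~(\ref{de:abstraction:2}) follows by deleting $f(x)$ from both sides of $f(\causes{x}) = \causes{f(x)}$ (\cref{le:pes-morphism}(\ref{le:pes-morphism:2})), using injectivity of $f$ on the configuration $\causes{x}$ to see $f(x) \notin f(\scauses{x})$. The inclusions $f(\consequences{x}) \subseteq \consequences{f(x)}$ and $f(\cset{x}) \subseteq \cset{f(x)}$ follow from the morphism property: if $x<z$ then $f(x)\le f(z)$ by \cref{le:pes-morphism}(\ref{le:pes-morphism:2}b) and $f(x)\neq f(z)$ by injectivity on $\causes{z}$, so $f(x)<f(z)$; and if $z\in\cset{x}$ then $x,z$ lie in a common configuration $C$ and, since $f(C)\simeq C$ is a configuration of $\ES[P']$, $f(x)$ and $f(z)$ are incomparable and non-conflicting in $\ES[P']$, i.e.\ $f(z)\in\cset{f(x)}$. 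For the reverse inclusions, take $z'\in\consequences{f(x)}$ (resp.\ $z'\in\cset{f(x)}$); then $\neg(f(x)\#z')$, so \cref{pr:pes-folding}(\ref{pr:pes-folding:1}) (contrapositive) yields $z$ with $f(z)=z'$ and $\neg(x\#z)$, and $x\neq z$ since $f(x)\neq z'$. Hence $x<z$, $z<x$ or $z\in\cset{x}$. The case $z<x$ gives $z'=f(z)\le f(x)$, contradicting $f(x)<z'$ (resp.\ $z'\in\cset{f(x)}$); when $z'\in\cset{f(x)}$, the case $x<z$ gives $f(x)<f(z)=z'$, again a contradiction; and when $z'\in\consequences{f(x)}$, if $x$ and $z$ were concurrent then, via a common configuration mapped isomorphically, $f(x)$ and $f(z)=z'$ would be incomparable, contradicting $f(x)<z'$. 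So $x<z$ (resp.\ $z\in\cset{x}$), whence $z'\in f(\consequences{x})$ (resp.\ $z'\in f(\cset{x})$), giving conditions~(\ref{de:abstraction:3}) and~(\ref{de:abstraction:4}).

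I expect the reverse inclusions of~(\ref{de:abstraction:3}) and~(\ref{de:abstraction:4}) to be the main point: a {\pes} morphism need not be surjective onto the consequences or the concurrent set of a point, and it is exactly \cref{pr:pes-folding}(\ref{pr:pes-folding:1}) that produces a preimage consistent with $x$; promoting ``consistent with $x$'' to ``strictly above $x$'' or ``concurrent with $x$'' is then the bookkeeping with common configurations and the strong-morphism property. The rest is routine once one keeps track of the executability convention so that $<$ and $\#$ are disjoint and local orders are restrictions of global causality.
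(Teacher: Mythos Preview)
Your proof is correct and follows essentially the same approach as the paper: a direct case analysis translating between the two lists of conditions, using the contrapositive of \cref{pr:pes-folding}(\ref{pr:pes-folding:1}) to produce consistent preimages. The only cosmetic difference is in the reverse inclusion $\consequences{f(x)} \subseteq f(\consequences{x})$: the paper first shows that \emph{every} preimage of $y'$ satisfies $x<y$ or $x\#y$ (via \cref{le:pes-morphism}(\ref{le:pes-morphism:2}a)) and then invokes condition~(\ref{pr:pes-folding:1}) to exclude the all-conflict case, whereas you invoke condition~(\ref{pr:pes-folding:1}) first to obtain a non-conflicting preimage and then rule out $z<x$ and $z\in\cset{x}$ by the strong-morphism property; the two arguments are interchangeable.
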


\begin{proof}
  Let $f$ be an abstraction homomorphism. We first prove conditions
  (\ref{le:pes-morphism:1})-(\ref{le:pes-morphism:3}) of
  Lemma~\ref{le:pes-morphism}. The first condition is already in
  Definition~\ref{de:abstraction}. Condition~(\ref{le:pes-morphism:2}),
  is immediately implied by
  Definition~\ref{de:abstraction}(\ref{de:abstraction:2})
  Concerning
  condition~(\ref{le:pes-morphism:3}), let $x, y \in \ES[P]$ such that
  $f(x)=f(y)$ and $x \neq y$. Observe that we cannot have $x < y$,
  otherwise by
  Definition~\ref{de:abstraction}(\ref{de:abstraction:2}), we would
  have $f(x) < f(y)$. Dually, it cannot be $y<x$.  Moreover, it cannot
  be $x \in \cset{y}$, otherwise
  Definition~\ref{de:abstraction}(\ref{de:abstraction:4}) would be
  violated. Therefore, necessarily $x \# y$.
  The validity of condition~(\ref{le:pes-morphism:3}b) is proved
  analogously.
  
  We finally show that $f$ satisfies also condition
  (\ref{pr:pes-folding:1}) of Lemma~\ref{pr:pes-folding}. Let
  $x \in \ES[P]$, $y \in \ES[P']$ such that $\neg (f(x) \# y')$ and we
  show that $\neg (x \# y)$ for some $y \in \ES[P]$ such that
  $f(y)=y'$. We distinguish various possibilities:
  \begin{itemize}
  \item If $f(x) = y'$,  we simply take  $y=x$.
    
  \item If $y' < f(x)$, by
    Definition~\ref{de:abstraction}(\ref{de:abstraction:2}) there
    exists $y \in \ES[P]$ with $y < x$ such that $f(y)=y'$, and we conclude.
    
  \item If $f(x) < y'$, by
    Definition~\ref{de:abstraction}(\ref{de:abstraction:3}) there
    exists $y \in \ES[P]$ with $x < y$ such that $f(y)=y'$, and we
    conclude.

  \item If none of the above holds, necessarily $y' \in \cset{f(x)}x$,
    and thus by
    Definition~\ref{de:abstraction}(\ref{de:abstraction:4}) there
    exists $y \in \ES[P]$ with $y \in \cset{x}$ such that $f(y)=y'$,
    and we conclude.
  \end{itemize}

  \bigskip

  Conversely, let $f$ be a {\pes} morphism additionally satisying
  condition (\ref{pr:pes-folding:1}) of Lemma~\ref{pr:pes-folding}. We
  prove that conditions
  (\ref{de:abstraction:1})-(\ref{de:abstraction:4}) of
  Definition~\ref{de:abstraction} hold. As above, the first conditions
  is already in Lemma~\ref{le:pes-morphism}. The second condition,
  namely $f(\scauses{x}) = \scauses{f(x)}$ immediately follows from
  Lemma~\ref{le:pes-morphism}(\ref{le:pes-morphism:2}), i.e,
  $f(\causes{x}) = \causes{f(x)}$. In fact, we only need to observe
  that for all $y < x$, $f(y) \neq f(x)$, otherwise, by
  Lemma~\ref{le:pes-morphism}(\ref{le:pes-morphism:3}a) we would have
  $x \# y$.

  Concerning (\ref{de:abstraction:3}), i.e., for $x \in \ES[P]$,
  $f(\consequences{x}) = \consequences{f(x)}$ let us prove separately
  the two inclusions.
  \begin{itemize}
  \item ($\subseteq$) Let $y' \in f(\consequences{x})$, i.e.,
    $y' = f(y)$ for some $y \in \consequences{x}$. Since $x < y$, by
    Lemma~\ref{le:pes-morphism}(\ref{le:pes-morphism:2}b),
    $f(x) < f(y)$ and thus $y' = f(y) \in \consequences{f(x)}$ , as
    desired.
  
  \item ($\supseteq$) Let $y' \in \consequences{f(x)}$, i.e.,
    $f(x) < y'$. Then, for all $y \in f^{-1}(y')$, since
    $f(x) < y' = f(y)$, by
    Lemma~\ref{le:pes-morphism}(\ref{le:pes-morphism:2}a), there is
    $z < y$ such that $f(z) = f(x)$. Hence either $z=x$ and thus
    $x < y$ or $z \neq x$, hence, by
    Lemma~\ref{le:pes-morphism}(\ref{le:pes-morphism:3}a), $x \# z$
    and thus $x \# y$.

    It cannot be that $x \#^\forall f^{-1}(y')$ , otherwise, by
    Lemma~\ref{pr:pes-folding}(\ref{pr:pes-folding:1}), we would have
    $x \# y$, which is not the case. Therefore there must exists
    $y \in f^{-1}(y')$ such that $x < y$. Therefore
    $y' = f(y) \in f(\consequences{x})$.
  \end{itemize}

  Let us finally prove condition (\ref{de:abstraction:4}), i.e., for
  $x \in \ES[P]$, $f(\cset{x}) = \cset{f(x)}$. Again, we prove
  separately the two inclusions.
    \begin{itemize}
    \item ($\subseteq$) Let $y' \in f(\cset{x})$, i.e., $y' = f(y)$
      for some $y \in \cset{x}$. By
      Lemma~\ref{le:pes-morphism}(\ref{le:pes-morphism:2}b) and
      Lemma~\ref{le:pes-morphism}(\ref{le:pes-morphism:3}b), it must
      be $y' = f(y) \in \cset{f(x)}$, as desired.

    \item ($\supseteq$) Let $y' \in \cset{f(x)}$. Since
      $\neg (f(x) \# y')$, by
      Lemma~\ref{pr:pes-folding}(\ref{pr:pes-folding:1}), we deduce
      that $\neg (x \#^\forall f^{-1}(y'))$.
      Take any $y \in f^{-1}(y')$ such that $\neg (x \# y)$.
      Now observe that it cannot be $x < y$ or $y < x$, otherwise, by
      Lemma~\ref{le:pes-morphism}(\ref{le:pes-morphism:2}b) $f(x)$
      and $y'=f(y)$ would be ordered in the same way, contradicting
      $y' \in \cset{f(x)}$.
      It cannot be $x=y$ either, otherwise $y'=f(y) = f(x)$, again
      contradicting $y' \in \cset{f(x)}$.

      Therefore, $y \in \cset{x}$ and thus $y' = f(y) \in f(\cset{x})$, as desired.
    \end{itemize}

\end{proof}

For instance, consider the {\pes}s $\ES[P_7]$ and $\ES[P_8]$ in
Fig.~\ref{fi:abs-fold}. It can be seen that obvious function
$f_{78} : \ES[P_7] \to \ES[P_8]$ is an abstraction homomorphism but
not a folding. Indeed, consider the configuration $\{ b_0, a_1\}$. Then the step $f_{78}(\{ b_0, a_1\}) \trans{c_{01}} \{ b_{01}, a_{01}, c_{01}\}$ cannot be simulated by $\{ b_0, a_1\}$.

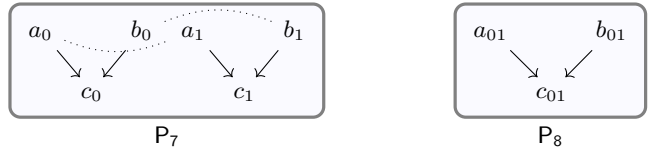
\begin{figure}
  \begin{center}
  \setlength{\tabcolsep}{24pt}
  \begin{tabular}{ccc}
    \begin{tikzcd}[boxedcd, column sep=1mm, row sep=4mm]
      a_0
      \arrow[dr] \arrow[conflict, bend right=22, rrr]
      & &       
      b_0 \arrow[dl]  \arrow[conflict, bend left=22, rrr]
      &
      a_1
      \arrow[dr]
      & &       
      b_1 \arrow[dl]
      \\
      & c_0 & & & c_1
    \end{tikzcd}
    &
    \begin{tikzcd}[boxedcd, column sep=1mm, row sep=4mm]
       a_{01}
      \arrow[dr] %
      & &       
      b_{01} \arrow[dl]\\
      & c_{01}
    \end{tikzcd}
    \\
    $\ES[P_7]$ & $\ES[P_8]$
  \end{tabular}
  \end{center}
  \caption{Abstraction homomorphisms vs folding morphisms.}
  \label{fi:abs-fold}
\end{figure}

\cofoldeqpes*

\begin{proof}
  Let $\ES[P]$ be a {\pes}s and let $\equiv$ be a folding
  equivalence. This means that there exists a folding
  $f : \ES[P] \to \ES[P]'$ such that $\equiv$ and $\equiv_f$ coincide. By
  \cref{le:fold-is-equivalence} we know that
  $\quotient{\ES[P]}{\equiv_f}$ is isomorphic to $\ES[P']$. Therefore
  using \cref{le:pes-morphism} and \cref{pr:pes-folding} we
  immediately get the validities of properties (1)-(5).

  Conversely, assume that $\equiv$ satisfies properties (1)-(5)
  above. Define a {\pes} $\ES[P']$ as follows.
  \begin{itemize}
  \item $E' = \quotient{E}{\equiv}$;
  \item $\eqclass{x}{\equiv} \leq' \eqclass{y}{\equiv}$ if
    $\eqclass{x}{\equiv} \leq^\exists \eqclass{y}{\equiv}$;
  \item $\eqclass{x}{\equiv} \#' \eqclass{y}{\equiv}$ if
    $\eqclass{x}{\equiv} \#^\forall \eqclass{y}{\equiv}$
  \item $\lambda'(\eqclass{x}{\equiv}) = \lambda(x)$.
  \end{itemize}
  Observe that $\ES[P']$ is a well-defined {\pes}. A simple key
  observation is that
  \begin{equation}
    \label{eq:trans}
    \eqclass{x}{\equiv} \leq' \eqclass{y}{\equiv} \leq'
    \eqclass{z}{\equiv} \quad \Rightarrow \quad
    \exists x' \in \eqclass{x}{\equiv}.\
    y' \in \eqclass{y}{\equiv}.\
    z' \in \eqclass{z}{\equiv}.\  x \leq y \leq z
  \end{equation}
  In fact, since $\eqclass{y}{\equiv} \leq' \eqclass{z}{\equiv}$, by
  definition we have the existence of $y' \in \eqclass{y}{\equiv}$ and
  $z' \in \eqclass{z}{\equiv}$ such that $y' \leq z'$.
  Moreover, since $\eqclass{x}{\equiv} \leq' \eqclass{y}{\equiv}$, by
  definition we have the existence of $x'' \in \eqclass{x}{\equiv}$ and
  $y'' \in \eqclass{y}{\equiv}$ such that $x'' \leq y''$.
  Since $y' \equiv y''$, by condition (2),
  $\eqclass{y'}{\equiv} = \eqclass{y''}{\equiv}$. Hence from
  $x'' \leq y''$ we deduce the existence of $x' \leq y'$ with
  $x' \in \eqclass{x}{\equiv}$ as desired.

  Using (\ref{eq:trans}), we can immediately inherit the partial order
  properties of $\leq'$ and irreflexivity and hereditarity of $\#'$
  from the analogous properties of $\#$.

  If we define a function $f : \ES[P] \to \ES[P']$ as
  $f(x) = \eqclass{x}{\equiv}$, it is now easy to show that it
  satisfies properties
  (\ref{le:pes-morphism:1})-(\ref{le:pes-morphism:3}) in
  \cref{le:pes-morphism}, and 
  (\ref{pr:pes-folding:1}),(\ref{pr:pes-folding:2}) in
  \cref{pr:pes-folding}, hence it is a folding and we conclude.
\end{proof}

\lefoldjoinpes*

\begin{proof}
  The result can be proved by using the fact that, by
  \cref{le:factor-fold}, $\cat{PES_f}$ is a coreflective category of
  $\cat{ES_f}$, hence it is closed under pushout as proved
  in~\cite[Corollary 1]{HS:CS}.
  
  Explicitly, the fact that $\Pr(g') : \ES[P'] \to \pr(\ES''')$ and
  $\pr(g'') : \ES[P''] \to \pr(\ES''')$ are foldings derive from
  \cref{pr:fold-es-pes}.
  Now observe that, since In order to show that this actually provide
  a pushout in $\cat{PES}$, consider two morphisms $g_1'$ and $g_2'$
  as in the diagram below, such that
  $g_1' \circ f' = g_1'' \circ f''$:
  \begin{center}
    \begin{tikzcd}%
      &
      \ES[P] \arrow[dl, "f'" above] \arrow[dr, "f''"]
      &
      \\
      \ES[P']
      \arrow[dr, "g'"]
      \arrow[ddr, bend right, "\pr(g')" left]
      &
      {}
      &
      \ES[P'']  \arrow[dl, "g''" above]
      \arrow[ddl, bend left, "\pr(g'')"]\\
      &
      \ES''' \arrow[d, bend left=25, dotted, "h" right]
      &\\
      &
      \pr(\ES''') \arrow[u, bend left=25, "\phi_{E'''}" left]
      &
    \end{tikzcd}
  \end{center}

  Since $\ES'''$ is a pushout and
  $\pr(g') \circ f' = \pr(g'') \circ f''$, there is a unique morphism
  $h : \ES''' \to \pr(\ES''')$, making the diagram commute. Now,
  observe that $\phi_{\ES'''} \circ h : \ES''' \to \ES'''$ can be used
  in the diagram below as mediating morphisms:
  \begin{center}
    \begin{tikzcd}%
      &
      \ES[P] \arrow[dl, "f'" above] \arrow[dr, "f''"]
      &
      \\
      \ES[P']
      \arrow[dr, "g'"]
      \arrow[ddr, bend right=45, "\pr(g')" left]
      &
      {}
      &
      \ES[P'']  \arrow[dl, "g''" above]
      \arrow[ddl, bend left=45, "\pr(g'')"]\\
      &
      \ES'''
      \arrow[d, bend right=25, "id_{E'''}" above left]
      \arrow[d, bend left=25, "h \circ \phi_{\ES'''}" above right]
      &\\
      &
      \ES'''  
      &
    \end{tikzcd}
  \end{center}

  Now, also the identity works as mediating morphisms we deduce that
  $h \circ \phi_{\ES'''} = id_{\ES'''}$, which implies that
  $\phi_{\ES'''}$ is injective. Since it is a folding, it is also
  surjective, and therefore it is an isomorphism, as desired.
\end{proof}

\leaesmorphism*

\begin{proof}
  Let $f : \ES[A] \to \ES[A']$ be a morphism. Just observe that
  {\pes}s have global precedence and $x \prc y$ iff $x \ac y$.
  Condition (\ref{le:aes-morphism:1}) is obviously true. Property (2)
  follows by observing that, for all $x \in \ES[A]$, since
  $\causes{x} \in \conf{\ES[A]}$ and $f$ is a morphism, then
  $f(\causes{x}) \in \conf{\ES[A]}$. Since configurations are causally
  closed we deduce that $\causes{f(x)} \subseteq f(\causes{x})$. The
  validity of properties (\ref{le:aes-morphism:3}) and
  (\ref{le:aes-morphism:4} is given directly by
  \cref{le:morphism-properties}.

  Conversely, assume that $f : \ES[A] \to \ES[A']$ enjoys properties
  (\ref{le:aes-morphism:1})-(\ref{le:aes-morphism:4}). Let
  $C \in \conf{\ES[A]}$ be a configuration. Function $f$ is injective
  on $C$ since, otherwise, if there are $x, y \in C$ such that
  $f(x)=f(y)$ and $x\neq y$, we would get $x \ac y \ac x$,
  contradicting acyclicity of $\ac$ in $C$. Observe that $f(C)$ is a
  configuration. In fact, $\ac$ is acyclic in $f(C)$ since $C$ is and,
  by (\ref{le:aes-morphism:3}a), cycles are reflected by $f$. In
  addition, $f(C)$ is causally closed by (\ref{le:aes-morphism:2}),
  since $C$ is. Finally, note that $C \simeq f(C)$. In fact, for all
  $x, y \in C$, if $x \ac y$, since $\neg (y \ac x)$, by
  (\ref{le:aes-morphism:3}b), we get $f(x) \ac f(y)$. Conversely, if
  $f(x) \ac f(y)$ then $x \ac y$, by (\ref{le:aes-morphism:3}a).
\end{proof}

\praesfolding*

\begin{proof}
  Let $f : \ES[A] \to \ES[A']$ be a folding. Surjectivity of $f$ can
  be proved exactly as in Proposition~\ref{pr:pes-folding}. We show
  that properties (1)-(3) hold.

  \begin{enumerate}
    
  \item 
    We prove the contronominal, namely that if
    $\neg (y' \ac^\exists f(\causes{x}))$ then there is $y \in \ES[A]$
    such that $f(y) = y'$ and $\neg (y \ac x)$. Let
    $H = \causes{x} \in \conf{\ES[A]}$ and assume that
    $\neg (y' \ac^\exists f(H))$. Since $f$ is a morphism
    $H' = f(H) \in \Hist{f(x)}$. Observe that we can safely assume
    that $y' \not\in H'$. In fact, otherwise, since
    $\neg (y' \ac^\exists H')$, the only possibility would be
    $y'=f(x)$ and thus we could take $y=x$ since $\neg (x \ac x)$, as
    desired.
    Using the fact that $\neg (y' \ac^\exists H')$ and
    $y \notin H'$, if we let $C'= H' \cup \causes{y'}$ and
    $Y' = C' \setminus H'$
    \begin{equation}
      \label{eq:aes-folding:2}
      H' \trans{Y'} C'
    \end{equation}
    Therefore, since $f$ is a folding, there must be a transition
    $H \trans{X} C$ with $f(C) = C'$. This means that there exists
    $y \in X$ such that $f(y) = y'$ and since $H=\causes{x}$,
    necessarily $\neg (y \ac x)$, as desired.

  \item Assume that $x \notin X$, $y \notin Y$
    $\neg (x \ac^\exists X)$, $\neg (y \ac^\exists Y )$,
    $\setcons{(X \cup Y)}$ and $f(x) = f(y)$.
    Define $C = \causes{X \cup Y} \in \conf{\ES[A]}$. We show that
    $x \not\in C$. In fact, $x \notin \causes{X}$ since $x \notin X$
    and $\neg (x \ac^\exists X)$, and, for analogous reasons,
    $y \notin \causes{Y}$.
    Now, if $x=y$ we are done. Otherwise, we can prove that
    $x \not\in \causes{Y}$ and conclude. In fact, assume by
    contradiction that $x \in \causes{Y}$, i.e., $x \leq w$ for some
    $w \in Y$. Since $f(x)=f(y)$ and $x \neq y$, we deduce, by
    \cref{le:aes-morphism}(\ref{le:aes-morphism:4}), that $y \ac
    x$. Recalling $x \leq w$, by inheritance of asymmetric conflict,
    we get $y \ac^\exists Y$, contradicting the hypotheses.

    Since $x \notin C$, we have $f(x) \notin f(C)$. Moreover, if we
    let $y' = f(x)=f(y)$, we have $\neg (y' \ac^\exists
    f(C))$. Otherwise, by
    \cref{le:aes-morphism}(\ref{le:aes-morphism:3}a), we would deduce
    $x \ac^\exists X$ or $y \ac^\exists Y$, contradicting the
    hypotheses.
    
    Therefore $f(C) \trans{X'} f(C) \cup \causes{f(x)}$ with
    $X' = f\causes{f(x)} \setminus f(C)$. Since $f$ is a folding, this
    implies that $C \trans{X} D$ with $f(D) = f(C) \cup \causes{f(x)}$
    and $D \simeq f(C) \cup \causes{f(x)}$. Therefore there exists
    $z \in D$ such that $f(z)=f(x)$. Therefore
    $\neg (z \ac^\exists C)$. Hence, recalling
    $C = \causes{X} \cup \causes{Y}$, we have
    $\neg (z \ac^\exists X \cup Y)$, as desired.

  \item Take $H \in \Hist{x}$ with $\neg (H \ac^\exists X)$ and
    $H_1 \sqsubsetneq H$ such that
    $f(H_1) \cup \{f(x)\} \in \Hist{f(x)}$, hence
    $f(H_1) \trans{f(x)} f(H_1) \cup \{f(x)\}$.
    Consider $C = H_1 \cup \causes{X}$.  Since
    $H_1 \cup\{x\} \subseteq H$ and  $\neg ( H \ac^\exists X)$, we have
    $\neg (H_1 \cup\{x\} \ac^\exists \causes{X})$ and thus, by
    \cref{le:aes-morphism}(\ref{le:aes-morphism:3}a),
    $\neg (f(H_1 \cup \{x\}) \ac^\exists f(\causes{X})$. Therefore
    $f(H_1 \cup \causes{X}) = f(H_1) \cup f(\causes{X}) \trans{f(x)}
    C_1'$, and since $f$ is a folding
    $H_1 \cup \causes{X} \trans{x_1} C_1$, with $f(x_1) = f(x)$ and
    clearly (given that the transition exists, $x_1 \ac^\exists X$, as
    desired.
  \end{enumerate}

  \smallskip
  
  For the converse implication, assume that $f$ is a surjective morphism
  satisfying conditions (1)-(3). We have to prove that it is a
  folding.

  Let $C_1 \in \conf{\ES[A]}$ and assume that $f(C_1) \trans{x'} C_2'$.
  When $C_1=\emptyset$ we argue as in Proposition~\ref{pr:pes-folding}.
  Otherwise, if $C_1 \neq \emptyset$, for all $y \in C_1$ it holds
  $\causes{y} \subseteq C_1$ and thus
  $\neg (x' \ac^\exists{f(\causes{y})}$. Thus, by condition (1), there
  exists some element $x_y \in \ES[A]$ such that $f(x_y) = x'$ and
  $\neg (x_y \ac y)$.
  Note that necessarily
  $x_y \neq y$,
  
  Since $C_1$ is finite and consistent, an inductive argument based
  on condition (2), allows to derive the existence of $x$ such that
  $f(x) = x'$ and $\neg (x \ac^\exists C_1)$.
  Therefore there is a transition
  \begin{center}
    $C_1 \trans{X} C_2$
  \end{center}
  where $C_2 = C_1 \cup \causes{x}$ and $X = \causes{x} \setminus C_1$.
  
  Let $H = \hist{C_2}{x}$. By definition of history, if
  $\neg (H \ac^\exists C_2 \setminus H)$. Let
  $H_1' = \hist{f(C_1)}{x'} \setminus \{x'\}$ and let $H_1$ its
  $f$-counterimage in $C_1$. We have $H_1 \sqsubseteq H$,
  $x'=f(x) \notin f(H_1)$ and $f(H_1) \cup \{f(x)\} \in
  \Hist{f(x)}$. Then, by condition (3), there exists $x_1$ such that
  $H_1 \cup \{ x_1 \} \in \Hist{x_1}$ and
  $\neg (x_1 \ac^\exists C_2 \setminus H)$, hence
  $\neg (x_1 \ac^\exists C_1 \setminus H_1)$. This implies
  $C_1 \trans{x_1} C_1 \cup \{ x_1 \}$, as desired.
\end{proof}

\end{document}